\let\csname equation*\endcsname\relax
\let\csname endequation*\endcsname\relax
\newcommand{\utilde}[1]{\underaccent{\tilde}{#1}}
\numberwithin{equation}{section}
\numberwithin{figure}{section}
\newcommand{\mainmatter}{%
  \setcounter{footnote}{0}%
  \patchcmd{\@makefntext}{\fnsymbol}{\arabic}{}{}%
  \patchcmd{\@thefnmark}{\fnsymbol}{\arabic}{}{}%
  \def\@makefnmark{\textsuperscript{\arabic{footnote}}}%
}
\theoremstyle{plain}%
\newtheorem{theorem}{Theorem}[section]
\newtheorem{proposition}[theorem]{Proposition}%
\newtheorem{lemma}[theorem]{Lemma}
\theoremstyle{remark}%
\newtheorem{remark}[theorem]{Remark}%
\theoremstyle{definition}%
\newtheorem{definition}[theorem]{Definition}%
\newtheorem{example}[theorem]{Example}%
\begin{document}

\title[Massive single-particle bundles]{Bundle Theoretic Descriptions of Massive Single-Particle State Spaces; With a view toward Relativistic Quantum Information Theory}

\author{Heon Lee}

\address{Department of Mathematical Sciences, Seoul National University, 1, Gwanak-ro, Gwanak-gu, Seoul, 08826, Republic of Korea}
\ead{heoney93@gmail.com}
\vspace{10pt}
\begin{indented}
\item[]March 2022
\end{indented}

\begin{abstract}
Relativistic Quantum Information Theory (RQI) is a flourishing research area of physics, yet, there has been no systematic mathematical treatment of the field. In this paper, we suggest bundle theoretic descriptions of massive single-particle state spaces, which are basic building blocks of RQI. In the language of bundle theory, one can construct a vector bundle over the set of all possible motion states of a massive particle, in whose fibers the moving particle's internal quantum state as perceived by a fixed inertial observer is encoded. A link between the usual Hilbert space description is provided by a generalized induced representation construction on the $L^2$-section space of the bundle. The aim of this paper is two-fold. One is to communicate the basic ideas of RQI to mathematicians and the other is to suggest an improved formalism for single-particle state spaces that encompasses all known massive particles including those which have never been dealt with in the RQI literature. Some of the theoretical implications of the formalism will be explored at the end of the paper.
\end{abstract}

%
%
\submitto{\jpa}
%
%
%

\mainmatter

\section{Introduction}
\label{sec:1}

Special Relativity (SR) is a principle by which two inertial observers' perceptions of the laws of physics must agree. When incorporated into Quantum Mechanics (QM), this principle manifests itself as a quantum symmetry, which is expressed as a unitary representation of the spacetime symmetry group $G:= \mathbb{R}^4 \ltimes SL(2, \mathbb{C})$ on a quantum Hilbert space. Given a quantum Hilbert space possessing this symmetry, one can describe how one inertial observer's perception of the quantum system is related to another inertial observer's perception of it.

This symmetry principle gave birth to the notion "Single-particle state spaces" \cite{wigner}, which are just the irreducible unitary representation spaces of the group $G$ and classified by two numerical invariants called \textit{mass} and \textit{spin} for the massive particle case (cf. Sect.~\ref{sec:4}). So, one might say that the single-particle state spaces are the smallest possible quantum systems in which comparisons between different observers' perceptions of one reality are possible.

However, when incorporated into Quantum Information Theory (QIT) scenarios, this principle, which is indispensable for a complete physical theory, caused some unexpected phenomena. For example, \cite{peres2002} observed that the spin entropy of a massive particle with spin-1/2 is an observer-dependent quantity (it can be zero in one frame and at the same time does not vanish in another frame) and \cite{gingrich2002} showed that the same conclusion holds with the spin entanglement between two massive spin-1/2 particles, which had immediate consequences on the relativistic Einstein-Podolsky-Rosen type experiment where one deals with two particles maximally entangled in the spin degree of freedom (\cite{caban2005, caban2006, gingrich2002}). See Sect.~\ref{sec:3.1} for a brief account of these observations.

An implication that these entailed was that when one wants to use the spin of a massive spin-1/2 particle (such as electron) as an information carrier (i.e., a qubit carrier), the concepts of entropy, entanglement, and correlation of the spin may require a reassessment \cite{peres2004}, which are important informational resources in QIT.

The work \cite{peres2002} has generated an intense study (\cite{peres2002, gingrich2002, terno2003, gingrich2003, caban2003, peres2004, caban2005, caban2006, he2007, caban2008a, caban2008b, caban2012, alsing2012, caban2013, saldanha2012a, saldanha2012b, derbarba2012, avelar2013, aghaee2017, bittencourt2018, bittencourt2019, caban2018, caban2019, ondra2021, lee2022}) that is still going on today.\footnote{However, the author strongly believes that the perplexity posed in \cite{peres2002} has finally reached a definitive clarification in \cite{lee2022}. In fact, the publication of \cite{lee2022} was one of the main motivations for the conception of the theory developed in this paper. See Sect.~\ref{sec:3}.} A great portion of these works deal with the concepts of relativistic entanglement and correlation of the discrete degrees of freedom (such as spin) between relativistic particles in various settings, using various measures of correlation. However, as far as we know, a relativistically invariant definition of entanglement between the spins of a multi-particle system is still missing \cite{alsing2012}.

Given these conceptual profundities and prospects, it is a curious fact that there has never appeared a systematic mathematical treatment of this field. One reason for this might be because there is already a nice treatment of the single-particle state space in the physics literature (e.g., \cite{weinberg}). But, the recent paper \cite{lee2022} claimed that the above-mentioned issues arise because there is an inherent conceptual problem in this standard treatment. So, we feel that it is the right time to suggest a new mathematical framework for the single-particle state space that is more suitable for RQI investigations.

The problem with the standard treatment can never be seen clearly when one uses the usual language of Hilbert spaces and operators to describe single-particle state space. However, there is a bundle theoretic way to view single-particle state space, in which the stated conceptual problem becomes transparently visible and is easily resolved. This point of view was first introduced in \cite{lee2022} for the massive spin-1/2 particle case and thoroughly exploited to give a definitive mathematical clarification of the perplexity posed in \cite{peres2002}.

In this picture, there is a vector bundle responsible for the description of a massive particle with spin-1/2, which is an assembly of two-level quantum systems corresponding to possible motion states of the particle, whose fibers are arranged in a way that reflects the perception\footnote{The precise meaning of "perception" used in this paper is given in Sect.~\ref{sec:2.5}.} of a fixed inertial observer who has prepared the bundle in the first place for the description of the state of the particle. So, given a motion state $p$, the fiber over $p$ is what the fixed observer perceives as the spin quantum system of the particle in the motion state $p$. Therefore, this viewpoint also provides us with the precise mathematical description of moving qubit systems as perceived by a fixed inertial observer.

Moreover, there is a naturally defined $G$-action on this bundle, which makes it a $G$-vector bundle. An action of the element $(a, \Lambda) \in G$ on the bundle amounts to a frame change by the transformation $(a, \Lambda)$. That is, the bundle description of an inertial observer who is $(a, \Lambda)$-transformed with respect to a fixed inertial observer is obtained by applying the $(a, \Lambda)$-action on the fixed observer's bundle description.

In this sense, the vector bundle description is similar to the classical coordinate system in which one records a particle's motion in the spacetime by four numerical values and for which a definite transformation law from one observer to another is given. The vector bundle description is just an extension of it which takes the particle's internal quantum states into account. We will see that the transformation law of the bundle description (i.e., the stated $G$-action) naturally extends that of the coordinate system.

It is the objective of the present paper to develop a mathematical theory that underlies this vector bundle point of view and generalize this point of view to all known massive particles (i.e., massive particles with arbitrary spin). After completing this job, we will explore some of the theoretical implications of this viewpoint. Specifically, we will see that the Dirac equation and the Proca equations are manifestations of a fixed inertial observer's perception of the internal quantum states of massive particles with spin-1/2 and 1, respectively.

This paper is organized as follows. In Sect.~\ref{sec:2}, we explain briefly how the ideas of SR come into play in the quantum realm, giving the definition of \textit{quantum system with Lorentz symmetry}, which is the right playground for testing special relativistic considerations in QM. In Sect.~\ref{sec:3}, after defining single-particle state space, we briefly survey the perplexities posed by some of the pioneering works of RQI and summarize the main result of the paper \cite{lee2022}, in which the problem with the standard approach of RQI, which is responsible for the mentioned perplexities, is clarified and resolved for the spin-1/2 case.

In Sect.~\ref{sec:4}, we embark on the job of extending the vector bundle point of view, which was first suggested in \cite{lee2022} for the spin-1/2 case, to arbitrary spin case. Specifically, we identify the massive single-particle state spaces and classify them by two numerical invariants called \textit{mass} and \textit{spin}. All the results of this section is well-known and included here for completeness. In Sect.~\ref{sec:5}, we develop a mathematical theory that underlies the bundle theoretic framework that this paper suggests.

In Sect.~\ref{sec:6} we present the promised vector bundle point of view for massive particles with arbitrary spin and show that the same problem as in the spin-1/2 case persists in the general spin case and is resolved in a similar manner. Sect.~\ref{sec:7} explores some of the theoretical implications of the present work. Concluding remarks and future research directions are given in Sect.~\ref{sec:8}.

\section{Special Relativity in Quantum Mechanics}\label{sec:2}

In this section, we briefly summarize the idea and formalism of Relativistic Quantum Mechanics that is used in the physics literature. The main references for this section are \cite{weinberg} pp.49--55 and \cite{folland2008} pp.39--40.

\subsection{Notations}\label{sec:2.1}
In this section, we summarize some notations and elementary facts that will be used throughout the paper.

Let $x \in \mathbb{R}^4$. We write its 4 components as $x= (x^0, x^1, x^2, x^3 ) $ where $x^0$ is the time component. When we want to deal with each component, we use Greek indices such as $x^\mu$  $(\mu = 0 , 1, 2 , 3 ) $ and if we need only spatial components, we use Latin indices such as $ x^j $  $ (j = 1, 2, 3)$. When we want to separate time and spatial components, we also use the convention $ x = (t, \mathbf{x} ) $. We set $\hbar = c = 1$.

When we encounter an expression with subscripts rather than superscripts as above, it must be understood as, for example, $ x_\mu : = \sum_{\nu = 0} ^3 \eta_{\mu \nu} x^\nu $, where $\eta= \text{diag} (1, -1 , -1 , -1)$.

We also use the Einstein summation convention. So that the above becomes $x_\mu = \eta_{\mu \nu} x^\nu$ and also, for example, $p_\mu p^\mu = \sum_{\mu, \nu = 0} ^ 3 p^\mu \eta_{\mu \nu} p ^\nu $ holds.

The Minkowski metric $\eta$ on $\mathbb{R}^4$ is also denoted as
\begin{equation}\label{eq:2.1}
\langle x , y \rangle = \eta_{\mu \nu} x^\mu y^\nu = x_\mu y^\mu = x^0 y^0 - x^1 y^1 - x^2 y^2 - x^3 y^3.
\end{equation}

Let's denote the Pauli matrices as
\begin{equation}\label{eq:2.2}
\tau^0 = I, \hspace{0.1cm} \tau^1 = \begin{pmatrix} 0 & 1 \\ 1 & 0 \end{pmatrix}, \hspace{0.1cm} \tau^2 = \begin{pmatrix} 0 & -i \\ i & 0 \end{pmatrix}, \hspace{0.1cm}  \tau^3 = \begin{pmatrix} 1 & 0 \\ 0 & -1 \end{pmatrix}.
\end{equation}

Denote
\begin{subequations}\label{eq:2.3}
\begin{eqnarray}
\tilde{x} &=& \begin{pmatrix} x^0 + x^3 & x^1 - i x^2 \\ x^1 + i x^2 & x^0 - x^3 \end{pmatrix} = x^0 \tau^0 + \mathbf{x} \cdot \boldsymbol{\tau}
 \\
\utilde{x} &=& \begin{pmatrix} x_0 + x_3 & x_1 - i x_2 \\ x_1 + i x_2 & x_0 - x_3  \end{pmatrix} = x_{\mu} \tau^{\mu}.
\end{eqnarray}
\end{subequations}

Note that the maps $\tilde{ (\cdot) }, \utilde{(\cdot )} : \mathbb{R}^4 \rightarrow H_2 $ are $\mathbb{R}$-linear isomorphisms from $\mathbb{R}^4$ onto the space of $2 \times 2 $ Hermitian matrices $H_2$.\footnote{These notations were borrowed from the book \cite{bleecker} with a slight modification.}

A direct calculation would show
\begin{subequations}\label{eq:2.4}
\begin{equation}\label{eq:2.4a}
\utilde{x}\tilde{y} + \utilde{y}\tilde{x} = 2 \langle x,  y \rangle I_2 = \tilde{x} \utilde{y} + \tilde{y} \utilde{x}
\end{equation}
and hence
\begin{equation}\label{eq:2.4b}
\utilde{x}\tilde{x} = \langle x, x \rangle I_2 = \tilde{x} \utilde{x}.
\end{equation}
\end{subequations}

\subsection{Physical Symmetry}\label{sec:2.2}

Let $(\mathcal{H} , \langle \cdot,\cdot \rangle)$ be a Hilbert space associated with a quantum system. From the axioms of QM (cf. \cite{hall}), we know that two state vectors $\phi, \psi \in \mathcal{H}$ represent the same physical state if and only if $\phi = \lambda \psi$ for some $\lambda \in \mathbb{C} \setminus \{0 \}$. So, denoting this equivalence relation as $\sim$, the "quantum states" are in fact elements of $\mathbb{P} (\mathcal{H}) := (\mathcal{H} \setminus \{0\}) / \sim$, the projectivization of $\mathcal{H}$.
\begin{definition}\label{definition:2.1}
There is a well-defined map $\mathbb{P} (\mathcal{H}) \times \mathbb{P} (\mathcal{H}) \rightarrow [0,1]$ defined as
\begin{equation}\label{eq:2.5}
([u]_{\sim} , [v]_{\sim} ) \mapsto \left(\frac{ | \langle u,v \rangle | } {\|u\| \|v \|} \right)^2 ,
\end{equation}
called the \textit{transition probability between $[u]$ and $[v]$}, and denoted as $([u],[v])$ (I will omit the $\sim$ signs from now on).
\end{definition}

If a system is in a state represented by $[u] \in \mathbb{P}(\mathcal{H})$, the probability of finding it in the state represented by $[v] \in \mathbb{P}(\mathcal{H})$ (using a certain measurement which has $v$ as an eigenstate) is $([u],[v])$ (cf. \cite{weinberg}).
\begin{definition}\label{definition:2.2}
A bijective map $T : \mathbb{P} (\mathcal{H}) \rightarrow \mathbb{P} (\mathcal{H})$ that preserves transition probability (i.e., $(T[u], T[v]) = ([u],[v])$ for all $u,v \in \mathcal{H}$) is called a \textit{physical symmetry}.
\end{definition}

\begin{example}\label{example:2.3}
Let $\mathcal{H} := L^2 (\mathbb{R}^3) \otimes L^2 (\mathbb{R}^3)$. Define $\sigma : \mathcal{H} \rightarrow \mathcal{H}$ as $(\sigma f) (x_1,x_2) = f(x_2,x_1)$ for $f \in \mathcal{H}$. Then, $\sigma$ is a unitary transformation and hence induces a well-defined map $\overline{\sigma} : \mathbb{P} (\mathcal{H}) \rightarrow \mathbb{P} (\mathcal{H})$ such that the following diagram commutes.
    \begin{equation*}
    \begin{tikzcd}[baseline=(current  bounding  box.center)]
\mathcal{H}  \setminus \{ 0 \} \arrow[r, "\sigma"] \arrow[d ]
& \mathcal{H} \setminus \{ 0 \} \arrow[d]  \\
\mathbb{P} (\mathcal{H})  \arrow[r, "\overline{\sigma}"]
& \mathbb{P} (\mathcal{H}) 
\end{tikzcd}
\end{equation*}

Since $\sigma$ was unitary, $\overline{\sigma}$ is a bijection and preserves transition probability. So, $\overline{\sigma}$ is a physical symmetry.

What is the significance of this permutation symmetry? Let $ \Phi := \phi \otimes \psi \in \mathcal{H} $ be a state describing two independent particles in $\mathbb{R}^3$ whose states are represented by $\phi$ and $\psi$, respectively. Now, a simple calculation shows $\sigma (\Phi) = \psi \otimes \phi$. So, the original state $\Phi (x_1,x_2) = \phi (x_1) \psi (x_2)$ becomes the transformed state $[\sigma \Phi] (x_1,x_2) = \phi (x_2) \psi (x_1)$. So, we see that acting the physical symmetry $\overline{\sigma}$ on the state $[\Phi] \in \mathbb{P} (\mathcal{H})$ amounts to "labeling in a different manner" the two particles.
\end{example}

\begin{remark}\label{remark:2.4}
This example gives a general insight into how we should interpret physical symmetries. Given a physical symmetry $T : \mathbb{P} (\mathcal{H}) \rightarrow \mathbb{P} (\mathcal{H})$, we hypothesize two observers $A, A'$ whose observations on the same quantum Hilbert space $\mathcal{H}$ are related by $T$ so that if a system is in the state $[\psi] \in \mathbb{P} (\mathcal{H})$ in $A$'s frame, then the system is in the state $T[\psi] \in \mathbb{P} (\mathcal{H})$ in $A'$'s frame.

In this respect, a physical symmetry is not an operation that we can perform on the physical system, but, rather, it gives us information about how two observers' observations are related (\cite{weinberg}, pp.50--51).
\end{remark}

\begin{example}\label{example:2.5}
The same analysis of Example~\ref{example:2.3} can be applied to any unitary or antiunitary\footnote{$U: \mathcal{H} \rightarrow \mathcal{H}$ is called antiunitary if $U$ is conjugate linear and $\langle U \phi , U \psi \rangle = \langle \psi, \phi \rangle$ for ${}^\forall \phi, \psi \in \mathcal{H}$.} transformation $U : \mathcal{H} \rightarrow \mathcal{H}$, yielding a physical symmetry
\begin{equation*}
    \begin{tikzcd}[baseline=(current  bounding  box.center)]
\mathcal{H} \arrow[r, "U"] \arrow[d ]
& \mathcal{H} \arrow[d]  \\
\mathbb{P} (\mathcal{H})  \arrow[r, "\overline{U}"]
& \mathbb{P} (\mathcal{H}) 
\end{tikzcd}.
\end{equation*}
\end{example}

It is Wigner's famous theorem that asserts that the converse of Example~\ref{example:2.5} is also true.
\begin{theorem}[Wigner]\label{theorem:2.6}
Given a Hilbert space $\mathcal{H}$ and a physical symmetry $T : \mathbb{P} (\mathcal{H}) \rightarrow \mathbb{P} (\mathcal{H})$, there exists a map $U(T) : \mathcal{H} \rightarrow \mathcal{H}$ that is either unitary or antiunitary such that the following diagram commutes.
\begin{equation}\label{eq:2.6}
\begin{tikzcd}[baseline=(current  bounding  box.center)]
\mathcal{H} \arrow[r, "U(T)"] \arrow[d ]
& \mathcal{H} \arrow[d]  \\
\mathbb{P} (\mathcal{H})  \arrow[r, "T"]
& \mathbb{P} (\mathcal{H}) 
\end{tikzcd}
\end{equation}
\end{theorem}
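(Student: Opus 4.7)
The plan is to construct $U(T)$ by first choosing phase representatives along a distinguished orthonormal basis and then propagating those choices to all of $\mathcal{H}$, after which the two alternatives (unitary vs.\ antiunitary) will emerge from how complex scalars must transform.

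First I would fix an orthonormal basis $\{e_n\}_{n \in I}$ of $\mathcal{H}$ and, for each $n$, pick a unit vector $f_n \in \mathcal{H}$ with $[f_n] = T[e_n]$. Because $T$ preserves transition probabilities, $|\langle f_n, f_m\rangle|^2 = |\langle e_n, e_m\rangle|^2 = \delta_{nm}$, so $\{f_n\}$ is orthonormal. Using that $T$ is a bijection on $\mathbb{P}(\mathcal{H})$ and that transition probabilities detect orthogonal complements, one checks that $\{f_n\}$ is in fact a complete orthonormal basis. The $f_n$ are determined only up to a phase, and this phase ambiguity is the sole source of trouble throughout the argument.

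Next I would pin down the \emph{relative} phases of the $f_n$. For each $n \neq 1$, choose a unit representative $g_n$ of $T[\tfrac{1}{\sqrt 2}(e_1 + e_n)]$. From $([e_1 + e_n], [e_k]) = \tfrac{1}{2}\delta_{k,1} + \tfrac{1}{2}\delta_{k,n}$ one deduces that $g_n = \alpha_n f_1 + \beta_n f_n$ with $|\alpha_n| = |\beta_n| = \tfrac{1}{\sqrt 2}$. Replacing $f_n$ by $(\alpha_n/\beta_n) f_n$ and rescaling $g_n$ by a global phase lets us assume $g_n = \tfrac{1}{\sqrt 2}(f_1 + f_n)$ for every $n$. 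With this calibration fixed, I would define $U(T)e_n := f_n$ and then, for an arbitrary $\psi = \sum_n c_n e_n$ normalized with $c_1 \geq 0$, pick the unique unit representative $h = \sum_n d_n f_n$ of $T[\psi]$ with $d_1 \geq 0$ and set $U(T)\psi := h$. The transition probabilities $([\psi],[e_n]) = |c_n|^2$ and $([\psi], [\tfrac{1}{\sqrt 2}(e_1+e_n)]) = \tfrac{1}{2}|c_1 + \bar c_n|^2$ (the second equality uses $c_1 \geq 0$) together force $|d_n| = |c_n|$ and $\mathrm{Re}\, d_n = \mathrm{Re}\, c_n$, leaving $d_n \in \{c_n, \bar c_n\}$ for each $n$.

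The final and most delicate step is to show that a single choice between $c_n$ and $\bar c_n$ is made uniformly across all $n$ and all $\psi$; this is the obstacle I expect to be the hardest. The idea is to apply the same transition-probability test with the auxiliary vector $\tfrac{1}{\sqrt 2}(e_n + i e_m)$ for $n, m \geq 2$, which relates the choices made in the $n$th and $m$th coordinates and rules out mixed behavior within a single $\psi$. A continuity/additivity argument on three-term superpositions $c_1 e_1 + c_n e_n + c_m e_m$ then shows that if the identity choice occurs for some $\psi$ with two nonzero non-first coefficients, it must occur for every $\psi$, and similarly for the conjugation choice, yielding the global dichotomy. Once $U(T)$ is seen to be either $\mathbb{C}$-linear or conjugate-linear on the dense span of $\{e_n\}$ and isometric, it extends by continuity to a unitary or antiunitary operator on $\mathcal{H}$ making \eqref{eq:2.6} commute.
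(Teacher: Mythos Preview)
The paper does not actually prove this theorem: its entire proof reads ``For a proof, see \cite{weinberg} p.91.'' Your sketch is therefore strictly more than what the paper provides, and it follows the standard Bargmann-style argument that one finds in Weinberg and elsewhere, so in spirit you are aligned with the cited source.

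A couple of small points on your sketch itself. First, the formula $([\psi],[\tfrac{1}{\sqrt 2}(e_1+e_n)]) = \tfrac{1}{2}|c_1 + \bar c_n|^2$ has a stray conjugation; the transition probability is $\tfrac{1}{2}|c_1 + c_n|^2$, and the parenthetical ``uses $c_1 \geq 0$'' is not what makes that identity hold. Second, the step ``a continuity/additivity argument on three-term superpositions'' is where the real work of the proof lies, and your description of it is vague; in a full write-up you would need to show carefully that the choice (identity vs.\ conjugation) cannot depend on the vector $\psi$, which requires comparing overlapping two-dimensional subspaces. These are gaps in exposition rather than in strategy, and the overall plan is correct.
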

\begin{proof} For a proof, see \cite{weinberg} p.91.  
\end{proof}

If we denote the set of all antiunitary maps on $\mathcal{H}$ as $U^* (H)$, the set $U(\mathcal{H}) \amalg U^* (\mathcal{H})$ becomes a topological group with strong operator topology and contains $\mathbb{T}:= \{ \lambda I_{\mathcal{H}} : \lambda \in \mathbb{C}, |\lambda | = 1\}$ as a closed normal subgroup. Note that $U(\mathcal{H})$ is the identity component of this group since it is connected in the strong operator topology (cf. \cite{rudin3}).

According to Winger's theorem, the set of all physical symmetries on $\mathcal{H}$ is precisely
\begin{equation}\label{eq:2.7}
S(\mathcal{H}) = \left( U(\mathcal{H}) \amalg U^* (\mathcal{H}) \right)/\mathbb{T}.
\end{equation}

Note that the image of $U(\mathcal{H})$ in the quotient space, the projective unitary group $PU(\mathcal{H}) := U(\mathcal{H})/\mathbb{T}$, is the identity component of the quotient topological group $S(\mathcal{H})$.

\subsection{Lorentz Symmetry}
\label{sec:2.3}

SR is most elegantly described as the "Geometry of the Minkowski spacetime $(\mathbb{R}^4, \eta )$" where $\eta = \text{diag} (1, -1, -1 , -1)$. Throughout, let $M$ denote this pseudo-Riemannian manifold.

\begin{definition}\label{definition:2.7}
An \textit{inertial frame of reference} is a global orthonormal coordinate chart $\varphi = (x^\mu) : M \rightarrow \mathbb{R}^4 $ on which the coordinate representation of the metric $\eta$ is given by $\eta_{\mu \nu} dx^\mu dx^\nu$ (i.e., $(\varphi^{-1})^* \eta = \eta_{\mu \nu} dx^\mu dx^\nu$ in $\Lambda^2 (T^* \mathbb{R}^4)$).
\end{definition}

Choosing an inertial frame of reference means setting up a coordinate system $\varphi = (t,x,y,z)$ in which each point $E \in M$ (called an event) is recorded as the $4$ numerical values $\varphi(E)$.

Suppose Alice and Bob have chosen inertial frames of reference $\varphi_A = (t_A,x_A,y_A,z_A) $ and $\varphi_B = (t_B,x_B,y_B,z_B)$ respectively. Then, by definition, $\varphi_B \circ \varphi_A ^{-1} : \mathbb{R}^4 \rightarrow \mathbb{R}^4 $ is an element of the isometry group of $(\mathbb{R}^4 , \eta $), which is called the \textit{Poincar\'e group}. If we denote the linear isometry group of $(\mathbb{R}^4, \eta)$ (called the \textit{Lorentz group}) as
\begin{equation}\label{eq:2.8}
O(1,3) := \left \{ \Lambda \in GL_4 (\mathbb{R}) : \Lambda^\intercal \eta \Lambda = \eta \right\},
\end{equation}
then the Poincar\'e group is given by
\begin{equation}\label{eq:2.9}
P(4) := \mathbb{R}^4 \ltimes O(1,3)
\end{equation}
where the semidirect product is taken with respect to the natural action of $O(1,3)$ on the abelian group $\mathbb{R}^4$.\footnote{The group multiplication is thus given by $(a, \Lambda) (a' , \Lambda') = (a + \Lambda a' , \Lambda \Lambda')$. Obviously, the identity is $(0, I_4)$ and $(a, \Lambda)^{-1} = (-\Lambda^{-1} a , \Lambda^{-1})$.}

The study of the group $P(4)$ is essential in Relativistic Quantum Mechanics, which is succinctly summarized in Arthur Jaffe's note \cite{jaffe}. I have taken the following result from the note, which will be needed throughout.

\begin{theorem}\label{theorem:2.8}
Let $SO^\uparrow (1,3)$ be the connected component of $O(1,3)$. Then, $O(1,3)$ has four connected components given by
\begin{equation}\label{eq:2.10}
    O(1,3) = SO^{\uparrow} (1,3) \hspace{0.05cm} \amalg \hspace{0.05cm} \mathcal{P} \hspace{0.05cm} SO^{\uparrow} (1,3) \hspace{0.05cm} \amalg \hspace{0.05cm} \mathcal{T} \hspace{0.05cm} SO^{\uparrow} (1,3) \hspace{0.05cm} \amalg \hspace{0.05cm} \mathcal{P T} \hspace{0.05cm} SO^{\uparrow} (1,3)
\end{equation}
where $\mathcal{P} := \textup{diag} (1,-1,-1,-1)$ and $\mathcal{T} := \textup{diag} (-1,1,1,1)$ are called \textbf{parity inversion} and \textbf{time reversal}, respectively. 
\end{theorem}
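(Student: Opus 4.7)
The strategy is to exhibit two continuous $\{\pm 1\}$-valued invariants on $O(1,3)$ whose joint level sets are precisely the four claimed cosets, and then show one of these level sets (hence all four) is connected. From the defining relation $\Lambda^\intercal \eta \Lambda = \eta$, I would first take determinants to obtain $\det(\Lambda)^2 = 1$, so $\det : O(1,3) \to \{\pm 1\}$ is a locally constant function. Reading off the $(0,0)$ entry of the same matrix identity yields $(\Lambda^0{}_0)^2 = 1 + \sum_{j=1}^3 (\Lambda^j{}_0)^2 \geq 1$, so $\Lambda^0{}_0$ never vanishes on $O(1,3)$ and $\mathrm{sgn}(\Lambda^0{}_0)$ is a second continuous invariant. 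Since $I_4$, $\mathcal{P}$, $\mathcal{T}$, and $\mathcal{P}\mathcal{T}$ realize all four sign combinations of $(\det, \mathrm{sgn}(\Lambda^0{}_0))$, the pair of invariants partitions $O(1,3)$ into exactly four nonempty disjoint clopen subsets, which by inspection coincide with the four cosets in \eqref{eq:2.10}.

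Next, I would observe that left multiplication by $\mathcal{P}$, $\mathcal{T}$, or $\mathcal{P}\mathcal{T}$ is a self-homeomorphism of $O(1,3)$ permuting these cosets. Hence it suffices to show that the identity component $SO^{\uparrow}(1,3) = \{\Lambda \in O(1,3) : \det(\Lambda) = 1,\ \Lambda^0{}_0 \geq 1\}$ is path-connected; connectedness of the other three cosets then follows automatically.

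The main obstacle is this final connectedness claim, and my plan is to produce an explicit path to $I_4$ via a boost-rotation decomposition. Given $\Lambda \in SO^{\uparrow}(1,3)$, writing its action on the timelike unit vector $e_0$ as $\Lambda e_0 = (\cosh\chi)\, e_0 + (\sinh\chi)\, \hat{n}$ for some spatial unit vector $\hat{n}$ and rapidity $\chi \geq 0$, let $B_{\hat{n}}(\chi) \in SO^{\uparrow}(1,3)$ denote the standard boost in direction $\hat{n}$ with rapidity $\chi$. Then $B_{\hat{n}}(\chi)^{-1}\Lambda$ fixes $e_0$ and therefore lies in the block-diagonal copy of $SO(3)$ inside $SO^{\uparrow}(1,3)$. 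Since $SO(3)$ is path-connected (e.g., via the exponential parametrization of rotations) and the parameters $\chi$ and the embedding of $\hat{n}$ can be continuously scaled to $0$, this produces a continuous path in $SO^{\uparrow}(1,3)$ from $\Lambda$ to $I_4$. An equally valid alternative, more in tune with the later sections of the paper, would be to invoke the continuous surjection $SL(2,\mathbb{C}) \to SO^{\uparrow}(1,3)$ together with the connectedness of $SL(2,\mathbb{C})$ (via polar decomposition $A = UP$ with $U \in SU(2)$ and $P$ positive Hermitian of determinant $1$); however, since this covering has not yet been constructed at the present point in the text, the boost-rotation argument is preferable as a self-contained proof.
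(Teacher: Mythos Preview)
Your proof is correct and complete. However, the paper does not actually prove this theorem: it is stated as a result ``taken from'' Jaffe's note \cite{jaffe} and no argument is given in the text. So there is no in-paper proof to compare against; your approach is the standard one (the pair of invariants $\det\Lambda$ and $\mathrm{sgn}(\Lambda^{0}{}_{0})$, followed by connectedness of $SO^{\uparrow}(1,3)$ via boost--rotation decomposition), and it supplies exactly what the paper omits.
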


Usually, one restricts attention to the connected component $\mathbb{R}^4 \ltimes SO^\uparrow (1,3)$ of $P(4)$ (there are physical reasons for this. Cf., \cite{weinberg} p.75) and we will follow this practice in this paper.

So, the two observers' records of an arbitrary event $E \in M$, namely, $\varphi_A (E) = (t_A ,x_A ,y_A ,z_A )$ and $\varphi_B (E) = (t_B,x_B,y_B,z_B)$, are related by an element $ (a, \Lambda):= \varphi_B \circ \varphi_A ^{-1} \in \mathbb{R}^4 \ltimes SO^\uparrow (1,3)$ such that
\begin{equation}\label{eq:2.11}
\begin{pmatrix}t_B \\ x_B \\ y_B \\ z_B \end{pmatrix} = a + \Lambda \begin{pmatrix} t_A \\ x_A \\ y_A \\ z_A \end{pmatrix}
\end{equation}
which may be expressed as $x_B ^\mu = a^\mu + \Lambda^\mu _\nu x_A ^\nu$ (cf. Sect.~\ref{sec:2.1}).

The postulates of SR require that every physical law and entity has an invariant meaning under this kind of coordinate transformation. Mathematically, this just means that physical entities should be objects living in the manifold $M$ and physical laws should be equations defined on the manifold $M$ which are independent of the choice of inertial frames of reference. So that, for example, the vacuum Maxwell's equations of Electrodynamics can be written as
\begin{eqnarray*}
dF = 0 \\
d \star F = 0
\end{eqnarray*}
for some $2$-form $F$ (called the \textit{electromagnetic tensor}) on the manifold $M$ where $d$ is the differential on $M$ and $\star$ is the Hodge star operator on $M$ (cf. \cite{bleecker}).

But, this requirement is usually expressed in coordinate representations in physics textbooks. For example, suppose a tangent vector $v \in TM$ has component representations $(v_A ^\mu)$ and $(v_B ^{ \mu})$ in Alice's frame and Bob's frame, respectively. Then, the transformation law Eq.~(\ref{eq:2.11}) gives the relation
\begin{equation}\label{eq:2.12}
v_B ^{\mu} = \Lambda_\nu ^\mu v_A ^\nu.
\end{equation}

Conversely, if any two inertial frames, $(x_A ^\mu)$ and $(x_B ^{\mu})$, connected by a transformation $(a,\Lambda) \in P(4)$ as in Eq.~(\ref{eq:2.11}) observed a vector quantity (e.g. a velocity of a particle) as $(v_A ^\mu)$ and $(v_B ^{\mu})$ in their respective frames and found the relation $v_B ^{\mu} = \Lambda_\nu ^\mu v_A ^\nu$, then they would conclude that the vector quantities are manifestations of an object living in $TM$ in their respective coordinate systems, i.e., it has meaning independent of the choice of inertial frames.

Any tensorial quantity transforming in this fashion from one inertial frame to another (e.g., $F_B ^{\mu \nu} = \Lambda_{\alpha} ^{\mu} \Lambda_{\beta} ^{\nu} F_A ^{\alpha \beta}$ for the electromagnetic tensor) is called a \textit{Lorentz covariant} tensor and can be regarded as elements of a tensor bundle on $M$. This is the usual way that physicists take to express the fact that a quantity has an invariant meaning in all inertial frames of reference.

How does this principle affect the description of QM? In special relativistic scenarios, one is interested in two inertial observers' perceptions of one physical reality. Accordingly, consider two inertial observers Alice and Bob, whose classical observations are related by Eq.~(\ref{eq:2.11}), who are now interested in the investigation of a quantum system described by the states in the Hilbert space $\mathcal{H}$. We naturally expect that there is a certain transformation $\overline{U}(a, \Lambda)$ on $\mathbb{P} (\mathcal{H})$ which depends on the Lorentz transformation $(a,\Lambda) := \varphi_B \circ \varphi_A ^{-1} $ such that whenever Alice perceives a quantum state $[\psi] \in \mathbb{P}(\mathcal{H})$, Bob would perceive it as $\overline{U} (a, \Lambda) [\psi] \in \mathbb{P} (\mathcal{H})$.

Recalling Definition~\ref{definition:2.1}, the principle of SR naturally requires that the two inertial observers should obtain the same transition probability. In view of Definition~\ref{definition:2.2}, this means that $\overline{U} (a, \Lambda)$ is a physical symmetry. I.e., \textit{the Lorentz transformations act as physical symmetries on a quantum system}.

By Theorem~\ref{theorem:2.6}, this implies the existence of a map $\overline{U}:\mathbb{R}^4 \ltimes SO^{\uparrow} (1,3) \rightarrow S(\mathcal{H})$ (cf. Eq.~(\ref{eq:2.7})). A moment's thought suggests that it is natural to require that $\overline{U}$ be a continuous group homomorphism.\footnote{E.g., the change of reference frame from Alice to Bob and then again to Alice should be the identity transformation and that "nearly same" inertial reference frames should observe "nearly same" quantum states, etc. See \cite{weinberg} pp.50--52 for details.} Therefore, the range of $\overline{U}$ is entirely contained in the identity component $PU(\mathcal{H})$ since the group $\mathbb{R}^4 \ltimes SO^{\uparrow} (1,3)$ is connected. In short, the principle of SR gives us a projective representation $\overline{U}:\mathbb{R}^4 \ltimes SO^{\uparrow} (1,3) \rightarrow PU(\mathcal{H})$.

We can ask a question at this point. As in Wigner's theorem, can we lift the representation to $\mathbb{R}^4 \ltimes SO^{\uparrow} (1,3) \rightarrow U(\mathcal{H})$ so that the following diagram holds?
\begin{equation*}
    \begin{tikzcd}[baseline=(current  bounding  box.center)]
{}
& U(\mathcal{H}) \arrow[d]  \\
\mathbb{R}^4 \ltimes SO^{\uparrow} (1,3)  \arrow[ru] \arrow[r]
& PU (\mathcal{H}) 
\end{tikzcd}
\end{equation*}

The answer is "No" in general. But, Bargman's theorem (\cite{folland2008}, p.40) asserts that passing to the universal cover $\mathbb{R}^4 \ltimes SL(2,\mathbb{C})$ of $\mathbb{R}^4 \ltimes SO^{\uparrow} (1,3)$, we always get a unitary representation. To see this, we must first identify a covering map $SL(2, \mathbb{C}) \rightarrow SO^\uparrow(1,3)$.

Given $\Lambda \in SL(2, \mathbb{C})$, let $\kappa (\Lambda) \in SO^\uparrow(1,3)$ be the matrix which acts on a four-vector $x \in \mathbb{R}^4$ as
\begin{subequations}\label{eq:2.13}
\begin{eqnarray}
\left(\kappa (\Lambda) x \right)^\sim &=& \Lambda \tilde{x} \Lambda^\dagger  \\
\left( \kappa (\Lambda) x \right)_\sim &=& \Lambda^{\dagger -1} \utilde{x} \Lambda^{-1}
\end{eqnarray}
\end{subequations}
(cf. Eq.~(\ref{eq:2.3})) where the RHS are ordinary products of matrices in $M_2 (\mathbb{C})$ and $(\cdot)^\dagger$ denotes the Hermitian conjugation of a complex matrix. Then, the map
\begin{equation}\label{eq:2.14}
\kappa : SL ( 2, \mathbb{C} ) \rightarrow SO^{\uparrow} (1,3)
\end{equation}
is a double covering homomorphism (cf. \cite{folland2008}) and since $SL(2, \mathbb{C})$ is simply connected, it is a universal covering homomorphism. Via $\kappa$, we obtain an action of $SL(2, \mathbb{C})$ on $\mathbb{R}^4$. We will often suppress $\kappa$ when we denote an action of an element $SL(2, \mathbb{C})$ on an element of $\mathbb{R}^4$. (e.g., $\kappa(\Lambda)x = \Lambda x$ and so on.) We can form a semi-direct product $\mathbb{R}^4 \ltimes SL(2, \mathbb{C})$ using this action. The map $\mathbb{R}^4 \ltimes SL(2, \mathbb{C}) \rightarrow \mathbb{R}^4 \ltimes SO^\uparrow(1, 3)$ defined by $(a, \Lambda) \mapsto (a, \kappa (\Lambda))$ is also a universal covering homomorphism. The following is a consequence of Bargman's theorem.
\begin{theorem}\label{theorem:2.9}
Given a projective unitary representation $\mathbb{R}^4 \ltimes SO^{\uparrow} (1,3)  \rightarrow PU(\mathcal{H})$ which is continuous with respect to the quotient strong operator topology on $PU(\mathcal{H})$, there is a (continuous) unitary representation $\mathbb{R}^4 \ltimes SL(2,\mathbb{C}) \rightarrow U(\mathcal{H})$ such that the following diagram commutes.
\begin{equation*}
    \begin{tikzcd}[baseline=(current  bounding  box.center)]
\mathbb{R}^4 \ltimes SL(2,\mathbb{C}) \arrow[r] \arrow{d}[swap]{1 \times \kappa}
& U(\mathcal{H}) \arrow[d]  \\
\mathbb{R}^4 \ltimes SO^{\uparrow} (1,3)  \arrow[r]
& PU (\mathcal{H}) 
\end{tikzcd}
\end{equation*}
\end{theorem}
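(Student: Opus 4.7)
The plan is to transport the problem to the universal cover $\mathbb{R}^4 \ltimes SL(2,\mathbb{C})$ by pulling back the given projective representation along the covering homomorphism $1 \times \kappa$, and then to invoke Bargmann's theorem to lift this pulled-back projective representation to an honest unitary representation; commutativity of the diagram in the statement will follow automatically from the construction.

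In detail, denote the given projective representation by $\overline{U} : \mathbb{R}^4 \ltimes SO^\uparrow(1,3) \to PU(\mathcal{H})$. Since $1 \times \kappa$ is a continuous group homomorphism, the composition $\overline{U}' := \overline{U} \circ (1 \times \kappa) : \mathbb{R}^4 \ltimes SL(2,\mathbb{C}) \to PU(\mathcal{H})$ is itself a continuous projective unitary representation. Any honest unitary lift $U$ of $\overline{U}'$ (meaning $U$ followed by the quotient $U(\mathcal{H}) \to PU(\mathcal{H})$ equals $\overline{U}'$) will make the displayed diagram commute, because both paths from $\mathbb{R}^4 \ltimes SL(2,\mathbb{C})$ to $PU(\mathcal{H})$ then agree with $\overline{U}'$.

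The remaining task is to produce such a lift, which is exactly the content of Bargmann's theorem as cited from \cite{folland2008}. That theorem provides a unitary lift whenever the source group is connected and simply connected, with vanishing second real Lie algebra cohomology $H^2(\mathfrak{g}, \mathbb{R}) = 0$. Simple connectedness of $\mathbb{R}^4 \ltimes SL(2,\mathbb{C})$ is immediate: topologically it is the product $\mathbb{R}^4 \times SL(2,\mathbb{C})$, and polar decomposition gives $SL(2,\mathbb{C}) \cong \mathbb{R}^3 \times SU(2) \cong \mathbb{R}^3 \times S^3$, which is simply connected. The main obstacle is therefore the cohomological input $H^2(\mathfrak{g}, \mathbb{R}) = 0$ for the Poincar\'e Lie algebra $\mathfrak{g} = \mathbb{R}^4 \rtimes \mathfrak{sl}(2,\mathbb{C})$, a classical computation originally carried out by Bargmann that depends on a careful analysis of the structure constants; this is in sharp contrast to the Galilean Lie algebra, whose nontrivial $H^2$ is responsible for the mass central extension. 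Since this vanishing is already packaged into the cited form of the theorem that applies to our setting, we may simply invoke it to conclude.
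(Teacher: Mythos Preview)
Your proposal is correct and follows exactly the approach the paper takes: the paper does not give a proof at all, but simply states that the theorem is a consequence of Bargmann's theorem as cited from \cite{folland2008}. Your write-up is actually more detailed than the paper's treatment, since you explicitly spell out the pullback along $1\times\kappa$, the simple connectedness of $\mathbb{R}^4 \ltimes SL(2,\mathbb{C})$, and the role of the vanishing of $H^2(\mathfrak{g},\mathbb{R})$, whereas the paper leaves all of this implicit in the citation.
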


Accordingly, we make the following definition.

\begin{definition}\label{definition:2.10}
A pair $(U, \mathcal{H})$ is called a \textit{quantum system with Lorentz symmetry} if $\mathcal{H}$ is a Hilbert space and $U: \mathbb{R}^4 \ltimes SL(2, \mathbb{C}) \rightarrow U(\mathcal{H})$ is a unitary representation. 
\end{definition}

Quantum systems possessing Lorentz symmetry are the right playground for testing relativistic considerations in QM.

\begin{remark}\label{remark:2.11}
Let $(U,\mathcal{H})$ be a quantum system with Lorentz symmetry. Here comes how we should interpret relativistic scenarios using this system. Suppose two inertial observers Alice and Bob are related by a Lorentz transformation $(a, \Lambda) \in \mathbb{R}^4 \ltimes SO^\uparrow (1,3)$ as in Eq.~(\ref{eq:2.11}) and the two observers' perceptions of one quantum state are given by $\psi_A \in \mathcal{H}$ and $\psi_B \in \mathcal{H}$, respectively. Then, by the above discussions, we require
\begin{equation}\label{eq:2.15}
[\psi_B] = [ U(a, \Lambda') \psi_A]
\end{equation}
where $\Lambda' \in SL(2, \mathbb{C})$ is a lift of $\Lambda$ via the covering map $\kappa$.

This transformation formula is the quantum analogue of the classical transformation formula Eq.~(\ref{eq:2.11}). If $U(-I_2) = \lambda I_\mathcal{H}$ for some scalar $\lambda \in \mathbb{C}$ so that it descends to a projective representation as in Theorem~\ref{theorem:2.9}, then this transformation does not depend on the choice of the lift $\Lambda'$. We will see that this is true in all the cases that we will be looking at (cf. Theorems~\ref{theorem:4.10}).

We should always have in mind the two rules Eqs.~(\ref{eq:2.11}) and (\ref{eq:2.15}) when dealing with a relativistic scenario in which more than one observer is involved.
\end{remark}

\subsection{A Standard Choice of Lorentz Boostings}\label{sec:2.4}

Fix an inertial frame of reference (call this frame Alice) and consider a massive particle moving with respect to the frame. If the particle has some internal states (such as spin), one may want to know how it is observed in a "particle-rest frame". But, there is an ambiguity in this concept. Namely, if one fixes a particle-rest frame, then any other frame transformed by a rotation (that is, an element in $SU(2)$) from this frame is also a particle-rest frame. So, to speak of internal states of a moving particle, it would be convenient for Alice to set up a choice of Lorentz transformation for each possible motion state of the particle.

We make one standard choice in this section. This will be important in later discussions and would serve as a good opportunity to get familiar with the notations of Sect.~\ref{sec:2.1} and Eq.~(\ref{eq:2.13}).

Let $m>0$ be the mass of the particle and denote $p_m = (m, 0, 0, 0)$. Then, the set of possible momentums that the particle can assume is given by

\begin{equation}\label{eq:2.16}
X := \{ p \in \mathbb{R}^4 : p_\mu p^\mu = m^2 , \hspace{0.1cm} p^0 >0 \} \quad \text{(cf. \cite{zee})}.
\end{equation}

For each $p \in X$, $\utilde{p}\tilde{p} = m^2 I = \tilde{p} \utilde{p}$ by Eq.~(\ref{eq:2.4b}). Also, these two matrices are positive matrices with the square roots given by
\begin{subequations}\label{eq:2.17}
\begin{eqnarray}
\sqrt{\utilde{p} } = \frac{1}{\sqrt{ 2 (m + p_0 ) }} \left( \utilde{p} + m I \right) \\ 
\sqrt{\tilde{p} } = \frac{1}{\sqrt{ 2 (m + p_0 ) }} \left( \tilde{p} + m I \right).
\end{eqnarray}
\end{subequations}

It is easy to see that $\sqrt{\utilde{p}} \sqrt{\tilde{p}} = mI = \sqrt{\tilde{p}}\sqrt{\utilde{p}}$ holds, which may be expressed as
\begin{equation}\label{eq:2.18}
\left( \sqrt{\frac{\utilde{p}}{m}} \right)^{-1} = \left( \sqrt{\frac{\tilde{p}}{m}} \right).
\end{equation}

If we observe $\tilde{p} = m \sqrt{\frac{\tilde{p}}{m}} \sqrt{\frac{\tilde{p}}{m}} = \sqrt{\frac{\tilde{p}}{m}} (p_m)^\sim \sqrt{\frac{\tilde{p}}{m}} $, we see that for the matrix
\begin{equation}\label{eq:2.19}
L (p) := \sqrt{\frac{\tilde{p}}{m}} \in SL(2, \mathbb{C}),
\end{equation}
we have $\kappa \big(L(p) \big) p_m = p$ by Eq.~(\ref{eq:2.13}). $L(p)$ is called the \textit{standard boosting sending $p_m$ to $p$}.

\begin{remark}\label{remark:2.12}
If $\hat{\mathbf{p}} = (\sin \theta \cos \phi, \sin \theta  \sin \phi , \cos \theta)$ with $0 \leq \phi \leq 2\pi$ and $0 \leq \theta \leq \pi$, then

\begin{equation*}
R(\hat{\mathbf{p}}) := \begin{pmatrix} e^{- i \frac{\phi}{2}} & 0 \\ 0 & e^{i \frac{\phi}{2}} \end{pmatrix} \begin{pmatrix} \cos \frac{\theta}{2} & - \sin \frac{\theta}{2} \\ \sin \frac{\theta}{2} & \cos \frac{\theta}{2} \end{pmatrix} \in SU(2)
\end{equation*}
is a rotation which takes $\hat{z}$ to $\hat{\mathbf{p}}$ and
\begin{equation*}
B_m ( |\mathbf{p}|) := \begin{pmatrix} \sqrt{  \frac{p^0 + | \mathbf{p}|}{m} }    & 0 \\ 0 & \sqrt{  \frac{p^0 - | \mathbf{p}|}{m} } \end{pmatrix} \in SL(2, \mathbb{C})
\end{equation*}
is the boosting along the $z$-axis which takes $p_m$ to $(p^0 , 0, 0 , | \mathbf{p}|)$ (cf. Eq.~(\ref{eq:2.13})).

One can easily see, using Eq.~(\ref{eq:2.13}), that
\begin{equation*}
\frac{\tilde{p}}{m}  = R( \hat{\mathbf{p}} ) B_m ( | \mathbf{p}|)^2 R(\hat{\mathbf{p}} )^{-1}
\end{equation*}
holds. Therefore,
\begin{equation}\label{eq:2.20}
L (p) = \sqrt{\frac{\tilde{p}}{m}} = R( \hat{\mathbf{p}} ) B_m ( | \mathbf{p}|) R(\hat{\mathbf{p}} )^{-1},
\end{equation}
which implies that the matrix Eq.~(\ref{eq:2.19}) is equal to the standard boosting that maps $p_m$ to $p$ used in the physics literature (cf. \cite{weinberg}, p.68).
\end{remark}

\subsection{Relativistic Perception}\label{sec:2.5}

In this section, we introduce the concept of "relativistic perception", which is the central topic of this paper. Let an inertial frame of reference be given (cf. Definition~\ref{definition:2.7}). Then, \textit{any tensorial quantity represented in the coordinate system of the frame that transforms covariantly under Lorentz transformations} is called \textit{"relativistic perception" of the frame}. Perhaps the best way to illustrate this concept is by giving examples and nonexamples.

Fix an inertial frame of reference (call this frame Alice) and suppose we are given a point particle with mass $m > 0$ whose relativistic momentum and angular momentum are represented as $p^\mu$ and $j_{\alpha \beta}$\footnote{There is a systematic way to promote non-relativistic, frame-dependent dynamical quantities (e.g. angular momentum) to corresponding relativistic concepts that have meaning in every inertial frame (see the discussions right below Eq.~(\ref{eq:2.11})). For example, the non-relativistic momentum $\mathbf{p} := m \mathbf{v}$ is an observer-dependent quantity, which is promoted to the four-momentum $p = (m \gamma , m \gamma \mathbf{v} )$ where $ \gamma = \frac{1}{ \sqrt{1 - | \mathbf{v}|^2}} $. Likewise, there is an antisymmetric 2-tensor $j_{\alpha \beta}$ called the \textit{relativistic angular momentum} which is promoted from the ordinary angular momentum 2-tensor $j_{kl}$. Usually, one uses a three-vector $\boldsymbol{j}$ defined by $j^1 = j_{23} $, $j^2 = j_{31}$, and $j^3 = j_{12}$ as the angular momentum three-vector. For more details, see \cite{zee} and \cite{anderson}.} in the frame, respectively. The two quantities are relativistic perceptions of Alice.

Now, suppose that the particle has non-zero spin (for the concept of spin in classical SR, see \cite{anderson}). Since spin is an internal angular momentum of the particle, we come to consider a particle-rest frame (Bob) in which the spin is represented as a three-vector $(0, \mathbf{s})$. Is the spin of the particle a relativistic perception of Alice? No, apart from the case when $p = (m,0,0,0)$. Rather, it is a relativistic perception of Bob.

\begin{example}[Pauli-Lubansky four-vector and Newton-Winger Spin]\label{example:2.13}
Naturally, we come to wonder how the spin of a particle is perceived by arbitrary inertial frames of reference with respect to which the particle might be moving (such as Alice). The quantity should be a Lorentz covariant vector (cf. Eq.~(\ref{eq:2.12})) and become a three-vector in any particle-rest frame. The \textit{Pauli-Lubansky four-vector} turns out to be the right object (see \cite{anderson} for an extended discussion of this vector). In Alice's frame, it is defined as
\begin{equation}\label{eq:2.21}
w^\mu = \frac{1}{2} \varepsilon^{\nu \alpha \beta \mu} p_{\nu} j_{\alpha \beta}
\end{equation}
where $\varepsilon^{\nu \alpha \beta \mu} $ is an alternating $4$-tensor which is $1$ when $(\nu, \alpha, \beta, \mu)$ is a cyclic permutation of $(0, 1, 2, 3)$. One can show that this is a Lorentz covariant vector (i.e., relativistic perception of Alice),
\begin{equation}\label{eq:2.22}
p_\mu w^\mu = 0,
\end{equation}
and when $p= (m,0,0,0)$ (that is, in a particle-rest frame),
\begin{equation}\label{eq:2.23}
w = (0, m \boldsymbol{j}),
\end{equation}
as it should be. So, using the concept of "relativistic perception" introduced in this section, one can say that \textit{the Pauli-Lubansky four-vector of a particle is the internal angular momentum (i.e., the spin) of the particle perceived by an observer who perceives the spin-carrying particle as moving with momentum $p$.}

Using the choice of boostings obtained in Sect.~\ref{sec:2.4}, we can obtain another important object, which will be relevant in later discussions.

Observe that for all $p \in X$,
\begin{equation}\label{eq:2.24}
L (p)^{-1} w = (0, m \mathbf{s})
\end{equation}
holds (see Eqs.~(37)--(38) of \cite{lee2022}), where
\begin{equation}\label{eq:2.25}
\boldsymbol{s} = \frac{1}{m} \left( \mathbf{w} - \frac{ w^0 \mathbf{p}}{m +p^0} \right)
\end{equation}
is called the \textit{Newton-Wigner spin three-vector}. The Newton-Wigner spin is just the Pauli-Lubansky four-vector perceived by an $L (p)^{-1}$-transformed inertial observer, with respect to whom the particle is at rest consequently.

Note that while the Pauli-Lubansky vector transforms covariantly under Lorentz transformations, the Newton-Wigner spin does not. Therefore, the Pauli-Lubansky vector is relativistic perception whereas the Newton-Wigner spin is not. The relation between these two vector descriptions of the internal angular momentum of a particle (cf. Eq.~(\ref{eq:2.23})) will be a recurrent theme throughout the paper (cf. Sect.~\ref{sec:3.3.2}).
\end{example}

\subsubsection{A scheme by which inertial observers can obtain their relativistic perception of the spin of a moving particle}\label{sec:2.5.1}

\hfill

In a relativistic scenario where several inertial observers are interested in the spin of a particle, it is desirable for each observer to record the spin information in the form of relativistic perception, i.e., as the Pauli-Lubansky four-vector in each frame since it is the information that has meaning in every frame (see the discussion surrounding Eq.~(\ref{eq:2.12})).

So, let's consider an inertial observer Bob who is trying to calculate the Pauli-Lubansky four-vector $w$ of a moving particle. Classically, Bob could, in principle, measure the momentum $p$ of the particle, conceive of a frame change to a particle-rest frame using the transformation $L(p)$, measure the spin three-vector $\mathbf{s}$ in that frame using spin-magnetic field interaction (cf. Ch.~7, pp.248--253 of \cite{anderson}), which is precisely the Newton-Wigenr spin of the particle, and recover the Pauli-Lubansky four-vector $w$ by using Eq.~(\ref{eq:2.24}).

\begin{remark}[The quantum particle case]\label{remark:2.14}

However, if the particle under investigation is a quantum particle, the quest of determining the Pauli-Lubansky four-vector $w$ of the particle becomes subtle. Since the motion state of a quantum particle (see Sect.~\ref{sec:3} for the definition) is given by a superposition of possible motion states, there is no such thing as a "particle-rest frame" in which the value of $\mathbf{s}$ gets meaningful, from which one can calculate $w$. In fact, there is no consensus among researchers about the precise definition of the relativistic spin operator in Relativistic Quantum Mechanics and consequently on how to measure the Pauli-Lubansky four-vector of a moving quantum particle (see \cite{bauke2014, bauke2014b, terno2016, DERIGLAZOV2016} on this issue).

One solution to this subtlety is to consider the wave functions representing the states of the quantum particle as fields of spin states corresponding to all possible motion states\footnote{This is where the language of bundle theory naturally comes in.} apply the above classical scheme to each spin-motion state to make it contain information of the Pauli-Lubansky four-vector, i.e., relativistic perception. This will expose a critical flaw of the standard Hilbert space description of single-particle state spaces and suggest a way to fix it. These statements will be illustrated in Sects.~\ref{sec:3.2}--\ref{sec:3.3} for massive particles with spin-1/2.

Arranging the internal quantum states in this way not only has the conceptual advantage as explained in this subsection (i.e., it is a faithful reflection of the reality demanded by the principle of SR), but also has observable consequences as we will see in Sect.~\ref{sec:7}.
\end{remark}

\section{The RQI of massive particles with spin-1/2}\label{sec:3}

In this section, we define the single-particle state space and use one particular example of them (namely, the particle with mass $m>0$ and spin-1/2) to briefly survey the fundamental perplexities of RQI first observed by the two pioneering papers \cite{peres2002, gingrich2002} and how these perplexities have finally reached a definitive clarification in the recent work \cite{lee2022}. Those who are interested in other aspects of RQI as well are referred to \cite{peres2004, mann2012} and references therein.

\begin{definition}\label{definition:3.1}
The irreducible unitary representation spaces of the group $G:=\mathbb{R}^4 \ltimes SL(2,\mathbb{C})$ are called \textit{single-particle state spaces}.
\end{definition}

That is, single-particle state spaces are \textit{the smallest possible quantum systems which possess Lorentz symmetry}. This definition is due to Wigner \cite{wigner}. We will see in Sect.~\ref{sec:6} that the Hilbert space
\begin{subequations}\label{eq:3.1}
\begin{equation}\label{eq:3.1a}
\mathcal{H}_{L,1/2} \cong L^2 \left(\mathbb{R}^3 , \frac{d^3 \mathbf{p}}{\sqrt{m^2 + |\mathbf{p}|^2}} \right) \otimes \mathbb{C}^{2},
\end{equation}
on which the representation $U_{L, 1/2}$ acts as
\begin{equation}\label{eq:3.1b}
[U_{L,1/2} (a, \Lambda) \psi] (p)= e^{-i \langle p , a \rangle}  W_L(\Lambda, \Lambda^{-1} p ) \psi (\Lambda^{-1} p)
\end{equation}
\end{subequations}
upon identifying $\mathbf{p} \cong ( \sqrt{m^2 + | \mathbf{p}|^2 } , \mathbf{p}) = p \in \mathbb{R}^4$, is a single-particle state space, which can be called the single-particle state space for particles of mass $m>0$ and spin-1/2. (Here $W_L ( \Lambda , p ) := L(\Lambda p)^{-1} \Lambda L(p) \in SU(2)$ for $\Lambda \in SL(2, \mathbb{C})$ and $p \in X$ is the \textit{Wigner rotation matrix}.) Many elementary particles including electron and quarks, and also very important non-elementary particles such as proton and neutron can be described by this representation. This case has been the most intensely studied class of particles in the context of RQI and Eq.~(\ref{eq:3.1}) has been the standard representation that has been used to describe the particles of this type.\footnote{Note that this representation is different from the one used in the textbook \cite{weinberg} by the normalization factor $\sqrt{ \big( \Lambda^{-1} p \big)^0 / p^0 }$. This is because this factor has been subsumed into the measure $\frac{d^3 \mathbf{p}}{\sqrt{m^2 + |\mathbf{p}|^2}}$ in the definition of the $L^2$-space in Eq.~(\ref{eq:3.1a}).} Throughout this paper, except in Sect.~\ref{sec:5}, $G$ will always denote the group $\mathbb{R}^4 \ltimes SL(2, \mathbb{C})$.

\subsection{The pioneering works}\label{sec:3.1}

Here comes a brief mathematical analysis of the two pioneering works \cite{peres2002} and \cite{gingrich2002}. Throughout, the identification $\mathbb{R}^3 \cong X$ (cf. Eq.~(\ref{eq:2.16})) given by $\mathbf{p} \mapsto (\sqrt{m^2 + |\mathbf{p}|^2 |}, \mathbf{p})$ will be assumed and we will freely identify $\mathbf{p} \in \mathbb{R}^3$ with $p = (\sqrt{m^2 + |\mathbf{p}|^2 |}, \mathbf{p}) \in X$ (i.e., $p^0 = \sqrt{m^2 + |\mathbf{p}|^2 }$). In this identification, the measure $\frac{d^3 \mathbf{p}}{\sqrt{m^2 + |\mathbf{p}|^2}}$ will be denoted as $d \mu (p)$.

\paragraph{The work of Peres, Scudo, and Terno}

\hfill

In the seminal paper \cite{peres2002}, the authors considered a massive spin-1/2 single-particle state space Eq.~(\ref{eq:3.1}). So, an element $\psi \in \mathcal{H}_{L,1/2}$ can be written as
\begin{equation*}
\psi = \begin{pmatrix} a_1  \\ a_2  \end{pmatrix}, \hspace{0.2cm} \Big(a_1, a_2 \in L^2 (\mathbb{R}^3, \mu) \Big).
\end{equation*}

On this space, they formed the \textit{density matrix corresponding to a unit vector $\psi$} (i.e., the projection onto the one dimensional space spanned by $\psi$)
\begin{equation*}
\rho = |\psi \rangle \langle \psi | \in L^2( \mathbb{R}^3 \times \mathbb{R}^3, \mu \times \mu ) \otimes M_2 (\mathbb{C}),
\end{equation*}
and defined the \textit{reduced density matrix for spin of $\psi$} by taking the partial trace (cf. Ch.~19 of \cite{hall}) with respect to the $L^2 (\mathbb{R}^3 , \mu)$-component of the Hilbert space, i.e.,
\begin{equation}\label{eq:3.2}
\tau := \text{Tr}_{L^2} ( \rho) \in M_2 (\mathbb{C}).
\end{equation}

Since $\mathcal{H}_{L, 1/2}\cong L^2 (\mathbb{R}^3 , \mu) \otimes \mathbb{C}^2$ is a tensor product system, the $\mathbb{C}^2$-factor of which contains the spin information of the single-particle states, the reduced $2 \times 2$-matrix $\tau$ is supposed to give the "spin information stored in $\mathbb{C}^2$ of the single-particle state $\psi$ independent of the momentum variable $L^2 (\mathbb{R}^3 , \mu)$" according to the usual treatment of composite systems in QIT (cf. \cite{holevo}). Naturally, the authors of the paper defined the \textit{spin entropy of the state $\psi$} as
\begin{equation}\label{eq:3.3}
S(\tau) = - \text{Tr} (\tau \log \tau ),
\end{equation}
which is (supposedly) a quantitative measure of the spin information contained in the state $\psi$.

Consider a scenario where two parties communicate with each other by using massive particles with spin-1/2 as qubit carriers. One party encodes one bit of information in the spin of a massive spin-1/2 particle and transmits the particle to another party. The receiving party is only interested in the spin information of the particle independent of its momentum. So, the reduced matrix Eq.~(\ref{eq:3.2}) is expected to function as an information resource that can be manipulated as in the usual treatment of QIT.

However, the authors of the paper found a certain perplexity which was against this innocuous expectation. They examined a situation in which one inertial observer (Alice) prepares a (supposed) spin-up state
\begin{equation*}
\psi_A  = \begin{pmatrix} a_1  \\ 0 \end{pmatrix}
\end{equation*}
where $a_1 \in L^2 (\mathbb{R}^3, \mu)$ is a normalized Gaussian distribution function, while the other observer (Bob), moving along the $x$-axis of Alice's coordinate system with constant velocity (denote this Lorentz transformation as $(0,\Lambda) \in G$), is trying to measure the spin $z$-component of the prerpared state. Let
\begin{equation*}
\psi_B = \begin{pmatrix} b_1 \\ b_2 \end{pmatrix}
\end{equation*}
be the above state in Bob's reference frame. According to Remark~\ref{remark:2.11}, the two states are related by
\begin{equation*}
\psi_B = U_{L, 1/2} (0, \Lambda) \psi_A.
\end{equation*}

Thus, if Bob is able to carry out a momentum-independent spin measurement, what he would get is the quantum informational property of the reduced density matrix given by
\begin{align}\label{eq:3.4}
\tau_B = \text{Tr}_{L^2} ( | \psi_B \rangle \langle \psi_B | ) = \text{Tr}_{L^2} \left[ U_{L,1/2}(0, \Lambda) |\psi_A \rangle \langle \psi_A | U_{L,1/2} (0, \Lambda) ^{-1} \right] \nonumber \\
= \int_{X_m} W_L(\Lambda, \Lambda^{-1} p ) \begin{pmatrix} |a_1 (\Lambda^{-1} p) |^2 & 0 \\ 0 & 0 \end{pmatrix} W_L(\Lambda, \Lambda^{-1} p ) ^{-1} d \mu (p)\footnotemark
\end{align}
\footnotetext{The partial trace with respect to an $L^2$-Hilbert space can be calculated by means of an integration over its underlying measure space in certain circumstances including the case at hand. See \cite{duflo}.}
while the spin information that Alice has prepared is contained in the matrix
\begin{equation}\label{eq:3.5}
\tau_A = \text{Tr}_{L^2} ( |\psi_A \rangle \langle \psi_A| ) = \text{Tr}_{L^2} \left[ |a_1  |^2 \otimes \begin{pmatrix} 1 & 0 \\ 0 & 0 \end{pmatrix} \right] = \begin{pmatrix} 1 & 0 \\ 0 & 0 \end{pmatrix}.
\end{equation}

The authors calculated the spin entropies of Eqs.~(\ref{eq:3.4})--(\ref{eq:3.5}) and showed that, while $S(\tau_A) =0$ always, $S(\tau_B)$ is in general non-zero depending on $\Lambda$, showing that the spin entropy of the particle has no relativistically invariant meaning. From this, they concluded that there is no definite transformation law between $\tau_A$ and $\tau_B$ which depends only on $\Lambda$ and thus, the notion "spin state of a particle" is meaningless unless one does not specify its complete state, including the momentum variables.

\paragraph{The work of Gingrich and Adami}

\hfill

Shortly, the paper \cite{gingrich2002} considered a similar scenario, but with two massive spin-1/2 particles. This time the quantum Hilbert space is given by
\begin{equation}\label{eq:3.6}
\mathcal{H}_{L,1/2} \otimes \mathcal{H}_{L,1/2}  \cong L^2( \mathbb{R}^3 \times \mathbb{R}^3 , \mu \times \mu ) \otimes \big( \mathbb{C}^2 \otimes \mathbb{C}^2 \big).
\end{equation}

Following \cite{peres2002}, the authors considered a two-particle state $\psi \in \mathcal{H}_{L, 1/2} \otimes \mathcal{H}_{L, 1/2}$, formed the density matrix corresponding to $\psi$ as
\begin{equation*}
\rho := | \psi \rangle \langle \psi | \in L^2 (\mathbb{R}^3 \times \mathbb{R}^3 , \mu \times \mu) ^{\otimes 2} \otimes \big( M_2 (\mathbb{C} ) \otimes M_2 (\mathbb{C}) \big),
\end{equation*}
and took the partial trace with respect to the momentum variable to obtain a two-qubit bipartite state
\begin{equation}\label{eq:3.7}
\tau := \text{Tr}_{L^2} (\rho) \in M_2 (\mathbb{C} ) \otimes M_2 (\mathbb{C}),
\end{equation}
where each component $M_2 (\mathbb{C})$ represents the spin quantum system of each particle, respectively. The entanglement of this bipartite state
\begin{equation}\label{eq:3.8}
E(\tau) := S( \text{Tr}_{\mathbb{C}^2} (\tau) )
\end{equation}
is called the \textit{spin entanglement of the two-particle state $\psi$}. Here, the trace in the RHS is done with respect to any one $\mathbb{C}^2$-component of the tensor product space $\mathbb{C}^2 \otimes \mathbb{C}^2$. The result does not depend on the choice (cf. \cite{holevo}).

With these notions at hand, they considered a situation where Alice has prepared a maximal Bell state with Gaussian momentum distribution
\begin{equation*}
\psi_A (p,q) = \frac{1}{\sqrt{2}} f(p) f(q) \phi^+ \in \mathcal{H}_{L,1/2} \otimes \mathcal{H}_{L,1/2}
\end{equation*}
where
\begin{equation*}
\phi^+ := \left( \begin{pmatrix}1 \\ 0 \end{pmatrix} \otimes \begin{pmatrix}1 \\ 0 \end{pmatrix} + \begin{pmatrix} 0 \\ 1\end{pmatrix} \otimes \begin{pmatrix}0 \\ 1 \end{pmatrix} \right) \in \mathbb{C}^2 \otimes \mathbb{C}^2
\end{equation*}
and $f(p)$ is a normalized Gaussian distribution function.

Bob is now moving with constant velocity along the $z$-axis with respect to Alice. Denote this Lorentz transformation as $(0, \Lambda) \in G$. Then, using Eq.~(\ref{eq:3.1b}) on the tensor product system, the state
\begin{equation*}
\psi_B  =  [U_{L,1/2} (0, \Lambda) \otimes U_{L, 1/2} (0, \Lambda)] \psi_A
\end{equation*}
is what the inertial observer Bob perceives as the Bell state that Alice has prepared (cf. Remark~\ref{remark:2.11}).

Now, following the above procedure, they form qubit bipartite states $\tau_A$ and $\tau_B$, measure the spin entanglements $E(\tau_A)$ and $E(\tau_B)$ in their respective frames, and compare. Since
\begin{equation*}
\tau_A := \text{Tr}_{L^2} ( | \psi_A \rangle \langle \psi_A |) = | \phi^+ \rangle \langle \phi^+ |,
\end{equation*}
we see that $\tau_A$ is a maximally entangled state and hence $E(\tau_A) = 1$ always. However, the authors found that $E(\tau_B)$ might even vanish depending on $\Lambda$\footnote{In the paper, however, the authors used \textit{Wootter's concurrence} $C(\tau)$ instead of our entanglement $E(\tau)$. Nevertheless, for two-qubit systems, the relations $E(\tau) = 1 \Leftrightarrow C(\tau)=1 $ and $E(\tau)=0 \Leftrightarrow C(\tau)=0$ hold. Therefore, the two notions can be equivalently used for the same purpose of showing whether a pure state is separable or not. See \cite{wootter1998} for details.}, which happens precisely when the bipartite state $\tau_B$ is separable (cf. \cite{holevo}). So, in particular, a two-particle state that is maximally entangled in spin with respect to one inertial frame may be perceived as a state that is completely unrelated in spin with respect to another inertial observer, a striking perplexity.

Therefore, just like the spin entropy in \cite{peres2002}, the spin entanglement of a two-particle state is also an observer-dependent quantity, showing its inadequacy as an informational resource in the context of RQI.

As written in the introduction, an implication that these two works entailed was that when one wants to use the spin of massive particles with spin-1/2 (such as electron) as information carriers (i.e., qubit carriers), the concepts of entropy, entanglement, and correlation of the spins, which are important informational resources in QIT, may require a reassessment \cite{peres2004}.

\subsection{A problem with the standard description Eq.~(\ref{eq:3.1})}\label{sec:3.2}

Some have questioned the meaning of the reduced density matrix Eq.~(\ref{eq:3.2}). For example, \cite{saldanha2012a} claimed, on the basis of a physical consideration, that a momentum-independent measurement of spin is impossible, and hence Eq.~(\ref{eq:3.2}) is in fact meaningless. A related issue is the search for a right definition of \textit{relativistic spin operator}, which still has no universally agreed upon definition (cf. \cite{alsing2012, caban2013, bauke2014, bauke2014b, terno2016, DERIGLAZOV2016, avelar2021}) whereas the paper \cite{peres2002} assumed it as the operator $1 \otimes \frac{1}{2} \boldsymbol{\tau}$ defined on the space Eq.~(\ref{eq:3.1a}).

Therefore, in effect, these have questioned the validity of the interpretation that the $\mathbb{C}^2$-component in Eq.~(\ref{eq:3.1a}) should give the momentum-independent spin states of the particle. As we shall see, the $\mathbb{C}^2$-component in Eq.~(\ref{eq:3.1a}) is indeed inherently momentum dependent. This is most clearly seen if we look at the bundle underlying the Hilbert space $\mathcal{H}_{L,1/2}$ instead of the space itself.

In \cite{lee2022}, it was pointed out that $\mathcal{H}_{L,1/2}$ can be viewed as the $L^2$-section space of the trivial bundle $X_m ^+ \times \mathbb{C}^2$ with inner product
\begin{equation*}
(f,g) \mapsto \int_{X} f(p)^\dagger g(p) d \mu (p).
\end{equation*}

Denote this bundle as $E_{L, 1/2}$. This bundle is an assembly of the two-level quantum systems $(E_{L,1/2})_p \cong \mathbb{C}^2$ corresponding to each motion state (momentum) $p \in X$, and each wave function $\psi \in \mathcal{H}_{L,1/2}$ becomes a field of qubits. The so-called momentum-spin eigenstate $|p, \chi \rangle, (p \in X, \chi \in \mathbb{C}^2)$ used in the physics literature can be identified with the point $ (p, \chi ) \in (E_{L,1/2})_p$ in this formalism.

Since the full information of each quantum state of a massive particle with spin-1/2 can be recorded in the bundle $E_{L,1/2}$, being an $L^2$-section on the bundle\footnote{This mathematical fact has nothing to do with physical measurement.\label{footnote:15}}, each inertial observer can use the bundle instead of the space $\mathcal{H}_{L,1/2}$ for the description of a massive particle with spin-$1/2$. How is this bundle description related among different inertial observers? Suppose two inertial observers Alice and Bob are related by a Lorentz transformation $(a, \Lambda) \in G$ as in Eq.~(\ref{eq:2.11}). If Alice has prepared a particle in the state $ \psi \in \mathcal{H}_{L,1/2}$ in her frame, then Bob would perceive this particle as in the state $U_{L,1/2 } (a, \Lambda) \psi \in \mathcal{H}_{L,s}$ according to Sect.~\ref{sec:2.3} (cf. Remark~\ref{remark:2.11}).

For these transformation laws for wave functions to be true, Alice's bundle $E_{L,1/2} ^A$ and Bob's bundle $E_{L,1/2} ^B$ should be related by the bundle isomorphism

\begin{gather}
\lambda_{L,1/2} (a, \Lambda ) : E_{L,1/2} ^A \rightarrow E_{L,1/2} ^B \nonumber \\
(p,v)^A \mapsto \left(  \Lambda p, e^{- i (\Lambda p)_\mu a^\mu} \sigma_{1/2} \left(W_L ( \Lambda, p) \right) v \right)^B \label{eq:3.9}
\end{gather}
so that the transformation law for the sections
\begin{equation}\label{eq:3.10}
\psi^A \mapsto \psi^B = \lambda_{L,1/2} (a, \Lambda) \circ \psi^A \circ \Lambda^{-1}
\end{equation}
becomes $U_{L,1/2} (a, \Lambda)$. The following commutative diagram is useful in visualizing the transformation law.
\begin{equation}\label{eq:3.11}
\begin{tikzcd}[baseline=(current  bounding  box.center)]
E_{L,1/2} ^A \arrow[r, "{\lambda_{L,1/2} ( \Lambda, a)}"] \arrow[d ]
& E_{L,1/2} ^B \arrow[d]  \\
X ^A  \arrow[r, "\Lambda"]
& X ^B
\end{tikzcd}.
\end{equation}

Now, suppose two inertial observers Alice and Bob are related by a Lorentz transformation $(0, L(p)) \in G$ as in Eq.~(\ref{eq:2.11}). Since the $E_{L,1/2}$-bundle description of the two observers are related by Eq.~(\ref{eq:3.9}), we have
\begin{equation}\label{eq:3.12}
\lambda_{L,1/2} (0, L (p) ) ( p_m , \chi )^A = ( p, \chi )^B
\end{equation}
for ${}^\forall (p_m,\chi) \in (E_{L,1/2}  ^A)_{p_m}$.

To see what physical implications that this equation entails, we need a brief digression into the quantum mechanics of the two-level system.

If $\chi \in \mathbb{C}^2$ is a qubit, that is, $\|\chi \|^2 = 1$, then, it is the definite spin-up state along the direction $\mathbf{n}:= \chi^\dagger \boldsymbol{\tau} \chi = (\chi^\dagger \tau^1 \chi, \chi^\dagger \tau^2 \chi, \chi^\dagger \tau^3 \chi) \in \mathbb{R}^3$, which means $(\frac{1}{2} \boldsymbol{\tau}\cdot \mathbf{n}) \chi = \frac{1}{2} \chi$. In fact,
\begin{equation}\label{eq:3.13}
\boldsymbol{\tau}\cdot \mathbf{n} = \chi \chi^\dagger - (I_2 - \chi \chi^\dagger) = 2 \chi \chi^\dagger - I_2
\end{equation}
since a state orthogonal to $\chi$ is the spin-down state along the direction $\mathbf{n}$. So, we see that the state $\chi$ is completely characterized by the three-vector $\mathbf{n}$ and we may call it the \textit{spin direction} of $\chi$.

Now, suppose an inertial observer Bob is interested in the relativistic perception of the spin of a quantum particle whose state is given by $\psi \in \mathcal{H}_{L,1/2}$. Since the particle does not have definite momentum, Bob cannot naively apply the classical scheme given in Sect.~\ref{sec:2.5.1} to obtain the relativistic perception. So, he resorts to the strategy outlined in Remark~\ref{remark:2.14}.

As argued above, the state $\psi$ of the particle can be represented as a section of the $L^2$-section space of the bundle $E_{L, 1/2} ^B$. So, the full information of the state can be expressed as $\{ (p, \psi(p)) \in E_{L, 1/2} ^B : p \in X \}$. Now, treating each motion-spin state $(p, \psi(p)) \in ( E_{ L, 1/2 }^ B )_p$ as a moving qubit with momentum $p \in X$ and spin state $\psi(p) \in \mathbb{C}^2$, we try to apply the scheme of Sect.~\ref{sec:2.5.1}.

First, Bob transforms his frame by $L(p)^{-1}$, getting an inertial frame Alice in whose frame the qubit is at rest, calculate the spin direction of the qubit in that frame, and use Eq.~(\ref{eq:3.12}) to obtain his relativistic perception of the spin. If, by calculating the spin direction $\mathbf{n}$, Alice finds that the qubit is $(p_m, \chi) \in (E_{L, 1/2} ^A)_{p_m}$, then Bob would conclude that his relativistic perception of the qubit is $(p, \chi) \in (E_{L,1/2} ^B)_p$ according to Eq.~(\ref{eq:3.12}), whose spin direction is again $\mathbf{n}$ by Eq.~(\ref{eq:3.13}).

However, the relativistic perception of $\mathbf{n}$ in Bob's frame should be $L (p) (0, \mathbf{n})$ according to Eq.~(\ref{eq:2.12}), which is not equal to $\mathbf{n}$ in general (cf. \cite{derbarba2012}). Therefore, we conclude that, without recourse to the frame change $L (p)$, the three-vector $\mathbf{n}$, and hence the qubit $\chi$ itself, does not reflect Bob's perception of the spin state.

So, Eq.~(\ref{eq:3.12}) tells us that the qubits in $(E_{L,1/2} ^B)_p$ are not Bob's perception (in the sense of Sect.~\ref{sec:2.5}) of the spin state if the qubits in $(E_{L,1/2} ^A)_{p_m}$ are the perception of the $L (p)^{-1}$-transformed observer Alice\footnote{Throughout the paper, we assume (for obvious reason) that the elements in the fiber over the stationary state $p_m$ correctly reflect the relativistic perception.}. In other words, the vectors contained in $(E_{L,1/2} ^B)_p$ themselves don't have meaning in Bob's reference frame. They become useful only if Bob is also provided with the knowledge of $L (p)$. So, in particular, a state of the form $\begin{pmatrix} a_1  \\ 0 \end{pmatrix} \in \mathcal{H}_{L, 1/2}$ in Bob's frame cannot be called "the spin-up state along the $\hat{z}$-axis of Bob". It only tells us that if the particle happens to have momentum $p$, then the $L (p)^{-1}$-transformed observer Alice would see that the particle is in the spin-up state along the $\hat{z}$-axis in her frame.

Because of this fact, when we have no access to the momentum variable, the mere information of the $\mathbb{C}^2 $-component in Eq.~(\ref{eq:3.1a}) does not give us any information about the spin at all, let alone the reduced density matrix Eq.~(\ref{eq:3.2}) which is formed by summing over these pieces of information.

To see this last assertion explicitly, let's look more closely at the anatomy of the reduced density matrix for spin (Eq.~(\ref{eq:3.2})). Let $\psi = f \chi \in \mathcal{H}_{L,1/2}$ be a state where $f \in L^2(X, \mu)$ is a continuous $L^2$-normalized function and $\chi :X \rightarrow E_{L,1/2}$ is a continuous section such that $h_{L,1/2} (\chi(p), \chi(p)) = \| \chi(p) \|^2 = 1$ for all $p \in X$, i.e. $\chi(p)$ is a qubit in $(E_{L,1/2})_p$ for each $p \in X$. By denoting the spin direction of each qubit $\chi(p)$ as $\mathbf{n}(p)$ (i.e., $\mathbf{n}(p) = \chi(p)^\dagger \boldsymbol{\tau} \chi(p)$), we have
\begin{equation}\label{eq:3.14}
\psi(p) \psi (p)^\dagger = \frac{|f(p)|^2}{2} ( \boldsymbol{\tau} \cdot \mathbf{n} (p) + I_2 ) = \frac{|f(p)|^2}{2} \Big( (0, \mathbf{n}(p))^\sim + I_2 \Big)
\end{equation}
by Eq.~(\ref{eq:3.13}) (for the last expression, see Eq.~(\ref{eq:2.3})). So, the spin reduced density matrix Eq.~(\ref{eq:3.2}) becomes
\begin{equation}\label{eq:3.15}
\tau:= \int_{X_m}  \psi(p) \psi(p)^\dagger d \mu (p) = \frac{1}{2} + \frac{1}{2} \int_{X_m} |f(p)|^2 (0,\mathbf{n}(p))^\sim d \mu(p)
\end{equation}
which is just a weighted average of the spin direction $\mathbf{n}(p)$ of the qubits $\chi(p)$. However, since each three-vector $\mathbf{n}(p)$ gets its meaning only with respect to the $L(p)^{-1}$-transformed frame (as shown above), Eq.~(\ref{eq:3.15}) is a summation of vectors living in a whole lot of different coordinate systems. So, we see that this average value really has no meaning at all.\footnote{This mathematical proof is taken from \cite{lee2022}. For a physical reasoning for the meaninglessness of Eq.~(\ref{eq:3.2}), see \cite{saldanha2012a}.}

Although we will not give as detailed analysis as for Eq.~(\ref{eq:3.2}), we remark that the matrix Eq.~(\ref{eq:3.7}) is meaningless also for the same reason. That is, since the fibers $(E_{L, 1/2})_p \otimes (E_{L,1/2})_q$ don't reflect the perception of a fixed inertial observer who is using the bundle $E_{L, 1/2} \boxtimes E_{L,1/2} \rightarrow \mathbb{R}^3 \times \mathbb{R}^3$ for the description of two particles, the mere information of the $\mathbb{C}^2 \otimes \mathbb{C}^2$-component of Eq.~(\ref{eq:3.6}) does not give the observer any information of the spin unless the observer has access to the momentum variable. Therefore, the matrix Eq.~(\ref{eq:3.7}), which is just the sum of these pieces of information, has no real meaning.

This definitive clarification of the perplexities that we explored in Sect.~\ref{sec:3.1} is due to the recent work \cite{lee2022}. After giving this proof, the paper went further to remark that "any anticipation that this (the matrix Eq.~(\ref{eq:3.2})) would give the spin information independent of the momentum variable is an illusion caused by the form of the standard representation space Eq.~(\ref{eq:3.1a}) as a tensor product system."

Having seen a problem with the representation space Eq.~(\ref{eq:3.1}) regarding the perception of a fixed inertial observer, which was the root of the fundamental perplexities posed by the pioneering works, we now proceed to resolve this difficulty.

\subsection{The perception and boosting bundle descriptions for massive spin-1/2 particles}\label{sec:3.3}

The central idea of the paper \cite{lee2022} was that by introducing an Hermitian metric on the bundle $E_{L,1/2}$, we can construct another bundle $E_{1/2}$, called \textit{perception bundle}, whose fibers correctly reflect each inertial observer's "relativistic perception" introduced in Sect.~\ref{sec:2.5}. So, the problem of the $E_{L,1/2}$-bundle description as noted in the preceding subsection is resolved in the perception bundle description. These statements will be made clear in this subsection.

Before we begin, we note that $L (p)$ in Eq.~(\ref{eq:2.20}) is not the only boosting which maps $p_m$ to $p$. In fact, $L' (p) = L (p) B(p)$ for any $B(p) \in SU(2)$ does the same job and all the preceding arguments hold just as well. Of course, if one uses a different definition of $L(p)$, then the representation Eq.~(\ref{eq:3.1b}) is changed along with the physical meaning of the $\mathbb{C}^2$-component in Eq.~(\ref{eq:3.1a}).

So, there is certain arbitrariness in the $E_{L,1/2}$-bundle description Eq.~(\ref{eq:3.1}), which reveals additional superiority of the perception bundle description since it is completely free from such choices. But, we will not pursue this $L (p)$-arbitrariness issue any further in this paper for simplicity (see \cite{avelar2013, lee2022}). Nevertheless, for this reason, the bundle $E_{L,1/2}$ will be called the \textit{boosting bundle for massive particle with spin-1/2}, signifying its dependence of the boosting $L$.

\subsubsection{The perception bundle description}\label{sec:3.3.1}

\hfill

On the bundle $X \times \mathbb{C}^2$, we introduce the following Hermitian metric
\begin{equation}\label{eq:3.16}
h_{1/2} \Big( (p,v) , (p,w) \Big) := \left( L(p)^{-1} v \right) \cdot \left( L(p)^{-1} w \right) = v^\dagger \frac{ \utilde{p}}{m} w.
\end{equation}

The Hermitian bundle $(X \times \mathbb{C}^2 , h_{1/2})$ will be denoted as $E_{1/2}$ and called the \textit{perception bundle for massive particle with spin-1/2}.

Note that the map
\begin{equation}\label{eq:3.17}
\begin{tikzcd}[baseline=(current  bounding  box.center), column sep=1.5em]
    \alpha_{1/2} : E_{1/2} \arrow ["{(p,v) \mapsto (p , L(p)^{-1} v)}"]{rrrrrr} \arrow{drrr} 
 & & & & & & E_{L, 1/2} \arrow{dlll}
\\
  & & &X& & &
\end{tikzcd}
\end{equation}
is an Hermitian bundle isomorphism. Via $\alpha_{1/2}$, the $G$-action Eq.~(\ref{eq:3.9}) is pulled back to the following $G$-action on $E_{1/2}$.

\begin{gather}
\lambda_{1/2} (a, \Lambda ) : E_{1/2} ^A \rightarrow E_{1/2} ^B \nonumber \\
(p,v)^A \mapsto \left(  \Lambda p, e^{- i (\Lambda p)_\mu a^\mu} \Lambda v \right)^B \label{eq:3.18}
\end{gather}

Let $(p, \chi) \in (E_{1/2})_p$ be a qubit, i.e., $ \| \chi \|_{(E_{1/2})_p} = 1$, which is equivalent to $ \| L (p)^{-1} \chi \|_{(E_{L,1/2})_p} = \| L(p)^{-1} \chi \| = 1$. Denote the spin direction of the qubit $L(p)^{-1} \chi \in \mathbb{C}^2$ as $\mathbf{n} \in \mathbb{R}^3 $. Then, by Eqs.~(\ref{eq:2.20}), (\ref{eq:3.13}), and (\ref{eq:2.13}),
\begin{align}\label{eq:3.19}
m \chi \chi^\dagger - \frac{ \tilde{p}}{2} = m L(p) L(p)^{-1} \big( \chi \chi^\dagger - \frac{\tilde{p}}{2m} \big) L(p)^{-1} L(p) \nonumber \\ =  m L (p) \left( \frac{1}{2} \boldsymbol{\tau}\cdot \mathbf{n} \right) L (p)  = \frac{m}{2} \Big(L (p) (0, \mathbf{n}) \Big)^\sim .
\end{align}

So, there is $w \in \mathbb{R}^4$ such that
\begin{equation}\label{eq:3.20}
\tilde{w} = m \chi \chi^\dagger - \frac{ \tilde{p}}{2}
\end{equation}
which will be called the \textit{Pauli-Lubansky four-vector of the qubit $(p, \chi) \in (E_{1/2})_p$} (cf. Example~\ref{example:2.13}). Note that this definition is completely free from any reference to the boostings $L (p)$. $L (p)$ came into the picture for the sole purpose of showing that the RHS of Eq.~(\ref{eq:3.20}) can be represented by an element of $\mathbb{R}^4$ in the form of Eq.~(\ref{eq:2.3}).

Let Alice and Bob be inertial observers related by a Lorentz transformation $(a, \Lambda) \in G$ as in Eq.~(\ref{eq:2.11}). Then, via the isomorphism Eq.~(\ref{eq:3.17}), we see that the perception bundle descriptions of the two observers are related by Eq.~(\ref{eq:3.18}). Substituting $(a, \Lambda) = (0, L (p))$ into the transformation law, we obtain
\begin{equation}\label{eq:3.21}
\lambda_{1/2} ( 0, L (p)) (p_m , \chi)^A = (p, L (p) \chi)^B .
\end{equation}

As in Sect.~\ref{sec:3.2}, Alice prepares a qubit $(p_m, \chi) \in (E_{1/2}^A)_{p_m}$ in her rest frame. By Eq~(\ref{eq:3.21}), the qubit looks like $(p, L (p) \chi)^B \in (E_{1/2} ^B)_p$ in Bob's frame. He forms the Pauli-Lubansky vector for the qubit $(p, L (p)\chi) \in (E_{1/2} ^B)_p$ according to Eq.~(\ref{eq:3.20}) to find that
\begin{equation}\label{eq:3.22}
\tilde{w} = \Big( L (p) ( 0, \frac{\mathbf{n}}{2}) \Big)^\sim
\end{equation}
where $\mathbf{n}$ is the spin direction of the qubit $(p_m, \chi)^A \in (E_{1/2} ^A)_{p_m}$ in Alice's frame. This four-vector $w = L(p) (0, \frac{\mathbf{n}}{2}) \in \mathbb{R}^4$ is exactly the information content of the qubit $(p_m, \chi)^A$ as perceived in Bob's frame (see the paragraph following Eq.~(\ref{eq:3.13})). So, we see that each fiber $(E_{1/2} ^B)_p$ correctly reflects Bob's perception (in the sense of Sect.~\ref{sec:2.5}) of the particle's spin state when the particle is moving with momentum $p$ (hence the name perception bundle). In this regard, choosing the perception bundle description instead of the more standard boosting bundle description seems more sensible in addressing relativistic questions. Also, see \cite{lee2022} for more features of the perception bundle description.

\subsubsection{A relation between the two descriptions; the bundles}\label{sec:3.3.2}

\hfill

The relation between the two bundle descriptions $E_{1/2}$ and $E_{L,1/2}$ is the quantum analogue of the relation between the Pauli-Lubansky four-vector and the Newton-Wigner spin in classical SR (cf. Example~\ref{example:2.13}). More precisely, a qubit $(p, \chi) \in (E_{1/2})_p$ has information of the Pauli-Lubansky four-vector of the particle (cf. Eq.~(\ref{eq:3.20})) and the corresponding vector in the boosting bundle, $\alpha_{1/2} (p, \chi) = (p, L (p)^{-1} \chi) \in (E_{L,1/2} )_p$, has information of the Newton-Wigner spin. To see this, form the spin three-vector $\mathbf{n}$ of $\alpha_{1/2} (p, \chi) = (L (p)^{-1} \chi) \in (E_{L,1/2})_p$ as in Eq.~(\ref{eq:3.13}). Then, by Eqs.~(\ref{eq:3.20}) and (\ref{eq:2.24}),

\begin{align}\label{eq:3.23}
\boldsymbol{\tau} \cdot \left(\frac{1}{2}  \mathbf{n}\right) &=  L (p)^{-1} \chi \chi^\dagger L (p)^{-1} - \frac{1}{2} I_2 \nonumber \\
&=  L (p)^{-1} \left( \frac{\tilde{w}}{m} + \frac{\tilde{p}}{2m}  \right) L (p)^{-1} - \frac{1}{2} I_2 \nonumber \\
& = \frac{1}{m} \left(L (p)^{-1} w \right)^\sim = ((0,\mathbf{s}))^\sim = \boldsymbol{\tau} \cdot \mathbf{s}
\end{align}
where $\mathbf{s}$ is the Newton Wigner spin given by Eq.~(\ref{eq:2.25}) in terms of the Pauli-Lubansky four-vector $w$ of the qubit $(p, \chi)$ given by Eq.~(\ref{eq:3.20}).

So, we conclude that the information contained in the qubits of the bundle $E_{L,1/2}$ in relation to those of $E_{1/2}$ via the bundle isomorphism Eq.~(\ref{eq:3.17}) is precisely the Newton-Wigner spin of the particle. The qubits in the perception bundle are "relativistic perception" just like the Pauli-Lubansky vector is (cf. Example~\ref{example:2.13}), whereas those in the boosting bundle are not, just like the Newton-Wigner spin vector.

Now, we can give a classical analogue for each bundle description. Let $M$ be the Minkowski spacetime. Fix an inertial frame of reference $\varphi = (x^\mu)$ (cf. Definition~\ref{definition:2.7}) and suppose there is a spinning particle with momentum $p^\mu$ and Pauli-Lubansky vector $w^\mu$ with respect to the frame. These information of the particle can be recorded as a point in the tangent bundle $TM$ and expressed as $(p^\mu, w^\mu)$ in the coordinate representation of the chosen frame.

The perception bundle is the faithful quantum analogue of this coordinate representation for moving quantum systems as we have just seen in this subsection. However, the boosting bundle is the quantum version of the altered trivialization $\Big(p^\mu, (L(p)^{-1}w)^\mu \Big)$ of the tangent bundle $TM$, which moreover depends on the choice of the boostings $L$. This is an utter artificiality given the fact that there is certain arbitrariness in the choice of $L$ (see the introduction to Sect.~\ref{sec:3.3} and references therein).

One should note, however, that the boosting bundle description $E_{L,1/2}$ has been the standard approach to the problems in RQI.

\subsubsection{A relation between the two descriptions; the representations}\label{sec:3.3.3}

\hfill

Note that as the action $\lambda_{L,1/2}$ gave rise to the representation $U_{L,1/2}$ (cf. Eq.~(\ref{eq:3.10})), the action $\lambda_{1/2}$ also gives rise to a representation of $G$ by the formula
\begin{equation}\label{eq:3.24}
[U_{1/2} (a, \Lambda) \psi] (p) := \lambda_{1/2} (a, \Lambda) \circ \psi \circ \Lambda^{-1}
\end{equation}
on the Hilbert space
\begin{equation}\label{eq:3.25}
\mathcal{H}_{1/2} := L^2 \left( X , E_{1/2} ; \mu, h_{1/2} \right).
\end{equation}

This representation is equivalent to Eq.~(\ref{eq:3.1}) via the isomorphism Eq.~(\ref{eq:3.17}) and hence can be used to describe massive particles with spin-1/2 (cf. Definition~\ref{definition:3.1}). One may wonder whether the relation between the two bundles $E_{1/2}$ and $E_{L,1/2}$ manifests itself on the level of the two representations $U_{1/2}$ and $U_{L,1/2}$. Later, we will see that this is indeed the case. But, we are forced to defer the discussion until Sect.~\ref{sec:6.2.2} since we need to introduce several quantum operators before we can precisely state in what sense this is true.

\subsection{A preview of the main results of the paper}\label{sec:3.4}

We have seen that the fundamental perplexities posed by the two pioneering papers of RQI (\cite{peres2002, gingrich2002}) have arisen because the standard representation that has been predominantly used in the RQI literature (Eq.~(\ref{eq:3.1})) was constructed from a "wrong" bundle (the boosting bundle) and that by using a "right" bundle, the fibers of which correctly reflect relativistic perception (cf. Sect.~\ref{sec:2.5}) of inertial frames (the perception bundle), the perplexities are resolved.

A natural question that immediately comes to one's mind would be that whether the same kind of bundle theoretic descriptions are possible for all kinds of particles, not just the massive spin-1/2 ones. In this paper, we are going to show that this is indeed possible for massive particles with arbitrary spin\footnote{We leave out the massless case to a sequel paper.}, i.e., we are going to construct bundles whose fibers correctly reflect relativistic perception of inertial frames. Also, we will explore some of the theoretical implications of this bundle theoretic description. Specifically, we will see that some of the fundamental equations of Quantum Field Theory (QFT) are just manifestations of relativistic perception of inertial observers.

\subsection{Other approaches to RQI and the scope of the paper}\label{sec:3.5}

Before we begin, we want to mention other existing apporaches to RQI that are not covered in this paper and how the results of the present paper are related to them.

First, localized quantum systems that are relevant to quantum informational tasks such as moving cavities, point-like detectors, and localized wave packets have been discussed in the literature based on the language of QFT (cf. \cite{alsing2012}). Since single-particle state spaces are basic building blocks of QFT (cf. \cite{weinberg}), the results of the present paper are closely related to this approach (cf. \cite{caban2006}).\footnote{In fact, we expect that the results of this paper will give a new insight into the QFT approach.} However, we do not need to use the language of QFT in this paper since we restrict our attention to how the principle of SR affects our perception of the quantum reality (of which the single-particle state spaces are the simplest examples) and leave out applications of the theory introduced in this paper to actual quantum informational scenarios, which might require QFT, to a future research direction (see the concluding remarks in Sect.~\ref{sec:8}). Those who are interested in the QFT approach are referred to \cite{alsing2012} and references therein.

Also, in order to apply the results of the present paper to actual problems of RQI (which we leave out as a future research), one needs to know the theory of relativistic quantum measurement. One can find a good treatment in Ch.~11 of \cite{breuer}. We will give a link between this theory and some of the results of the present paper in Sect.~\ref{sec:7.4}.

\section{Single-particle state spaces}\label{sec:4}
In this section, we identify and classify the single-particle state spaces that are called "massive particles". The main technical tool that is needed to obtain a classification of single-particle state spaces is "Mackey machine". Let's set the stage for the main technical theorem. All the discussions until Theorem \ref{theorem:4.2} can be found in \cite{folland2015}. First, we define induced representations.
\begin{definition}[Induced representation]\label{definition:4.1}
Let $G$ be a locally compact group and $H \leq G$ be a closed subgroup such that there is a $G$-invariant measure $\mu$ on $G/H$. Given a unitary representation $\sigma$ of $H$ on the Hilbert space $\big(\mathcal{H}_\sigma , \langle \cdot, \cdot \rangle_\sigma \big)$, define
\begin{subequations}\label{eq:4.1}
\begin{equation}\label{eq:4.1a}
\mathcal{F}_0 := \left\{ f \in \mathcal{B}(G, \mathcal{H}_\sigma) :  f(gh) = \sigma(h)^{-1} f(g) \small{\text{ (${}^\forall h \in H$),}} \int_{G/H} \|f(x)\|_\sigma ^2 d \mu (\dot{x}) < \infty  \right\}
\end{equation}
where $\dot{x} =xH$, $\| \xi \|_\sigma ^2 := \langle \xi , \xi \rangle_\sigma$ for $\xi \in \mathcal{H}$, and $\mathcal{B} (G, \mathcal{H}_\sigma)$ is the set of all Borel functions from $G$ into $\mathcal{H}_\sigma$. \footnote{A discussion about the measurability of Banach space-valued functions can be found, for example, in Appendix~B of \cite{williams}. But, in all the cases that we will be looking at in this paper, $\mathcal{H}_\sigma$ is finite dimensional, to which the elementary measure theory as presented in \cite{rudin2} can be applied.} $\mathcal{F}_0$ is a vector space, of which $N = \{ f \in \mathcal{F}_0 : \int_{G/H} \| f(x) \|_\sigma ^2 d \mu ( \dot{x}) = 0\}$ is a subspace. Let
\begin{equation}\label{eq:4.1b}
\mathcal{F} := \mathcal{F}_0 / N.
\end{equation}
\end{subequations}

Then, the map $(f,g) \mapsto \int_{G/H} \langle f(x),g(x) \rangle _\sigma d \mu (\dot{x})$ is a well-defined inner product on the vector space $\mathcal{F}$, with respect to which $\mathcal{F}$ becomes a Hilbert space.

On this Hilbert space, we have a (continuous) unitary representation $\text{Ind}_H ^G (\sigma) : G \rightarrow U(\mathcal{F})$ defined by
\begin{equation}\label{eq:4.2}
\left[ \text{Ind}_H ^G (\sigma) (x) f \right] (y) = f( x^{-1} y).\footnote{Induced representations can be defined for all closed subgroups $H \leq G$, including those which have no $G$-invariant measure on $G/H$. However, we will not need this generality in this paper. Notice that Theorem~\ref{theorem:4.2} is stated without reference to invariant measures.}
\end{equation}  \qed
\end{definition}

Now, let $G$ be a second countable locally compact group and $N$ be a closed abelian normal subgroup of $G$ such that $G = N \ltimes H$ for some closed subgroup $H \leq G$, which means that the map $N \times H \rightarrow G$ given by $(n,h) \mapsto nh$ is a homeomorphism.

Then, $G$ has a natural left action on $\hat{N}$ (the dual of $N$) given by $x \cdot \nu (n) = \nu ( x^{-1} n x) $ for $x \in G$ and $ \nu \in \hat{N}$. Let $G_\nu$ be the isotropy group for $\nu \in \hat{N}$. We call $H_{\nu} = H \cap G_\nu$ the \textit{little group} for $\nu$. Note that $G_\nu = N \ltimes H_\nu$.

Let $\nu \in \hat{N}$ and $\sigma : H_{\nu} \rightarrow U(\mathcal{H}_\sigma)$ be an irreducible representation of $H_\nu$. Then, the map $\nu \sigma : G_{\nu} \rightarrow U(\mathcal{H}_\sigma)$ defined by
\begin{equation}\label{eq:4.3}
\nu \sigma (n h) = \nu (n) \sigma (h) \quad \text{ $(n \in N, h \in H_\nu)$} 
\end{equation}
is a well-defined irreducible representation of $G_\nu$.

We say that the action of $G$ on $\hat{N}$ is \textit{regular} if the natural bijections $G/G_\nu \rightarrow G \cdot \nu$ are homeomorphisms for all $\nu \in \hat{N}$ when $G\cdot \nu$ is endowed with the subspace topology. Now, let's state the main technical theorem.

\begin{theorem}[\cite{folland2015}, Theorem~6.43]\label{theorem:4.2}
Suppose $G=N \ltimes H$, where $N$ is a closed abelian normal subgroup, $H$ a closed subgroup, and the second countable group $G$ acts regularly on $\hat{N}$. Then, the following conclusions hold.

\begin{enumerate}
\item If $\nu \in \hat{N}$ and $\sigma$ is an irreducible representation of $H_\nu$, then $\textup{Ind}_{G_\nu} ^G (\nu \sigma )$ is an irreducible representation of $G$.

\item Every irreducible representation of $G$ is equivalent to one of this form.

\item $\textup{Ind}_{G_\nu} ^G (\nu \sigma) $ and $\textup{Ind}_{G_{\nu'}} ^G (\nu' \sigma') $ are equivalent if and only if $\nu$ and $\nu'$ belong to the same orbit, say $\nu' = x \nu$, and $h \mapsto \sigma(h)$ and $h \mapsto \sigma'(xhx^{-1})$ are equivalent representations of $H_\nu$. 
\end{enumerate}
\end{theorem}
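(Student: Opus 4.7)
The plan is to follow the classical Mackey machine program: analyze representations of $G$ through the restriction to the abelian normal subgroup $N$ via the SNAG (Stone--Naimark--Ambrose--Godement) theorem, reduce to a system of imprimitivity based on $\hat N$, and then invoke Mackey's imprimitivity theorem. The regularity hypothesis plays its role precisely by making the orbit space $\hat N / G$ a standard Borel space and by supplying measurable cross-sections, which is exactly what the imprimitivity theorem requires in this setting.

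For (i), I would restrict $\mathrm{Ind}_{G_\nu}^G(\nu\sigma)$ to $N$ and recognize, via a Borel cross-section for $G/G_\nu \to G\cdot\nu$, that the restriction is unitarily equivalent to the direct integral over $G\cdot\nu$ (against a quasi-invariant measure) with constant multiplicity $\dim\mathcal{H}_\sigma$. Any operator in the commutant of $\mathrm{Ind}_{G_\nu}^G(\nu\sigma)$ in particular commutes with this abelian decomposition, hence is a measurable field of operators on the fibers $\mathcal{H}_\sigma$; commutation with the $G$-action transports these fiber operators equivariantly, so the whole commutant is determined by a single operator on $\mathcal{H}_\sigma$ that commutes with $\sigma(H_\nu)$. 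Schur's lemma applied to the irreducible $\sigma$ then forces this operator to be a scalar, yielding irreducibility.

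For (ii), given an arbitrary irreducible unitary representation $\pi$ on $\mathcal{H}_\pi$, the SNAG theorem supplies a projection-valued measure $P$ on $\hat N$ with $\pi(n)=\int_{\hat N}\nu(n)\,dP(\nu)$, and the relation
\begin{equation*}
\pi(x)P(E)\pi(x)^{-1}=P(x\cdot E)
\end{equation*}
(for $x\in G$ and Borel $E\subset\hat N$) exhibits $(\pi,P)$ as a system of imprimitivity for the $G$-action on $\hat N$. Regularity guarantees that $\hat N/G$ is standard Borel; if the measure class of $P$ were split across two disjoint $G$-invariant Borel sets of positive $P$-measure, the corresponding spectral projections would produce a nontrivial element of the commutant of $\pi$, contradicting irreducibility. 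Hence $P$ is concentrated on a single orbit $G\cdot\nu_0$, and Mackey's imprimitivity theorem identifies the transitive system $(\pi, P)$ canonically with the one attached to an induced representation $\mathrm{Ind}_{G_{\nu_0}}^G(\rho)$, where $\rho$ is a unitary representation of $G_{\nu_0}$ restricting to $\nu_0$ on $N$; writing $\rho=\nu_0\sigma$ for a representation $\sigma$ of $H_{\nu_0}$ and invoking (i) in reverse shows that irreducibility of $\pi$ forces irreducibility of $\sigma$.

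For (iii), the projection-valued measures attached to $\mathrm{Ind}_{G_\nu}^G(\nu\sigma)$ and $\mathrm{Ind}_{G_{\nu'}}^G(\nu'\sigma')$ are supported on $G\cdot\nu$ and $G\cdot\nu'$, so equivalence forces these orbits to coincide; a choice of $x\in G$ with $\nu'=x\cdot\nu$ then implements an isomorphism $G_\nu\to G_{\nu'}$ under which the two little-group representations must match, which is exactly the stated equivalence of $\sigma$ and $h\mapsto\sigma'(xhx^{-1})$. The main obstacle throughout is measure-theoretic: showing that irreducibility of $\pi$ concentrates $P$ on a single orbit requires that the orbits behave like standard Borel subsets of $\hat N$, and running Mackey's imprimitivity theorem requires selecting Borel cross-sections. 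Both ingredients are precisely what the regularity assumption provides, so while the statement looks algebraic, the technical heart of the proof is really the measurable geometry of the $G$-action on $\hat N$.
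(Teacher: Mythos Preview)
The paper does not prove this theorem; it is quoted verbatim as Theorem~6.43 of \cite{folland2015} and used as a black box. Your sketch is the standard Mackey machine argument and is essentially what one finds in Folland's proof, so there is nothing to compare against in the paper itself.
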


Let's apply this theorem to the group $G := \mathbb{R}^4 \ltimes SL(2,\mathbb{C})$. From here on, we will follow the approach of \cite{folland2008}. Observe that $\mathbb{R}^4$ is a closed abelian normal subgroup. To show that the action is regular, we find all the orbits $G \cdot \nu$ in $\hat{\mathbb{R}}^4$.

First, the map $p \mapsto e^{-i \langle p, \cdot \rangle}$ is a topological group isomorphism from $\mathbb{R}^4$ onto $\hat{\mathbb{R}}^4$ (\cite{folland2015}, p.98). Via this isomorphism, the natural action of $G$ on $\hat{\mathbb{R}}^4$ as defined in this section translates into a $G$-action on $\mathbb{R}^4$ given by $((a,\Lambda) , p) \mapsto \kappa(\Lambda) p$ since
\begin{align*}
 \left[ (a,\Lambda) \cdot e^{-i \langle p , \cdot \rangle} \right] (b , I) = e^{-i \langle p, \cdot \rangle} \left(  (a, \Lambda)^{-1 } (b, I ) (a, \Lambda) \right)  
= e^{-i \langle p, \cdot \rangle} (\kappa (\Lambda)^{-1} b , I) \\ = e^{-i \langle p, \kappa(\Lambda)^{-1} b \rangle} = e^{-i \langle \kappa(\Lambda)p , b \rangle } = \left[ e^{-i \langle \kappa(\Lambda) p , \cdot \rangle} \right] ( b, I)
\end{align*} 
where we used the fact that $\kappa(\Lambda) \in SO^\uparrow(1,3)$ preserves the scalar product $\langle \cdot , \cdot \rangle$. For the rest of the paper, we shall remove $\kappa$ from all expressions involving an action of $\Lambda \in SL(2, \mathbb{C})$ on $ x \in \mathbb{R}^4$ via $\kappa$, i.e., we write $\Lambda x$ for $\kappa (\Lambda) x $. So, the action of $G$ on $\hat{\mathbb{R}}^4 \cong \mathbb{R}^4$ becomes
\begin{equation}\label{eq:4.4}
(a,\Lambda) \cdot p = \Lambda p,
\end{equation}
from which we see that the $G$-orbits of this action are exactly the $SO^{\uparrow}(1,3)$ orbits of its canonical action on $\mathbb{R}^4$.
\begin{proposition}\label{proposition:4.3}
The $G$-orbits in $\mathbb{R}^4$ are exactly the $SO^{\uparrow}(1,3)$-orbits in $\mathbb{R}^4$, which consist of
\begin{equation}\label{eq:4.5}
X_m ^+ = \{p : p_\mu p^\mu = m^2, p^0 > 0 \} \text{, } X_m ^- = \{p : p_\mu p^\mu = m^2, p^0 < 0 \}
\end{equation}
for $0 \leq m < \infty$,
\begin{equation}\label{eq:4.6}
Y_m = \{ p : p_\mu p^\mu = - m^2 \}
\end{equation}
for $0 < m < \infty$, and
\begin{equation*}
    \{0\}.
\end{equation*}

The following can be used as representatives (elements $p \in \mathbb{R}^4$ for $G \cdot p$) for these orbits:

For $X_m ^\pm$,
\begin{equation}\label{eq:4.7}
    p_m ^{\pm} : = (\pm m, 0, 0, 0).
\end{equation}

For $X_0 ^{\pm}$,
\begin{equation}\label{eq:4.8}
    p_0 ^{\pm} : = (\pm 1, 0, 0, \pm 1).
\end{equation}

For $Y_m$,
\begin{equation*}
    q_m = (0, 0, m, 0).
\end{equation*}

For $\{0\}$, $0$.
\end{proposition}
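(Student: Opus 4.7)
The plan is to break the proof into three conceptual steps: (i) reducing $G$-orbits to $SO^\uparrow(1,3)$-orbits, (ii) exhibiting the invariants that distinguish orbits, and (iii) proving transitivity inside each candidate orbit.

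First I would observe that since $(a,\Lambda)\cdot p = \kappa(\Lambda) p$ by Eq.~(\ref{eq:4.4}), the translation part is irrelevant, and because $\kappa : SL(2,\mathbb{C}) \to SO^\uparrow(1,3)$ is surjective (Eq.~(\ref{eq:2.14})), the $G$-orbit of any $p$ coincides with its $SO^\uparrow(1,3)$-orbit. This immediately yields the first sentence of the proposition.

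Next I would identify the invariants. By the defining equation $\Lambda^\intercal \eta \Lambda = \eta$ (Eq.~(\ref{eq:2.8})), the Minkowski pairing $\langle p,p\rangle = p_\mu p^\mu$ is preserved, so every orbit lies inside one of $\{p : \langle p,p\rangle = m^2\}$ for $m\geq 0$, $\{p : \langle p,p\rangle = -m^2\}$ for $m>0$, or $\{0\}$. For future-pointing timelike or null nonzero $p$, the sign of $p^0$ is additionally preserved by $SO^\uparrow(1,3)$; this is trivial for rotations and follows for pure boosts from the inequality $p^0 - \boldsymbol{\beta}\cdot \mathbf{p} \geq p^0(1-|\boldsymbol{\beta}|) > 0$, valid whenever $|\mathbf{p}| \leq p^0$, $p^0>0$, and $|\boldsymbol{\beta}|<1$. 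For spacelike $p$ the sign of $p^0$ is not preserved (a boost along a direction orthogonal to $\mathbf{p}$ can flip it), so no further invariant refines $Y_m$. Hence the orbits are contained in the sets $X_m^\pm$, $Y_m$, and $\{0\}$ as claimed.

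Finally I would prove transitivity inside each set by sending an arbitrary element to the designated representative. For $X_m^+$ with $m>0$ this is already done in Sect.~\ref{sec:2.4}: the standard boost $L(p) = \sqrt{\tilde p/m} \in SL(2,\mathbb{C})$ satisfies $\kappa(L(p))p_m = p$. For $X_m^-$, observe $-p \in X_m^+$, so by linearity $\kappa(L(-p))$ sends $p_m^- = -p_m$ to $p$. For $X_0^\pm$ I would first apply a rotation $R \in SO(3) \subset SO^\uparrow(1,3)$ aligning $\mathbf{p}$ with $\pm \hat z$, then apply a $z$-boost; a direct check shows such boosts act on a null vector $(c,0,0,c)$ by an arbitrary positive scalar $\sqrt{(1-\beta)/(1+\beta)} \in (0,\infty)$, so every element of $X_0^\pm$ is reached. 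For $Y_m$, given $p \in Y_m$ we have $|\mathbf{p}|^2 = m^2 + (p^0)^2 > 0$; rotate $\mathbf{p}$ onto the positive $y$-axis and then apply a $y$-boost with parameter $\beta = p^0/|\mathbf{p}|$ (subluminal since $(p^0)^2 < |\mathbf{p}|^2$). A short calculation using $\gamma = |\mathbf{p}|/m$ gives an image with zero time component and $y$-component equal to $\gamma m^2/|\mathbf{p}| = m$, i.e.\ exactly $q_m$.

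The main obstacle is the spacelike case $Y_m$: there is no off-the-shelf analogue of the boost $L(p)$ available from Sect.~\ref{sec:2.4}, and one must combine a rotation and a boost, verifying both that the boost parameter needed to annihilate $p^0$ is subluminal (which is precisely the content of spacelikeness) and that the rescaled spatial component has magnitude exactly $m$. Once these two verifications are in hand, the proposition is complete.
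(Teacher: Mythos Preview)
Your proof is correct and considerably more detailed than the paper's own argument, which consists of the single sentence ``The proof is easy once one notices that each subset listed above is $SO^\uparrow(1,3)$-invariant.'' The paper thus only gestures at step~(ii) (invariance) and leaves both the reduction~(i) and the transitivity~(iii) entirely to the reader; your version supplies all of these explicitly, including the nontrivial spacelike case $Y_m$ where you correctly identify that spacelikeness is exactly the condition making the required boost parameter subluminal. One minor point: your preservation-of-sign argument in step~(ii) tacitly uses that every element of $SO^\uparrow(1,3)$ decomposes as a rotation times a pure boost; this is standard (e.g.\ via polar decomposition in $SL(2,\mathbb{C})$), but an even quicker route is to note that $SO^\uparrow(1,3)$ is connected and $p^0$ never vanishes on $X_m^+ \cup X_m^-$, so continuity forbids a sign change along any orbit.
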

\begin{proof}
The proof is easy once one notices that each subset listed above is $SO^\uparrow(1,3)$-invariant.
\end{proof}

Note that $X$ and $p_m$ used in Sect.~\ref{sec:3} are equal to $X_m ^+$ and $p_m ^+$, respectively.

These orbits are all embedded submanifolds of $\mathbb{R}^4$ and the bijections $G/G_\nu \rightarrow G \cdot \nu$ are $G$-equivariant smooth maps between transitive $G$-manifolds. So, these bijections are of constant rank and hence diffeomorphisms when $G \cdot \nu$ are endowed with the subspace topologies (cf. \cite{lee}), which implies that the action of $G$ on $\mathbb{R}^4 \cong \hat{\mathbb{R}}^4$ is regular. Therefore, we can apply Theorem \ref{theorem:4.2} to $G$.

\begin{remark}\label{remark:4.4}
So, if $p \in \mathbb{R}^4$ and $H_p$ is the corresponding little group (that is, the isotropy subgroup of $H = SL(2,\mathbb{C})$), then every irreducible representation $\sigma : H_p \rightarrow U(\mathcal{H}_\sigma )$ induces an irreducible representation
\begin{equation*}
\rho_{p, \sigma} (a,\Lambda)  = \exp(-i p_\mu a^\mu ) \sigma(\Lambda) = e^{-i \langle p , a \rangle} \sigma(\Lambda)
\end{equation*}
of $G_p$ (cf. Eq.~(\ref{eq:4.3})), which in turn induces an irreducible representation
\begin{equation*}
\pi_{p, \sigma} = \text{Ind}_{G_p} ^G (\rho_{p, \sigma})
\end{equation*}
of $G$ by Theorem~\ref{theorem:4.2}.1. Moreover, Theorem~\ref{theorem:4.2}.2 asserts that every irreducible representation of $G$ arises in this way and Theorem~\ref{theorem:4.2}.3 tells us that if we restrict the choice of $p \in \mathbb{R}^4$ to the chosen representatives listed in Proposition~\ref{proposition:4.3}, the resulting representations are all distinct.

So, the classification of single-particle state spaces will be completed once we calculate the little group $H_p$ for each representative $p$ listed in Proposition~\ref{proposition:4.3}, find all irreducible representations $\sigma$ of this little group, and calculate $\pi_{p, \sigma}$. In this paper, we will only consider the representations associated with the orbits $X_m ^\pm$ for $ m>0$, which correspond to \textit{massive particles}. 
\end{remark}

Let's investigate the physical meaning of the constant $m$ which was used to classify the orbits as in Proposition~\ref{proposition:4.3}. Let $q \in \mathbb{R}^4$ be any element and $ \sigma:H_{q} \rightarrow U(\mathcal{H}_\sigma)$ be an irreducible representation of the little group $H_{q}$ for $q$. By unraveling Definition~\ref{definition:4.1}, the induced irreducible representation $\pi_{q, \sigma}:G \rightarrow U(\mathcal{F})$ satisfies, for $b \in \mathbb{R}^4$, $f \in \mathcal{F}$, and $(a, \Lambda) \in G$,
\begin{align}\label{eq:4.9}
    \left[ \pi_{q, \sigma} (b, I) f  \right] ((a, \Lambda )) = f( (-b, I) (a, \Lambda )) = f( (a, \Lambda) (-\Lambda^{-1} b , I)) \nonumber \\
= e^{-i \langle q, \Lambda^{-1}b \rangle} f ((a, \Lambda)) =e^{-i \langle \Lambda q , b \rangle} f( (a, \Lambda)).
\end{align}

Write $p = \Lambda q \in \mathbb{R}^4$. Since $\pi_{q, \sigma} (b, I)$ would represent spacetime translations (cf. Remark~\ref{remark:2.11}), we see that the \textit{four-momentum operators} $P^\mu$ on this representation space, which are by definition the infinitesimal generators of the spacetime translations (cf. \cite{hall}), are given by the following formulae
\begin{subequations}\label{eq:4.10}
\begin{align}
\left [P^0 f \right] (a,\Lambda) &:= \left[ i \frac{\partial}{\partial b^0} \pi_{q, \sigma} (b, I) f \right] (a,\Lambda) =  p_0 f(a,\Lambda) = p^0 f(a,\Lambda) \label{eq:4.10a} \\
\left [P^j f \right] (a,\Lambda) &:= \left[- i \frac{\partial}{\partial b^j} \pi_{q, \sigma} (b, I) f \right] (a,\Lambda) =  -p_j f(a,\Lambda) = p^j f(a,\Lambda). \label{eq:4.10b}
\end{align}
\end{subequations}
which are (unbounded) multiplication operators.

Since $p_\mu p^\mu = \langle \Lambda q, \Lambda q \rangle = q_\mu q^\mu$, the operator $P_\mu P^\mu = (P^0)^2 - (P^1)^2 - (P^2)^2 - (P^3)^2$ acts on the $\pi_{q, \sigma}$-representation space as $f \mapsto (q_\mu q^\mu) f$, the multiplication by the constant $q_\mu q^\mu$. So, we see that all the vectors in the representation space of $\pi_{q,\sigma}$ are eigenvectors of the operator $P_\mu P^\mu$ with the eigenvalue $q_\mu q^\mu$.
Inspired by the famous energy-momentum relation from SR (cf. \cite{zee}), we make the following definition.
\begin{definition}\label{definition:4.5}
The \textit{mass} of the single-particle states associated with the irreducible representation $\pi_{q, \sigma}$ is the constant $M = \sqrt{q_\mu q^\mu}$.
\end{definition}

For the orbits $X_m ^{\pm}$, we have $M = m$ and hence the nonnegative number $m$ represents the mass of the particles associated with the orbit $X_m ^{\pm}$. For this reason, the orbits $X_m ^{\pm}$ are called the \textit{mass shells}. But, for the orbits $Y_m$, $M$ is an imaginary number (and hence there is an ambiguity in the definition of $M$). In \cite{weinberg}, it is stated that there is no known interpretation, in terms of physical states, of the states associated with the orbits $Y_m$.

From now on, we will focus our attention on the representations associated with the orbits $X_m ^\pm$ with $m >0 $, the \textit{massive particles}, leaving the analysis of the orbit $X_0 ^\pm$, the \textit{massless particles}, to a sequel paper.

Let's embark on the job that was set in Remark \ref{remark:4.4} for $X_m ^\pm$ with $m>0$.
\begin{proposition}\label{proposition:4.6}
For $m>0$, the little group $H_{p_m ^\pm}$ for $p_m ^{\pm}$ is $SU(2)\leq SL(2, \mathbb{C})$.
\end{proposition}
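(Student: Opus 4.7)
The plan is to directly unwind the definition of the little group using the explicit formula (2.13a) for the covering map $\kappa$. By definition, $H_{p_m^\pm}$ consists of all $\Lambda \in SL(2,\mathbb{C})$ with $\kappa(\Lambda)\, p_m^\pm = p_m^\pm$, which via (2.13a) translates to $\Lambda\, \widetilde{p_m^\pm}\, \Lambda^\dagger = \widetilde{p_m^\pm}$.

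The first step is to compute $\widetilde{p_m^\pm}$ using (2.3a). Since $p_m^\pm = (\pm m, 0, 0, 0)$, all spatial components vanish and we get $\widetilde{p_m^\pm} = \pm m\, I_2$. So the stabilizer condition reduces to $\pm m\, \Lambda \Lambda^\dagger = \pm m\, I_2$, i.e., $\Lambda\Lambda^\dagger = I_2$. Combined with the standing constraint $\det \Lambda = 1$, this is exactly the defining condition for $\Lambda \in SU(2)$.

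For the reverse inclusion, I would simply note that any $\Lambda \in SU(2)$ satisfies $\Lambda \Lambda^\dagger = I_2$, hence $\Lambda(\pm m\, I_2)\Lambda^\dagger = \pm m\, I_2 = \widetilde{p_m^\pm}$, so $\kappa(\Lambda)$ fixes $p_m^\pm$ by (2.13a). Therefore $H_{p_m^\pm} = SU(2)$.

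There is no real obstacle here; the proof is essentially a one-line consequence of the fact that $\widetilde{p_m^\pm}$ is a scalar multiple of the identity, which in turn follows because $p_m^\pm$ has only a time component. The only minor point worth stating clearly is that the map $\tilde{(\cdot)}$ is injective (it is an $\mathbb{R}$-linear isomorphism onto $H_2$, as noted right after (2.3)), so the identity of $2 \times 2$ matrices $\Lambda\widetilde{p_m^\pm}\Lambda^\dagger = \widetilde{p_m^\pm}$ is genuinely equivalent to $\kappa(\Lambda) p_m^\pm = p_m^\pm$.
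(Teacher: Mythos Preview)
Your proof is correct and follows essentially the same approach as the paper's own proof: both reduce the stabilizer condition via Eq.~(2.13) to $\pm m\,\Lambda\Lambda^\dagger=\pm m\,I_2$, i.e., $\Lambda\Lambda^\dagger=I_2$, which together with $\det\Lambda=1$ characterizes $SU(2)$. Your version is simply more explicit about the reverse inclusion and the injectivity of $\tilde{(\cdot)}$.
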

\begin{proof}
$A \in H_{p_m ^\pm} $ if and only if $A p_m ^\pm = p_m ^\pm$, i.e. by Eq.~(\ref{eq:2.13}), if and only if $ \pm m A A^\dagger = \pm mI_2 $.  
\end{proof}

The irreducible representations of the group $SU(2)$ are well-known to both mathematicians and physicists. But, for later discussions, we need a concrete realization. The following arguments are adapted from \cite{varadarajan}.

Let $\mathcal{V} := \mathbb{C}^2$. Fix $s \in \frac{1}{2} \mathbb{N}_0$ and consider the following vector space
\begin{equation}\label{eq:4.11}
V_s := \Sigma^{2s} (\mathcal{V} )  = \mathcal{V}^{\otimes {2s}} / N^{2s}
\end{equation}
where
\begin{align}\label{eq:4.12}
N^{2s} = \text{span}_{\mathbb{C}} \{ x_1 \otimes \cdots \otimes x_s - x_{\tau(1)} \otimes \cdots \otimes x_{\tau(2s)} \in \mathcal{V}^{\otimes 2s} :\nonumber \\
x_1 , \cdots , x_s \in \mathcal{V}, \hspace{0.1cm} \tau \in S_{2s} \}.
\end{align}

We denote the image of $x_1 \otimes \cdots \otimes x_{2s}$ in the quotient space $V_s$ as $x_1 \cdots x_{2s}$. There is a natural embedding $\Sigma^{2s} (\mathcal{V}) \rightarrow \mathcal{V}^{\otimes 2s}$ given by
\begin{equation}\label{eq:4.13}
x_1 \cdots x_{2s} \mapsto \frac{1}{ (2s)!} \sum_{\tau \in S_{2s}} x_{\tau (1)} \otimes \cdots \otimes x_{\tau(2s)}.
\end{equation}

Let $\langle \cdot , \cdot \rangle$ be the Hermitian inner product on $\mathcal{V} = \mathbb{C}^2$. It extends to a unique inner product $\langle \cdot , \cdot \rangle$ on $\mathcal{V}^{\otimes 2s}$ satisfying
\begin{equation}\label{eq:4.14}
\langle x_1 \otimes \cdots \otimes x_{2s} , y_1 \otimes \cdots \otimes y_{2s} \rangle = \langle x_1 , y_1 \rangle \cdots \langle x_{2s} , y_{2s} \rangle.
\end{equation}

Via the embedding Eq.~(\ref{eq:4.13}), $V_s$ inherits this inner product to become an inner product space. Denote $u = \begin{pmatrix} 1 \\ 0 \end{pmatrix} , v = \begin{pmatrix} 0 \\ 1 \end{pmatrix} \in \mathbb{C}^2$. Then, $V_s$ is of dimension $2s + 1$ with an orthonormal basis give by
\begin{equation}\label{eq:4.15}
\mathcal{B} = \left\{\sqrt{\frac{(2s)!}{ k! (2s -k)! }} u^k v^{2s-k} : 0 \leq k \leq 2s \right\}.
\end{equation}

Given a linear map $T : \mathcal{V} \rightarrow \mathcal{V}$, the map $T^{\otimes 2s} : \mathcal{V}^{\otimes 2s} \rightarrow \mathcal{V}^{\otimes 2s}$ restricts to a well-defined linear map $\Sigma^{2s} (T) : V_s \rightarrow V_s$ defined on the basic elements $x_1 \cdots x_{2s}$ by
\begin{equation}\label{eq:4.16}
\Sigma^{2s} (T) \left( x_1 \cdots x_{2s} \right) = (T x_1 ) \cdots (T x_{2s}),
\end{equation}
which is unitary if $T$ is unitary.

So, the map $\sigma_s : SU(2) \rightarrow U(V_s)$ defined by
\begin{equation}\label{eq:4.17}
\sigma_s (A) = \Sigma^{2s}  (A) 
\end{equation}
is a unitary representation of $SU(2)$ on the ($2s +1$)-dimensional Hilbert space $V_s$, which has a natural extension $\Phi_s : SL(2, \mathbb{C}) \rightarrow GL( V_s)$ given by
\begin{equation}\label{eq:4.18}
\Phi_s (A) = \Sigma^{2s}  (A) .
\end{equation}

To show that the representations $\sigma_s$ are irreducible, we need the following well-known facts about the Lie algebras $\mathfrak{su}(2) \leq \mathfrak{sl}(2, \mathbb{C})$. Recalling the definitions of the Pauli matrices (Eq.~(\ref{eq:2.2})),
\begin{subequations}\label{eq:4.19}
\begin{eqnarray}
J^j = -\frac{i}{2} \tau^j \label{eq:4.19a} \in \mathfrak{su}(2) \\
K^j = \frac{1}{2} \tau^j \label{eq:4.19b} \in \mathfrak{sl}(2, \mathbb{C})
\end{eqnarray}
\end{subequations}
are respectively called the \textit{angular momentum} and the \textit{boosting} along the $j$-th axis. Thery are $\mathbb{R}$-linearly independent, and
\begin{subequations}\label{eq:4.20}
\begin{eqnarray}
\mathfrak{su}(2) &=& \text{span}_{\mathbb{R}} (J^1, J^2 , J^3 ) \label{eq:4.20a} \\
\mathfrak{sl}(2, \mathbb{C}) &=& \text{span}_{\mathbb{R}} (J^1, J^2 , J^3 , K^1 , K^2 , K^3) \label{eq:4.20b}.
\end{eqnarray}
\end{subequations}

Now, returning to $\sigma_s$ and $\Phi_s$, observe that for $B \in \mathfrak{sl}(2, \mathbb{C})$, one can prove, by carrying out a differentiation, that
\begin{align}\label{eq:4.21}
(\Phi_{s})_* (B) ( x_1 \cdots x_{2s} ) &= \nonumber \\
(Bx_1) x_2 & \cdots x_{2s} + x_1 (Bx_2) x_3 \cdots x_{2s} + \cdots x_1 \cdots x_{2s -1} (B x_{2s}).
\end{align}
 
Define $\hat{J}^k := i (\sigma_s)_* (J^k) = i (\Phi_s )_* (J^k)$. Using Eq.~(\ref{eq:4.21}), we see
\begin{subequations}\label{eq:4.22}
\begin{align}
\hat{J}^1 u^k v^{2s - k} &= \frac{k}{2} u^{k-1} v^{2s - k +1} + \frac{2s -k }{2} u^{k+1} v^{2s - k -1} \label{eq:4.22a} \\
\hat{J}^2 u^k v^{2s - k} &= \frac{i k}{2} u^{k-1} v^{2s - k +1} - i \frac{2s - k }{2} u^{k+1} v^{2s - k - 1} \label{eq:4.22b} \\
\hat{J}^3 u^k v^{2s - k} &= (k - s) u^{k} v^{2s - k} \label{eq:4.22c}
\end{align}
\end{subequations}
and hence
\begin{subequations}\label{eq:4.23}
\begin{align}
\left( \hat{J}^1 + i \hat{J}^2 \right) u^{2s} &= \left( \hat{J}^1 - i \hat{J}^2 \right) v^{2s} = 0 \\
\left( \hat{J}^1 + i \hat{J}^2 \right) u^{k} v^{2s - k} &= (2s - k) u^{k+1} v^{2s - k -1}, \quad 0 \leq k \leq 2s -1 \\
\left( \hat{J}^1 - i \hat{J}^2 \right) u^{k} v^{2s - k} &= k u^{k-1} v^{2s - k + 1} \quad 1 \leq k \leq 2s.
\end{align}
\end{subequations}

\begin{theorem}\label{theorem:4.7}
The representations $\{ \sigma_s : s \in \frac{1}{2} \mathbb{N}_0 \}$ are irreducible, distinct, and exhaust all irreducible representations of $SU(2)$. Note that the orthonormal basis Eq.~(\ref{eq:4.15}) consists of the eigenvectors of the operator $\hat{J}^3$ whose eigenvalues are given by $k-s$ for $0 \leq k \leq 2s$, respectively.
\end{theorem}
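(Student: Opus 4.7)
The plan is to verify the three assertions---irreducibility, pairwise inequivalence, and exhaustion---in sequence; the eigenvalue statement about $\hat{J}^3$ is nothing but a restatement of Eq.~(\ref{eq:4.22c}) and so requires no further argument.

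For irreducibility of $\sigma_s$, I would let $W \subseteq V_s$ be any nonzero $\sigma_s(SU(2))$-invariant subspace. Since $W$ is preserved by the one-parameter subgroups $\exp(tJ^k)$, it is preserved by each operator $(\sigma_s)_*(J^k)$, and in particular by $\hat{J}^3$. By Eq.~(\ref{eq:4.22c}), $\hat{J}^3$ acts diagonally on the basis $\mathcal{B}$ of Eq.~(\ref{eq:4.15}) with the $2s+1$ pairwise distinct eigenvalues $k-s$, so $W$ is automatically spanned by the subset of $\mathcal{B}$ it contains. Picking any $u^k v^{2s-k} \in W$ and applying the raising/lowering formulae in Eq.~(\ref{eq:4.23}), whose structure constants $(2s-k)$ and $k$ are nonzero throughout the admissible range, one reaches every basis element by iteration, whence $W = V_s$. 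Pairwise inequivalence of the family $\{\sigma_s\}_{s\in \frac{1}{2}\mathbb{N}_0}$ is then immediate from $\dim V_s = 2s+1$.

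For exhaustion, let $\pi : SU(2) \to U(\mathcal{H})$ be an arbitrary irreducible unitary representation. Compactness of $SU(2)$ forces $\dim \mathcal{H} < \infty$. Complexification extends $\pi_*$ to an action of $\mathfrak{sl}(2,\mathbb{C})$ on $\mathcal{H}$, and the Hermitian operator $A := i \pi_*(J^3)$ (Hermitian because $J^3$ is skew-Hermitian and $\pi$ is unitary) admits a largest real eigenvalue $\lambda$ with eigenvector $v_0 \ne 0$. A direct computation using Eq.~(\ref{eq:4.19}) gives the commutation relation $[\hat{J}^3,\hat{J}^\pm] = \pm \hat{J}^\pm$, so $\hat{J}^+ v_0$ is an $A$-eigenvector with eigenvalue $\lambda+1$ and therefore vanishes by maximality. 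Iterating $\hat{J}^-$ produces a descending sequence of $A$-eigenvectors with eigenvalues $\lambda, \lambda-1,\ldots$ that must terminate in finite dimension; the chain spans an invariant subspace, which by irreducibility equals $\mathcal{H}$. The commutation relation $[\hat{J}^+,\hat{J}^-] = 2\hat{J}^3$ combined with $\hat{J}^+ v_0 = 0$ yields the recursion $\hat{J}^+ (\hat{J}^-)^k v_0 = k(2\lambda-k+1)(\hat{J}^-)^{k-1} v_0$; for the chain to close, $2\lambda$ must be a nonnegative integer, so $\lambda \in \tfrac{1}{2}\mathbb{N}_0$. Matching matrix coefficients of $\hat{J}^3$ and $\hat{J}^\pm$ against Eqs.~(\ref{eq:4.22})--(\ref{eq:4.23}) then exhibits a unitary intertwiner $\mathcal{H} \cong V_\lambda$ implementing $\pi \cong \sigma_\lambda$.

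The main obstacle is the exhaustion step: one needs the standard highest-weight argument for $\mathfrak{sl}_2$ to force $\lambda$ into $\tfrac{1}{2}\mathbb{N}_0$ (not merely into $\mathbb{R}_{\geq 0}$) and then to identify $\mathcal{H}$ with $V_\lambda$ as a representation, not merely as a vector space. Everything else---irreducibility, distinctness, and the eigenvalue statement---is a quick consequence of the explicit ladder-operator formulae already recorded in Eqs.~(\ref{eq:4.22})--(\ref{eq:4.23}).
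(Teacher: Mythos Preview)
Your proposal is correct and follows essentially the same approach as the paper. The paper's proof is a terse pointer to the standard ladder-operator argument (citing Hall's textbook), invoking Eq.~(\ref{eq:4.23}) together with compactness of $SU(2)$ for finite-dimensionality and Eq.~(\ref{eq:4.22c}) for the eigenvalue claim; you have simply unpacked that argument in full detail, which is entirely appropriate.
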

\begin{proof}
This follows from Eq.~(\ref{eq:4.23}) and a usual argument involving the "ladder operators" $\hat{J}^1 \pm \hat{J}^2$ (cf. \cite{hall}, pp.371--375 and Proposition~16.39). Note that since $SU(2)$ is compact, every irreducible representation of it is finite dimensional (cf. \cite{folland2015}). The statement about the orthonormal basis follows from Eq.~(\ref{eq:4.22c}).  
\end{proof}

So, if we define
\begin{equation}\label{eq:4.24}
\pi_{m,s} ^{\pm} := \pi_{p_m ^{\pm} , \sigma_s}
\end{equation}
following the procedure of Remark~\ref{remark:4.4}, we see from Theorem~\ref{theorem:4.2} that each $\pi_{m,s} ^{\pm}$ is distinct for each value of $m >0, s = 0 , \frac{1}{2}, 1, \frac{3}{2}, \cdots $, and the $\pm$ signs, and they exhaust all irreducible representations associated with the orbits $X_m ^{\pm}$.

\begin{remark}\label{remark:4.8}
We know from non-relativistic quantum mechanics that if a particle is described by the states in $L^2 (\mathbb{R}^3) \otimes V_s$ with $V_s$ carrying an irreducible $SU(2)$-representation given in Theorem~\ref{theorem:4.7}, then the number $s$ is called the \textit{spin} of the particle (cf. Ch.~17 of \cite{hall}). We will see a direct link between this tensor product space and the representation space of $\pi_{m, s} ^\pm$ later (see the rightmost column of Table~\ref{tab:2}).
\end{remark}

This remark suggests the following definition.

\begin{definition}\label{definition:4.9}
The value $s$ for the irreducible representation $\pi_{m, s} ^{\pm}$ is called the \textit{spin} of the single-particle states associated with this representation.
\end{definition}

The following is the conclusion of this section.

\begin{theorem}\label{theorem:4.10}
The irreducible representations associated with the orbits $X_m ^{\pm}$ with $m > 0$ are classified by \textbf{mass} and \textbf{spin}, i.e., they are precisely
\begin{equation}\label{eq:4.25}
\left\{\pi_{m,s} ^{\pm} : m >0 , s = 0 , \frac{1}{2},  1,  \frac{3}{2}, \cdots  \right\}
\end{equation}
and they descend to projective representations of the group $\mathbb{R}^4 \ltimes SO^\uparrow (1,3)$ as in Theorem~\ref{theorem:2.9} (cf. Remark~\ref{remark:2.11}). In fact, $\pi_{m, s} ^\pm (-I) = (-1)^{2s}$.
\end{theorem}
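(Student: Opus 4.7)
The classification part of the theorem is essentially an unpacking of the Mackey machine (Theorem~\ref{theorem:4.2}) applied to $G = \mathbb{R}^4 \ltimes SL(2, \mathbb{C})$, with all necessary ingredients already in place. The plan is to appeal to the three clauses of Theorem~\ref{theorem:4.2} in sequence: clause~(i) gives irreducibility of each $\pi_{m,s}^{\pm}$ because $\sigma_s$ is an irreducible representation of the little group $H_{p_m^{\pm}} = SU(2)$ (Proposition~\ref{proposition:4.6} and Theorem~\ref{theorem:4.7}); clause~(iii) together with the fact that $\{p_m^{\pm}: m>0\}$ are distinct orbit representatives (Proposition~\ref{proposition:4.3}) and the pairwise inequivalence of the $\sigma_s$ (Theorem~\ref{theorem:4.7}) gives that the family in Eq.~(\ref{eq:4.25}) consists of pairwise inequivalent representations; clause~(ii) together with Theorem~\ref{theorem:4.7} (which exhausts $\widehat{SU(2)}$) gives that every irreducible representation of $G$ associated with $X_m^{\pm}$ appears in the list.

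For the scalar identity $\pi_{m,s}^{\pm}(-I) = (-1)^{2s}$, the key observation is that $(0,-I)$ is central in $G$. Indeed, by Eq.~(\ref{eq:2.13}) one has $\kappa(-I) = I_4$, so for any $(a,\Lambda) \in G$,
\begin{equation*}
(0,-I)(a,\Lambda) = (a,-\Lambda) = (a,\Lambda)(0,-I).
\end{equation*}
Moreover $-I \in SU(2) = H_{p_m^{\pm}} \subset G_{p_m^{\pm}}$. Using the realization of Definition~\ref{definition:4.1} for $\pi_{m,s}^{\pm} = \operatorname{Ind}_{G_{p_m^{\pm}}}^G(\rho_{p_m^{\pm},\sigma_s})$, I would take an arbitrary $f$ in the induced representation space, evaluate
\begin{equation*}
[\pi_{m,s}^{\pm}(0,-I)f](y) = f((0,-I)^{-1}y) = f(y(0,-I)) = \rho_{p_m^{\pm},\sigma_s}(0,-I)^{-1}f(y),
\end{equation*}
where the second equality uses centrality and the last uses the defining covariance property in $\mathcal{F}_0$. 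Since $\rho_{p_m^{\pm},\sigma_s}(0,-I) = \sigma_s(-I) = \Sigma^{2s}(-I)$, the definition in Eq.~(\ref{eq:4.16}) gives $\Sigma^{2s}(-I)(x_1 \cdots x_{2s}) = (-1)^{2s}(x_1 \cdots x_{2s})$, so $\sigma_s(-I) = (-1)^{2s} I_{V_s}$. Consequently $\pi_{m,s}^{\pm}(0,-I) = (-1)^{2s} I_{\mathcal{F}}$.

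Finally, to establish descent to a projective representation of $\mathbb{R}^4 \ltimes SO^{\uparrow}(1,3)$, I would recall that the kernel of the universal covering $1 \times \kappa$ is precisely $\{(0,I),(0,-I)\}$. Since we have just shown that $\pi_{m,s}^{\pm}$ sends this kernel into the scalar subgroup $\mathbb{T} \subset U(\mathcal{F})$, the composition with the quotient $U(\mathcal{F}) \to PU(\mathcal{F})$ factors through $\mathbb{R}^4 \ltimes SO^{\uparrow}(1,3)$, yielding the desired projective representation in the sense of Theorem~\ref{theorem:2.9}. No genuine obstacle is expected: every non-trivial step has been arranged by the preparatory development, and the only mild point to watch is verifying centrality of $(0,-I)$ via the explicit action Eq.~(\ref{eq:2.13}).
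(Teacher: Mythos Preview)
Your proposal is correct and follows essentially the same route as the paper: both invoke the Mackey machine (Theorem~\ref{theorem:4.2}) for the classification, and both compute $\pi_{m,s}^{\pm}(0,-I)$ by using centrality of $(0,-I)$ together with the covariance condition in Definition~\ref{definition:4.1}. The one small difference is in the evaluation of $\sigma_s(-I)$: you read it off directly from Eq.~(\ref{eq:4.16}) as $\Sigma^{2s}(-I) = (-1)^{2s}$, whereas the paper obtains it via the exponential $\sigma_s(-I) = \sigma_s(e^{2\pi J^3}) = e^{-2\pi i \hat{J}^3} = (-1)^{2s}$ using the eigenvalue content of Theorem~\ref{theorem:4.7}; your argument is slightly more elementary here, but both are equally valid.
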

\begin{proof}
The first assertion is just the summation of the preceding discussions. For the second statement, observe that, by unravelling Definition~\ref{definition:4.1},
\begin{align*}
\left [ \pi_{m, s} ^\pm (0, -I) f \right] ((a,\Lambda)) =  f( (0, -I) (a,\Lambda) )   f (  (a,\Lambda) (0, -I) ) \\  = \sigma_s (-I) f(a,\Lambda) 
 = (-1 )^{2s} f(a, \Lambda),
\end{align*}
where, in the last equality, we used the identity
\begin{equation*}
\sigma_s ( -I) = \sigma_s ( e^{2 \pi J^3}) = e^{2 \pi (\sigma_s)_* (J^3) } = e^{- 2  \pi i \hat{J}^3} = (-1)^{2s}
\end{equation*}
which follows from Theorem~\ref{theorem:4.7}.  
\end{proof}

The representations $\pi_{m,s} ^-$ associated with $X_m ^-$ do not seem to represent realistic particles since the associated particles have negative energy (cf. Eq.~(\ref{eq:4.10a})). But, by introducing the concept of quantum fields, we can interpret them as representing \textit{antiparticle} states (cf. \cite{folland2008, weinberg}).

\section{A bundle theoretic description of induced representation}\label{sec:5}

In this section, we develop a relevant mathematical theory that will be needed in the following discussions on RQI. We assume that the readers are familiar with the basic notions of vector bundles such as sections, metrics, subbundles, and tensor products, etc. These materials can be found in \cite{lee} and \cite{tu}. All the pre-induced representations in this section will be assumed to be smooth and finite-dimensional, and all the bundles, sections, and bundle homomorphisms appearing in this section will mean smooth ones unless stated otherwise.

\paragraph{Section spaces}

\hfill

Let $E \xrightarrow{\xi} M$ be a complex vector bundle. We denote its smooth and continuous section spaces by $C^\infty (M , E )$ and $C (M, E)$, respectively. We define the \textit{Borel-section space of $E$} as
\begin{equation}\label{eq:5.1}
\mathcal{B} (M, E) := \{ \psi :M \rightarrow E | \text{$\psi$ is a Borel map and } \xi \circ \psi (x) = x, \hspace{0.1cm} \forall x \in M \}.
\end{equation}

It is an easy exercise to check that $\mathcal{B}(M,E)$ becomes a vector space with respect to the pointwise addition and scalar multiplication. In fact, it is a module over $\mathcal{B}(M)$, the ring of Borel functions on $M$.

Let $\mu$ be a positive Borel measure on $M$ and $g$ be an Hermitian metric on $E$. We define the \textit{$L^2$-section space of $E$} as
\begin{equation}\label{eq:5.2}
L^2 (M, E ; \mu , g) := \left\{\psi \in \mathcal{B} (M, E) : \int_M g( \psi, \psi ) \mu < \infty \right\}.
\end{equation}

\begin{proposition}\label{proposition:5.1}
Upon identifying almost everwhere equal functions, $L^2 (M, E ; \mu , g)$ becomes a Hilbert space with the inner product
\begin{equation}\label{eq:5.3}
\langle \psi , \phi \rangle = \int_M g(\psi, \phi) d\mu.
\end{equation}
\end{proposition}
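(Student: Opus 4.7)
The plan is to verify the algebraic structure first and then reduce completeness to the classical $L^2$ theory via a measurable trivialization of $E$.

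First I would check that the integrand in Eq.~(\ref{eq:5.3}) is well-defined and measurable. Since $g$ is smooth and $\psi, \phi$ are Borel sections, the pointwise pairing $x \mapsto g_x(\psi(x), \phi(x))$ is a Borel function on $M$ (it is the composition of the Borel map $x \mapsto (\psi(x), \phi(x))$ into the Whitney sum $E \oplus E$ with the smooth map $g : E \oplus E \to \mathbb{C}$). The fiberwise Cauchy--Schwarz inequality $|g(\psi,\phi)| \leq g(\psi,\psi)^{1/2} g(\phi,\phi)^{1/2}$ together with $2|ab| \leq a^2 + b^2$ then shows that $L^2(M, E; \mu, g)$ is closed under sums and scalar multiplication, and that the integral defining $\langle\psi,\phi\rangle$ converges absolutely. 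Sesquilinearity, conjugate symmetry, and positivity are immediate from the corresponding fiberwise properties of $g$; the null space $\{\psi : \int g(\psi,\psi)\,d\mu = 0\}$ consists exactly of sections that vanish $\mu$-a.e., so after quotienting by this subspace the form becomes a genuine inner product.

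The key step is completeness, which I would reduce to ordinary $L^2$-completeness by a measurable trivialization argument. Since $E \to M$ is a finite-rank smooth bundle and $M$ is second countable, one can choose a countable open cover $\{U_i\}$ over which $E$ is trivializable, together with a Borel partition $\{V_i\}$ of $M$ refining it (take $V_i = U_i \setminus \bigcup_{j<i} U_j$). Over each $V_i$, a Borel trivialization $\Phi_i : E|_{V_i} \to V_i \times \mathbb{C}^k$ is available, and applying Gram--Schmidt fiberwise to the pushforward of $g$ yields a Borel orthonormal frame so that sections of $E|_{V_i}$ correspond isometrically to functions in $L^2(V_i, \mathbb{C}^k; \mu)$. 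Collecting these trivializations gives a unitary isomorphism
\begin{equation*}
L^2(M, E; \mu, g) \;\cong\; \bigoplus_{i} L^2(V_i, \mathbb{C}^k; \mu),
\end{equation*}
and the right-hand side is a Hilbert space by the standard Riesz--Fischer theorem together with the completeness of $\ell^2$-direct sums of Hilbert spaces.

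The main obstacle I anticipate is the measurability of the trivializing frame: one needs Borel (not merely continuous) local orthonormal frames so that Borel sections are identified with Borel $\mathbb{C}^k$-valued functions. This is handled by taking smooth local frames on each $U_i$ and applying the Gram--Schmidt process with respect to $g$ (which is smooth), then restricting to the Borel pieces $V_i$; the ambiguity coming from different choices of $U_i$ covering a given point does not matter because $\{V_i\}$ is a disjoint decomposition. Once this measurable reduction is in place, every other step, including the identification of almost everywhere equal sections as the null space of the pre-inner product, is routine.
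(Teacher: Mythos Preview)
Your proposal is correct and complete. The paper itself omits the proof entirely (it simply states ``We omit the proof''), so your argument supplies substantially more than what appears there; the measurable-trivialization reduction to ordinary $L^2$ completeness is the standard route and works as you describe, with the finite-rank and smoothness hypotheses on $E$ and $g$ being explicitly in force throughout Sect.~\ref{sec:5}.
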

\begin{proof}
We omit the proof.
\end{proof}

\paragraph{Hermitian \texorpdfstring{$G$}{TEXT}-bundle}

\hfill
\begin{definition}\label{definition:5.2}
Let $E \xrightarrow{\xi} M$ be a vector bundle. Let $G$ be a Lie group. Suppose there are $G$-actions $\lambda : G \times E \rightarrow E$ and $l : G \times M \rightarrow M$ such that for each $s \in G$, the following diagram commutes
\begin{equation}\label{eq:5.4}
\begin{tikzcd}[baseline=(current  bounding  box.center)]
E  \arrow[r, "\lambda(s)"] \arrow{d}[swap]{\xi}
&  E \arrow[d, "\xi"] \\
M \arrow[r, "l_s"]  & M
\end{tikzcd}.
\end{equation}

If $E$ is endowed with a metric $g$ with respect to which each $\lambda(s)$ becomes an isometric bundle isomorphism, then we call the triple $(\xi,  g, \lambda)$ an \textit{Hermitian $G$-bundle}. When the base space $M$ is understood, we often write it simply as $(E,  g, \lambda)$ and call it an \textit{Hermitian $G$-bundle over $M$}.

Given two Hermitian $G$-bundles $(E, g, \lambda)$ and $(E', g', \lambda')$ over $M$, $G$-equivariant isometric homomorphisms from $E$ into $E'$ over $M$ are called \textit{Hermitian $G$-bundle homomorphisms from $(E, g, \lambda)$ into $(E', g', \lambda')$ over $M$}. 
\end{definition}

Hermitian $G$-bundles are related to induced representation by the following construction.

\begin{definition}\label{definition:5.3}
Let $(M, \mu)$ be a (left) $G$-invariant measure space and $(E,g,\lambda)$ be an Hermitian $G$-bundle over $M$. Then, the map $U : G \rightarrow U\Big(L^2 (M,E;\mu, g)\Big)$ given by
\begin{equation}\label{eq:5.5}
U (s) f = \lambda(s) \circ f \circ (l_s)^{-1}
\end{equation}
is easily seen to be a (strongly continuous) unitary representation. This representation is called the \textit{induced representation associated with $(E,g,\lambda ; \mu)$}.
\end{definition}

As we shall see, these representations have a close relationship with the induced representation introduced in Definition~\ref{definition:4.1}.

\begin{proposition}\label{proposition:5.4}
Let $(M, \mu)$ be a $G$-invariant measure space and let $(E,g,\lambda) \xrightarrow{\alpha} (E',g', \lambda')$ be an Hermitian $G$-bundle isomorphism over $M$. Then, the map $\alpha: L^2( M, E;  \mu, g) \xrightarrow{\alpha \circ ( \hspace{0.05cm} \cdot \hspace{0.05cm})} L^2 (M, E'; \mu' , g')$ is a Hilbert space isomorphism and gives a unitary equivalence between the two induced representations. I.e., if we denote the associated induced representations by $U$ and $U'$ respectively, then
\begin{equation}\label{eq:5.6}
\alpha U (s) = U' (s) \alpha
\end{equation}
for all $s \in G$.
\end{proposition}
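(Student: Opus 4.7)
The plan is to verify four things in turn: that $\psi \mapsto \alpha \circ \psi$ sends Borel sections to Borel sections, that it preserves the $L^2$ inner product (hence is a well-defined isometry onto its image), that it has a two-sided inverse $\psi' \mapsto \alpha^{-1} \circ \psi'$ (hence is a Hilbert space isomorphism), and finally that it intertwines $U$ with $U'$. Each of the three defining ingredients of an Hermitian $G$-bundle isomorphism---fiberwise linear isomorphism, fiberwise isometry for $(g, g')$, and $G$-equivariance for $(\lambda, \lambda')$---will be used for exactly one of these verifications, so the argument is essentially a bookkeeping exercise.

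First I would observe that, for $\psi \in L^2(M, E; \mu, g)$, the composition $\alpha \circ \psi$ is a Borel section of $E'$ since $\alpha$ is smooth (hence Borel) and commutes with the bundle projections by the diagram in Definition~\ref{definition:5.2}. Because $\alpha$ is fiberwise isometric, $g'(\alpha \circ \psi, \alpha \circ \phi) = g(\psi, \phi)$ pointwise, so integration against $\mu$ gives $\langle \alpha \circ \psi, \alpha \circ \phi \rangle_{L^2(E')} = \langle \psi, \phi \rangle_{L^2(E)}$. In particular $\alpha \circ (\,\cdot\,)$ is a well-defined linear isometry from $L^2(M, E; \mu, g)$ into $L^2(M, E'; \mu, g')$. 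Invertibility then follows by applying the same reasoning to the bundle inverse $\alpha^{-1} : E' \to E$, which is itself an Hermitian $G$-bundle isomorphism: the fiberwise isometry property for $\alpha^{-1}$ is obtained by substituting $u = \alpha^{-1} u'$, $v = \alpha^{-1} v'$ in $g'(\alpha u, \alpha v) = g(u, v)$, and the $G$-equivariance $\lambda(s) \circ \alpha^{-1} = \alpha^{-1} \circ \lambda'(s)$ by inverting $\alpha \circ \lambda(s) = \lambda'(s) \circ \alpha$. The resulting map $\alpha^{-1} \circ (\,\cdot\,)$ is the two-sided inverse of $\alpha \circ (\,\cdot\,)$, establishing the claimed Hilbert space isomorphism.

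Finally, for the intertwining relation, I would substitute the definition~(\ref{eq:5.5}) directly. For $s \in G$ and $\psi \in L^2(M, E; \mu, g)$,
\[
\alpha \circ \bigl(U(s) \psi\bigr) = \alpha \circ \lambda(s) \circ \psi \circ l_s^{-1} = \lambda'(s) \circ \alpha \circ \psi \circ l_s^{-1} = U'(s)\bigl(\alpha \circ \psi\bigr),
\]
where the middle equality uses the $G$-equivariance of $\alpha$ and the last is just the definition of $U'(s)$ applied to the section $\alpha \circ \psi$. The only possible obstacle is a trivial bookkeeping one---confirming that $\alpha^{-1}$ inherits the Hermitian $G$-bundle structure automatically---and there is no analytic subtlety, since sections are only required to be Borel and square-integrable while $\alpha$ is smooth.
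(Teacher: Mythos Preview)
Your proof is correct and follows essentially the same approach as the paper's: the intertwining computation is identical, and the paper simply compresses your first three steps into the single sentence ``$\alpha$ is a Hilbert space isomorphism because it is isometric on the level of the bundles.'' Your version is a faithful expansion of that sentence with the bookkeeping made explicit.
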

\begin{proof}
$\alpha$ is a Hilbert space isomorphism because it is isometric on the level of the bundles. Observe that, for $\psi \in L^2 (M, E ; \mu , g)$,
\begin{align*}
\alpha U (s) \psi = \alpha \circ (\lambda(s) \circ \psi \circ (l_s)^{-1} ) = \lambda'(s) \circ (\alpha \circ \psi) \circ (l_s)^{-1} \\
= U' (s)  \alpha \psi
\end{align*}
due to the $G$-equivariance of $\alpha$. 
\end{proof}

The following theorem is the main result of this section.

\begin{theorem}\label{theorem:5.5}
Let $G = N \ltimes H$ and fix $\nu \in \hat{N}$. Suppose $\sigma$ is a unitary representation of $H_\nu $ on the Hilbert space $( \mathcal{H}_\sigma , \langle \cdot , \cdot \rangle_\sigma)$ that extends to a representation $\Phi : H \rightarrow GL( \mathcal{H}_\sigma)$, $L : H/H_\nu \rightarrow H$ is a global section, and there is an $H$-invariant measure $\mu$ on $H/H_\nu$.

Define a group element
\begin{equation}\label{eq:5.7}
W_L (x, yH) := L(xyH)^{-1} x L(yH) \in H
\end{equation}
which will be called the \textbf{Wigner transformation} and consider the two Hermitian $G$-bundles in Table~\ref{tab:1} and their associated induced representations $U$ and $U_L$. Then,
\begin{equation}\label{eq:5.8}
\textup{Ind}_{G_\nu} ^G \nu \sigma \cong U \cong U_L
\end{equation}
and the map
\begin{equation}\label{eq:5.9}
\begin{tikzcd}[baseline=(current  bounding  box.center), column sep=1.5em]
    \alpha: E_{\sigma} \arrow ["{(xH_\nu,v) \mapsto (xH_\nu , \Phi \big(L(xH_\nu) ^{-1} \big)v)}"]{rrrrrr} \arrow{drrr} 
 & & & & & & E_{L, \sigma} \arrow{dlll}
\\
  & & &H/H_\nu& & &
\end{tikzcd}
\end{equation}
is an Hermitian $G$-bundle isomorphism that intertwines the structures listed in Table~\ref{tab:1}.

\begin{table}[h]

\caption{The structures of the perception bundle and boosting bundle}
\label{tab:1}

\centering
\begin{tabular}{|m{1.5cm}|m{5.2cm}|m{7cm}|}
\hline\noalign{\smallskip}
  & $E_\sigma$  (The perception bundle) & $E_{L, \sigma}$  (The boosting bundle) \\
\noalign{\smallskip}\hline\noalign{\smallskip}
Bundle  &  $H/H_\nu \times \mathcal{H}_\sigma$  &  $H/H_\nu \times \mathcal{H}_\sigma$ \\
\noalign{\smallskip}\hline\noalign{\smallskip}
Metric &   $h \Big( (xH_\nu , v) , (xH_\nu, w) \Big)$ \newline  $= \langle v , \Phi(x)^{\dagger -1} \Phi(x)^{-1} w \rangle_\sigma $  &   $h_L \Big( (xH_\nu , v) , (xH_\nu ,w) \Big)= \langle v , w \rangle_\sigma$ \\
\noalign{\smallskip}\hline\noalign{\smallskip}
Action &  $\lambda(nh) (yH_\nu , v)= $\newline$  \Big( h yH_\nu  , \nu \big( (hy)^{-1} n h y \big) \Phi (h) v \Big)$  &    $\lambda_{L} (nh) (yH_\nu , v)=  $ \newline $\Big(h yH_\nu , \nu \big( (hy)^{-1} nhy \big) \sigma \big( W_L(h, yH_\nu)\big) v \Big)$ \\
\noalign{\smallskip}\hline\noalign{\smallskip}
Space & $\mathcal{H}_{U} := L^2 \Big( H/H_\nu , E_\sigma ; \mu , h \Big)$ & $\mathcal{H}_{U_L} = L^2 ( H/H_\nu  ; \mu ) \otimes \mathcal{H}_\sigma$ \\
\noalign{\smallskip}\hline\noalign{\smallskip}
$\textup{Ind}_{G_\nu} ^G \nu \sigma$ & $ U(nh) \phi = \lambda(nh) \circ \phi \circ (l_h)^{-1}$ & $U_L (nh)\psi = \lambda_{L} (nh) \circ \psi \circ (l_h)^{-1}$ \\
\noalign{\smallskip}\hline
\end{tabular}

\end{table}
\end{theorem}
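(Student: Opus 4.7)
My strategy is to reduce the two claimed equivalences $\textup{Ind}_{G_\nu}^G \nu\sigma \cong U \cong U_L$ to two independent constructions: (a) an Hermitian $G$-bundle isomorphism $\alpha : E_\sigma \to E_{L,\sigma}$ exactly as written in Eq.~(\ref{eq:5.9}), which via Proposition~\ref{proposition:5.4} immediately delivers $U \cong U_L$; and (b) an explicit Hilbert-space unitary $T : \mathcal{F} \to \mathcal{H}_{U_L}$ obtained by evaluating the abstract equivariant functions of Definition~\ref{definition:4.1} along the section $L$.

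\textbf{Step 1: the bundle isomorphism $\alpha$.} The map $\alpha(xH_\nu, v) = (xH_\nu, \Phi(L(xH_\nu))^{-1}v)$ covers the identity on $H/H_\nu$ and is fiberwise linear and invertible, so it is a bundle isomorphism. For isometry, I evaluate the coset-invariant formula for $h$ (whose well-definedness uses the unitarity of $\sigma$ on $H_\nu$) at the canonical representative $x = L(xH_\nu)$; this gives $\langle v,\Phi(L(xH_\nu))^{\dagger -1}\Phi(L(xH_\nu))^{-1}w\rangle_\sigma$, which is exactly $h_L(\alpha(\cdot),\alpha(\cdot))$. Since the $N$-factor in both $\lambda$ and $\lambda_L$ contributes only the scalar $\nu(\cdot)$, $G$-equivariance of $\alpha$ reduces to compatibility on $H$, which in turn boils down to the identity
\begin{equation*}
\Phi\bigl(L(hyH_\nu)\bigr)^{-1}\Phi(h) \;=\; \sigma\bigl(W_L(h,yH_\nu)\bigr)\,\Phi\bigl(L(yH_\nu)\bigr)^{-1}.
\end{equation*}
This is immediate from the definition $W_L(h,yH_\nu) = L(hyH_\nu)^{-1} h L(yH_\nu) \in H_\nu$ together with the hypothesis that $\Phi$ restricts to $\sigma$ on $H_\nu$. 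Proposition~\ref{proposition:5.4} then yields $U \cong U_L$.

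\textbf{Step 2: the unitary $T$.} Define $T : \mathcal{F} \to \mathcal{H}_{U_L}$ by $(Tf)(yH_\nu) := f(L(yH_\nu))$. The equivariance axiom $f(gh') = \nu\sigma(h')^{-1}f(g)$ and the fact that every $g \in G$ admits a unique decomposition $g = n \cdot L(hH_\nu) \cdot h_0$ with $n \in N$, $h_0 \in H_\nu$ and $h$ the $H$-part of $g$ show that $f$ is completely determined by its values on $L(H/H_\nu)$, giving bijectivity of $T$; unitarity is automatic because $G/G_\nu$ is identified with $H/H_\nu$ and the $H$-invariant measure $\mu$ represents the invariant measure on $G/G_\nu$. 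The nontrivial step is intertwining. Writing $(nh)^{-1} L(yH_\nu) \in G$ in $N \ltimes H$ form and pushing the $N$-factor to the right across $L(h^{-1}yH_\nu)$ gives the decomposition
\begin{equation*}
(nh)^{-1} L(yH_\nu) \;=\; L(h^{-1}yH_\nu)\,\cdot\,\bigl(L(h^{-1}yH_\nu)^{-1}(h^{-1}n^{-1}h)L(h^{-1}yH_\nu)\bigr)\,\cdot\, W_L(h^{-1},yH_\nu),
\end{equation*}
whose last two factors lie in $N \ltimes H_\nu = G_\nu$. Applying the equivariance of $f$, using the identity $W_L(h^{-1},yH_\nu)^{-1} = W_L(h, h^{-1}yH_\nu)$, and invoking $H_\nu$-invariance of $\nu$ to reconcile the coefficient $\nu(L(h^{-1}yH_\nu)^{-1}(h^{-1}nh)L(h^{-1}yH_\nu))$ with $\nu((hy)^{-1}nhy)$ at $y = L(h^{-1}yH_\nu)$, one matches the result term-by-term with the formula for $U_L(nh)\psi$ read off from Table~\ref{tab:1}.

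\textbf{Main obstacle.} The real difficulty is purely algebraic bookkeeping in the semidirect product: one has to move $N$-elements across $H$-elements via the conjugation action, keep track of which factor of $H$ is playing the role of a coset representative, and reconcile the abstract equivariance cocycle of $\mathcal{F}$ with the explicit Wigner and $\nu$-cocycles in Table~\ref{tab:1}. Once the key identity $W_L(h^{-1},yH_\nu)^{-1} = W_L(h,h^{-1}yH_\nu)$ and the decomposition above are in hand, every remaining step is a routine verification.
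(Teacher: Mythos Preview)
Your proof is correct, but it proceeds along a different route from the paper's.

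The paper does not verify the map $\alpha$ or the intertwiner $T$ directly. Instead, it interposes a third object, the \emph{primitive bundle} $\mathcal{E}_\sigma = H \times_\sigma \mathcal{H}_\sigma$ (an associated bundle in the sense of principal fibre bundles), and shows separately that (i) $E_\sigma \cong \mathcal{E}_\sigma$ via $[x,v] \mapsto (xH_\nu,\Phi(x)v)$ (this uses the extension $\Phi$), (ii) $E_{L,\sigma} \cong \mathcal{E}_\sigma$ via $(xH_\nu,v) \mapsto [L(xH_\nu),v]$ (this uses the section $L$), and (iii) the abstract space $\mathcal{F}$ of Definition~\ref{definition:4.1} is naturally identified with the $L^2$-sections of $\mathcal{E}_\sigma$ through the standard $\sharp/\flat$ correspondence between equivariant functions on a principal bundle and sections of the associated bundle. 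The semidirect-product bookkeeping that you carry out by hand in Step~2 is packaged in the paper into a separate lemma identifying $\mathcal{E}_{\nu\sigma}$ over $G/G_\nu$ with $\mathcal{E}_\sigma$ over $H/H_\nu$, which simultaneously extends the $H$-action to a $G$-action. Your map $\alpha$ is then recovered as the composite of (i) and (ii), and your $T$ as the composite of (iii) and (ii).

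What each approach buys: your argument is shorter, more elementary, and avoids introducing the associated-bundle vocabulary; the only nontrivial input is the cocycle identity $W_L(h^{-1},yH_\nu)^{-1}=W_L(h,h^{-1}yH_\nu)$ and the $H_\nu$-invariance of $\nu$, both of which you isolate cleanly. The paper's approach is more structural: by exhibiting $E_\sigma$ and $E_{L,\sigma}$ as two different trivializations of one canonical object $\mathcal{E}_\sigma$, it makes transparent \emph{why} the isomorphism $\alpha$ exists and generalizes with essentially no extra work to the situation of Theorem~\ref{theorem:7.1}, where $\Phi$ lands in a strictly larger space $\mathcal{K}_\Phi \supsetneq \mathcal{H}_\sigma$ and $E_\sigma$ is only a subbundle of a trivial bundle.
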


The Hermitian $G$-bundles $E_\sigma$ and $E_{L, \sigma}$ in Table~\ref{tab:1} will be called the \textit{perception bundle associated with $\sigma$} and the \textit{boosting bundle associated with $L$ and $ \sigma$}, respectively, for reasons that will become clear in Sect.~\ref{sec:6}. Accordingly, the representation spaces $\mathcal{H}_U$ and $\mathcal{H}_{U_L}$ in Table~\ref{tab:1} will be called the \textit{perception space} and the \textit{boosting space}, respectively.
\begin{proof}
Since the proof needs a long list of new definitions and lemmas, it has been exiled to \ref{sec:A}. Note that the action of $N$ on $H/H_\nu$ is trivial.
\end{proof}

As shown in Remark~\ref{remark:4.4}, all single-particle state spaces are of the form $\textup{Ind}_{G_\nu} ^G \nu \sigma$. Thus, we have just seen that the single-particle state spaces can be expressed in terms of induced representations associated with Hermitian $G$-bundles as defined in Definition~\ref{definition:5.3}. We have listed two relevant such descriptions in Table~\ref{tab:1}, comparisons of which lie at the heart of this paper.

\section{Bundle theoretic descriptions of massive particles}\label{sec:6}

In this section, we apply the mathematical framework developed in Sect.~\ref{sec:5} to massive particle state spaces listed in Theorem~\ref{theorem:4.10} and obtain bundle theoretic descriptions of massive particles with arbitrary spin, which was first suggested in \cite{lee2022} for spin-1/2 case (cf. Sect.~\ref{sec:3}). For the rest of the paper, $G$ will always denote the group $\mathbb{R}^4 \ltimes SL(2, \mathbb{C})$. Fix $m>0$ once and for all.

\paragraph{A $G$-invariant measure on the mass shell}

\hfill

First, it is necessary to identify a $G$-invariant measure on the orbit space $SL(2, \mathbb{C})/SU(2) \cong X_m ^\pm$ to apply the result of Sect.~\ref{sec:5}. Write $\omega_{\mathbf{p}} ^\pm : = \pm \sqrt {m^2 + |\mathbf{p}|^2}$. Then, the map $\mathbb{R}^3 \rightarrow X_m ^\pm $ given by
\begin{equation}\label{eq:6.1}
\mathbf{p} \mapsto (\omega_{\mathbf{p}} ^\pm , \mathbf{p})
\end{equation}
is a diffeomorphism, by which we always identify $\mathbb{R}^3$ with $X_m ^\pm$ and write $p = (\omega_{\mathbf{p}} ^\pm , \mathbf{p}) \in X_m ^\pm$. I.e., we set $p^0 = \omega_{\mathbf{p}} ^\pm$.

\begin{proposition}\label{proposition:6.1}
The following is a $G$-invariant measure on the orbit $X_m ^\pm \cong \mathbb{R}^3$.
\begin{equation}\label{eq:6.2}
    d\mu^\pm (p) \cong \frac{ d^3 \mathbf{p}}{ | \omega_{\mathbf{p}}  ^\pm |} =\frac{ d^3 \mathbf{p}}{ |p^0| }
\end{equation}
\end{proposition}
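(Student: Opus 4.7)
The plan is as follows. Since the action of $G$ on $X_m^\pm$ factors through $\kappa : SL(2,\mathbb{C}) \to SO^{\uparrow}(1,3)$ with the translation part acting trivially (cf.~Eq.~(\ref{eq:4.4})), it suffices to verify that $d\mu^\pm$ is preserved under the canonical action of $SO^{\uparrow}(1,3)$ on $X_m^\pm$. I would do this by exhibiting $d\mu^\pm$, up to the harmless overall factor $1/2$, as the coarea slice of the Lebesgue measure $d^4 p$ on $\mathbb{R}^4$ along the Lorentz-invariant function $F(p) := p_\mu p^\mu$.

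Two manifest invariances drive the argument: the Lebesgue measure $d^4 p$ is preserved by every $\Lambda \in SO^{\uparrow}(1,3)$ because $|\det \Lambda| = 1$, and the scalar $F$ is Lorentz invariant by Eq.~(\ref{eq:2.1}). On the open set $\{\pm p^0 > 0\}$ one has $\partial F/\partial p^0 = 2p^0 \neq 0$, so the map $(p^0,\mathbf{p}) \mapsto (F(p),\mathbf{p})$ is a diffeomorphism onto its image with Jacobian $2|p^0|$, yielding the coarea identity
\[
d^4 p \;=\; \frac{1}{2|p^0|}\, dF\, d^3\mathbf{p}.
\]
Since each $\Lambda \in SO^{\uparrow}(1,3)$ preserves the half-space $\{\pm p^0 > 0\}$ (the sign of $p^0$ is fixed on each mass shell by the orthochronous condition), the same formula holds after substituting $q := \Lambda p$. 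Equating the two expressions via $d^4 p = d^4 q$ and $dF|_p = dF|_q$, and then restricting to the leaf $\{F = m^2\} = X_m^\pm$ on which $|p^0| = |\omega_{\mathbf{p}}^\pm|$, I obtain that the pushforward of $d^3\mathbf{p}/|\omega_{\mathbf{p}}^\pm|$ under $\mathbf{p} \mapsto \mathbf{q}$ equals $d^3\mathbf{q}/|\omega_{\mathbf{q}}^\pm|$ on $X_m^\pm$, which is the desired invariance.

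To make the slicing step rigorous without distributional sleight of hand, I would test against $f \in C_c(X_m^\pm)$ by extending it to a family $\tilde f_\epsilon(p) := f(\omega_{\mathbf{p}}^\pm,\mathbf{p})\, \rho_\epsilon(F(p) - m^2)$ on $\{\pm p^0 > 0\}$, with $\rho_\epsilon$ a smooth approximate identity centered at $0$, apply Fubini in the $(F,\mathbf{p})$ coordinates on both the source and target of $\Lambda$, and let $\epsilon \to 0$; invariance of $\int \tilde f_\epsilon\, d^4 p$ under $\Lambda$ is immediate from the two observations above. The only mild obstacle is this bit of care with the coarea slicing; as a sanity check, rotations manifestly preserve $d\mu^\pm$, so one could alternatively reduce to a one-parameter family of boosts along the $z$-axis and verify invariance by a direct Jacobian computation. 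I would favor the coarea route since it is coordinate-free and handles all of $SO^{\uparrow}(1,3)$ at once.
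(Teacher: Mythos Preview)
Your argument is correct and is the standard coarea/slicing derivation of the Lorentz-invariant measure on the mass shell. The paper itself does not supply a proof but simply refers the reader to Ch.~1 of \cite{folland2008}, where precisely this argument (realizing $d\mu^\pm$ as the restriction of $d^4p$ to the level set $\{p_\mu p^\mu = m^2,\ \pm p^0>0\}$ via the invariance of $d^4p$ and of $p_\mu p^\mu$) is carried out; so your proposal is effectively a faithful expansion of the cited reference rather than a different route.
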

\begin{proof}
For a proof, see Ch.~1 of \cite{folland2008}.  
\end{proof}

From now until Sect.~\ref{sec:6.2}, we restrict our attention to the mass shell $X_m ^+$ and suppress all the $+$ superscripts throughout (e.g., $p_m ^+ = p_m$). The mass shell $X_m ^-$ will be taken up in Sect.~\ref{sec:7}.

\paragraph{The description table for massive particles}

\hfill

Note that $L : X_{m} \cong H/H_{p_m} \rightarrow H$ given by Eq.~(\ref{eq:2.19}) is a continuous global section. Let $s \in \frac{1}{2}\mathbb{N}_0$ and consider the irreducible representation $\sigma_{s} : SU(2) \rightarrow U(V_s)$ which extends to $\Phi_{s} : SL(2, \mathbb{C}) \rightarrow GL(V_s)$ (cf. Eqs.~(\ref{eq:4.17})--(\ref{eq:4.18})). Then, Theorem~\ref{theorem:5.5} gives us Table~\ref{tab:2} with an intertwining isomorphism

\begin{equation}\label{eq:6.3}
\begin{tikzcd}[baseline=(current  bounding  box.center), column sep=1.5em]
    \alpha_s : E_{s} \arrow ["{(p,v) \mapsto (p , \Phi_s \big(L(p) ^{-1} \big)v)}"]{rrrr} \arrow{drr} 
   & & & & E_{L, s} \arrow{dll}
\\
   & &X_m& &
\end{tikzcd}.
\end{equation}

Notice that $\pi_{m,s} := \textup{Ind}_{G_{p_m}} ^G p_m \sigma_s$ represents the single-particle of mass $m$ and spin $s$ (cf. Remark~\ref{remark:4.4} and Theorem~\ref{theorem:4.10}).

\begin{table}[h]
\caption{The perception and boosting bundles for a massive particle with spin-\texorpdfstring{$s$}{TEXT}}
\label{tab:2}
\centering
\begin{tabular}{|m{1.5cm}|m{5.2cm}|m{7cm}|}
\hline\noalign{\smallskip}
  &  $E_{s} \text{ (The perception bundle)}$ &  $E_{L , s} \text{ (The boosting bundle)}$ \\
\noalign{\smallskip}\hline\noalign{\smallskip}
Bundle  &  $X_m \times V_s$  &  $X_m \times V_s$ \\
\noalign{\smallskip}\hline\noalign{\smallskip}
Metric &   $h_s \Big( (p , v) , (p, w) \Big)$ \newline  $= v^\dagger \Phi_{s}(\frac{\utilde{p}}{m}) w $  &   $h_{L,s} \Big( (p , v) , (p ,w) \Big)= v^\dagger w $ \\
\noalign{\smallskip}\hline\noalign{\smallskip}
Action &  $\lambda_s (a,\Lambda) (p , v)= $\newline$  \Big( \Lambda p  , e^{-i \langle \Lambda p , a \rangle} \Phi_{s} (\Lambda) v \Big)$  &    $\lambda_{L , s} (a, \Lambda) (p , v)= $ \newline $ \Big(\Lambda p , e^{-i \langle \Lambda p , a \rangle} \sigma_s \big( W_L(\Lambda,p)\big) v \Big)$ \\
\noalign{\smallskip}\hline\noalign{\smallskip}
Space & $\mathcal{H}_s := L^2 \Big( X_m , E_{s} ; \mu , h \Big)$ & $\mathcal{H}_{L , s} := L^2 ( X_m  ; \mu ) \otimes V_s$ \\
\noalign{\smallskip}\hline\noalign{\smallskip}
$\pi_{m,s}$ & $ U_s (a, \Lambda) \phi = \lambda(a, \Lambda) \circ \phi \circ \Lambda^{-1}$ & $U_{L, s}(a, \Lambda)\psi = \lambda_{L, s} (a, \Lambda) \circ \psi \circ \Lambda^{-1}$ \\
\noalign{\smallskip}\hline
\end{tabular}
\end{table}

All the formulae listed in Table~\ref{tab:2} are straightforwardly computed from the definitions except the one for $h_s$. To obtain it, observe that if $p = \Lambda p_m \in X_m$, then since $\Phi_s$ preserves the adjoints (cf. Eqs.~(\ref{eq:4.14})--(\ref{eq:4.18})), we have

\begin{equation*}
\Phi_s(\Lambda)^{\dagger -1} \Phi_s (\Lambda)^{-1} = \Phi_s (\Lambda^{ \dagger -1} \Lambda) = \Phi_s \left(\Lambda^{\dagger -1} \left(\frac{1}{m} (p_m)_\sim \right) \Lambda^{-1} \right) = \Phi_s (\frac{\utilde{p}}{m})
\end{equation*}
by Eq.~(\ref{eq:2.13}). Notice that
\begin{equation}\label{eq:6.4}
W_L (\Lambda,p) := L(\Lambda p)^{-1} \Lambda L(p) \in SU(2)
\end{equation}
(cf. Eq.~(\ref{eq:5.7})) is indeed the \textit{Wigner rotation matrix} used in the physics literature (cf. \cite{weinberg}).

\subsection{The vector bundle point of view for massive particles}\label{sec:6.1}

In \cite{lee2022}, it was suggested that expressing some problems of RQI in terms of Hermitian $G$-bundles has several advantages. In this picture, the bundles $E_{s}$ and $E_{L , s}$ are assemblies of the $d= (2s +1)$-level quantum systems $(E_s)_p$ and $(E_{L,s})_p$ corresponding to each motion state (momentum) $p \in X_m$, and each wave function $\psi \in \mathcal{H}_s \hspace{0.2cm} \textup{or} \hspace{0.2cm} \mathcal{H}_{L,s}$ becomes a field of qudits. The so-called momentum-spin eigenstate $|p, \chi \rangle, (p \in X_m, \chi \in V_s)$ used in the physics literature can be identified with the point $ (p, \chi ) \in (E_{L,s})_p$ in this formalism.\footnote{However, each point in $E_{1/2}$ corresponds to a "relativistic chiral qubit" introduced in \cite{caban2019}.}

Since the single-particle state space for massive particle with spin-$s$ can be constructed from the bundles $E_s$ and $E_{L,s}$ according to Table~\ref{tab:2}, each inertial observer can use the bundles $E_{s}$ and $E_{L,s}$ instead of $\pi_{m,s}$ for the description of a massive particle with spin-$s$ in the sense that the full information of each quantum state that the particle can assume (which is an $L^2$-section of the bundles) can be recorded in the bundle.\footnote{This mathematical fact has nothing to do with physical measurement.\label{footnote:15}} How are these bundle descriptions related among different inertial observers? Suppose two inertial observers, Alice and Bob, are related by a Lorentz transformation $(a, \Lambda) \in G$ as in Eq.~(\ref{eq:2.11}). If Alice has prepared a particle in the state $\mathcal{H}_s \ni \phi \stackrel{\alpha_s}{=} \psi \in \mathcal{H}_{L,s}$ (cf. Eq.~(\ref{eq:6.3})) in her frame, then Bob would perceive this particle as in the state $\mathcal{H}_s \ni U_{s } (a, \Lambda) \phi \stackrel{\alpha_s}{=} U_{L,s} (a, \Lambda) \psi \in \mathcal{H}_{L,s}$ according to Sect.~\ref{sec:2.3} (cf. Remark~\ref{remark:2.11}).

For these transformation laws for wave functions to be true, Alice's bundles $E_s ^A$, $E_{L,s} ^A$ and Bob's bundles $E_s ^B$, $E_{L,s} ^B$ should be related by the Hermitian $G$-bundle isomorphisms

\begin{gather}
\lambda_s (a, \Lambda ) : E_s ^A \rightarrow E_s ^B \nonumber \\
(p,v)^A \mapsto \left(  \Lambda p, e^{- i (\Lambda p)_\mu a^\mu} \Phi_{s} \left(\Lambda \right) v \right)^B \label{eq:6.5}
\end{gather}
and
\begin{gather}
\lambda_{L,s} (a, \Lambda ) : E_{L,s} ^A \rightarrow E_{L,s} ^B \nonumber \\
(p,v)^A \mapsto \left(  \Lambda p, e^{- i (\Lambda p)_\mu a^\mu} \sigma_{s} \left(W_L ( \Lambda, p) \right) v \right)^B \label{eq:6.6},
\end{gather}
respectively, so that the transformation laws for the sections
\begin{equation}\label{eq:6.7}
\phi^A  \mapsto \phi^B = \lambda_s ( a, \Lambda ) \circ \phi^A \circ \Lambda^{-1}
\end{equation}
and
\begin{equation}\label{eq:6.8}
\psi^A \mapsto \psi^B = \lambda_{L,s} (a, \Lambda) \circ \psi^A \circ \Lambda^{-1}
\end{equation}
become $U_s (a, \Lambda)$ and $U_{L,s} (a, \Lambda)$, respectively. The following commutative diagrams are useful in visualizing the transformation laws.
\begin{equation}\label{eq:6.9}
\begin{tikzcd}[baseline=(current  bounding  box.center)]
E_s ^A \arrow[r, "{\lambda_s ( \Lambda, a)}"] \arrow[d ]
& E_s ^B \arrow[d] & & E_{L,s} ^A \arrow[r, "{\lambda_{L,s} ( \Lambda, a)}"] \arrow[d ] & E_{L,s} ^B \arrow{d}  \\
X_m ^A  \arrow[r, "\Lambda"]
& X_m ^B & & X_m ^A \arrow[r, "\Lambda"] & X_m ^B
\end{tikzcd}.
\end{equation}

Note that Eqs.~(\ref{eq:6.5}) and (\ref{eq:6.6}) are just the $G$-actions listed in Table~\ref{tab:2}.

\begin{remark}\label{remark:6.2}
This vector bundle viewpoint is similar to the setting up of a coordinate system in classical SR. In fact, as we see from the diagrams Eq.~(\ref{eq:6.9}), it is nothing more than a momentum coordinate system with the particle's internal quantum systems (corresponding to each possible motion state) taken into account, whose transformation law is governed by the Hermitian $G$-actions given in Table~\ref{tab:2}. Note that the two quantum transformation laws between inertial observers (Eqs.~(\ref{eq:6.5}) and (\ref{eq:6.6})) are extensions of the classical transformation law for the momentum observation expressed by the base space transformation $X_m \xrightarrow{\Lambda} X_m$.

Fix an inertial observer who is interested in describing a massive quantum particle with spin-$s$. The full information of the quantum states of the particle as perceived by the observer can be recorded either in the perception bundle $E_s$ or in the boosting bundle $E_{L,s}$ (see footnote~\ref{footnote:15}). The recurring theme of the analyses given in this paper is that while the fibers of the perception bundle $E_s$ reflect correctly the perception (in the sense of Sect.~\ref{sec:2.5}) of the fixed observer (hence the name), each fiber $(E_{L,s})_p$ of the boosting bundle $E_{L,s}$ is rather the perception of an $L(p)^{-1}$-boosted observer (with respect to the fixed one) for each $p \in X_m$ (hence the name) and hence is not directly accessible from the fixed observer, just as in the case $s=1/2$ which we observed in Sect.~\ref{sec:3}. So, in the context of RQI, the description provided by the bundle $E_s$ is conceptually more appropriate than $E_{L,s}$.

Theses interpretations of the two bundles will be in a sense proved in Sect.~\ref{sec:7}, where we will see that the Dirac equation and the Proca equations, which are fundamental equations of QFT obeyed by massive particles with spin-1/2 and 1, respectively, emerge as the defining equations of the respective perception bundles. Therefore, one may say that \textit{the Dirac equations and the Proca equations are nothing but manifestations of a fixed inertial observer's perception of the internal quantum states of massive particles with spin-1/2 and 1, respectively.}
\end{remark}

\subsection{The perception and boosting bundle descriptions for massive particles with arbitrary spin}\label{sec:6.2}

Let's take $\pi_{m,s} = \pi_{m,s} ^+$ where $s \in \frac{1}{2} \mathbb{N}$ and see if the discussions for the $s=1/2$ case (cf. Sect.~\ref{sec:3.3}) carry over to this case as well.

\subsubsection{A relation between the two descriptions; the bundles}\label{sec:6.2.1}

\hfill

In this case, it is not obvious how to find a relationship between the two descriptions since, for higher $s$, qudits in a $d = (2s+1)$-level quantum system are not characterized by three-vectors, unlike the $s=1/2$ case.

Therefore, instead of considering a general qudit $\chi \in V_s$, we argue as follows. Suppose Alice has prepared a qudit which is a spin eigenstate with eigenvalue $k-s, \hspace{0.2cm} (0 \leq k \leq 2s)$ along the $\hat{z}$-axis in her rest frame. According to Theorem~\ref{theorem:4.7}, this means that she has picked the state $\sqrt{\frac{ (2s)!} {k! (2s - k )! }} u^k v^{2s - k}$ from the basis Eq.~(\ref{eq:4.15}). From the $L (p)$-transformed Bob's frame, the qudit should be a spin eigenstate along the $L (p) \hat{z}$-axis with eigenvalue $k-s$. But, what are spin eigenstates along the $L (p) \hat{z}$-axis, a four-vector direction?

According to the discussion of Sects.~\ref{sec:3.2}--\ref{sec:3.3}, on the two-level system case, the vectors $L (p) u$ and $L (p) v$ might be called the spin eigenstates along the $L (p) \hat{z}$-axis with the eigenvalues $1/2$ and $-1/2$ respectively. Let $w = L(p) \hat{z} \in \mathbb{R}^4$. Observe that the traceless operator
\begin{equation}\label{eq:6.10}
\tilde{w} \frac{ \utilde{p}}{m} = L (p) \left(\frac{1}{2} \tau^3 \right) L (p)^{-1} \in \mathfrak{sl}(2, \mathbb{C})
\end{equation}
(cf. Eqs.~(\ref{eq:2.13}) and (\ref{eq:2.19})) is an Hermitian operator in the fiber $\big((E_{1/2} )_p , (h_{1/2})_p \big)$ and $L (p) u$, $L (p) v$ are two eigenvectors of this operator with eigenvalues $1/2$, $-1/2$, respectively. So, we see that Eq.~(\ref{eq:6.10}) is the observable for the spin along the $L (p) \hat{z}$-axis. (In fact, as can be seen from Eq.~(\ref{eq:7.26}), it is the third component of the Newton-Wigner spin operator restricted to the fiber $(E_{1/2})_p$.)

Generalizing this to the $(2s+1)$-level quantum system, we see that the operator
\begin{equation}\label{eq:6.11}
 (\Phi_s)_* (  \tilde{w} \frac{ \utilde{p}}{m}) = i (\sigma_s )_* (-i \tilde{w} \frac{ \utilde{p}}{m}),
\end{equation}
which is Hermitian on the fiber $\big((E_s )_p, (h_s)_p \big)$, is the spin observable along the $L (p) \hat{z}$-axis on this system. (Actually, this is the third component of the Newton-Wigner spin operator restricted to the fiber $(E_s)_p$. See the remark of Sect.~\ref{sec:6.2.2}.) Since the $L(p)$-transformed Bob should perceive the qudit prepared by Alice as a spin eigenstate with eigenvalue $k-s$ along the $L (p) \hat{z}$-axis, the qudit as perceived from Bob's frame should be an eigenstate of the operator $(\Phi_s)_* (\tilde{w} \frac{ \utilde{p}}{m})$ whose eigenvalue is $k-s$. By Eq.~(\ref{eq:4.21}), we see that this is precisely
\begin{equation}\label{eq:6.12}
\sqrt{\frac{ (2s)!}{ k! (2s -k)!}} (L (p) u )^k (L (p) v)^{2s-k} = \Phi_s (L (p) ) \left(  \sqrt{\frac{ (2s)!}{ k! (2s -k)!}} u^k v^{2s -k} \right).
\end{equation}

Since this holds for all the basis elements listed in Eq.~(\ref{eq:4.15}), we conclude that Bob's perception (in the sense of Sect.~\ref{sec:2.5}) of the qudit $(p_m, \chi)^A \in (E_{s})_{p_m} = (E_{L,s})_{p_m} $, which is prepared in Alice's rest frame, is
\begin{equation}\label{eq:6.13}
\Big(p, \Phi_s (L (p) ) \chi \Big)^B,
\end{equation}
which is precisely captured by the transformation law Eq.~(\ref{eq:6.5}). Hence as remarked in Sect.~\ref{sec:3.3}, the qudits in the bundle $E_s$ are "relativistic perception" of a fixed inertial observer. I.e., the fibers of the perception bundle $E_s$ correctly reflect the perception of the fixed inertial observer (hence the name).

Also as a consequence of this fact, the equation
\begin{equation}\label{eq:6.14}
\lambda_{L,s} (0, L(p)) (p_m , \chi)^A = (p, \chi)^B,
\end{equation}
which follows from the transformation law Eq.~(\ref{eq:6.6}), tells us that the qudits in $(E_{L,s})_p$ don't reflect the perception of the fixed inertial observer in whose frame the qudit-carrying particle is moving with momentum $p$. Rather, they are the perception of an $L(p)^{-1}$-boosted observer (hence the name).

We conclude that the interpretations and relations given in Sect.~\ref{sec:3.3} about the two descriptions for the spin-1/2 case hold in full generality, i.e., for all possible values of spin.

\subsubsection{A relation between the two descriptions; the representations}\label{sec:6.2.2}

\hfill

We need to check whether the description $(E_s,\mathcal{H}_s)$ is related to the Pauli-Lubansky four-vector and $(E_{L,s}, \mathcal{H}_{L,s} )$ is related to the Newton-Wigner spin in relation to the former just as in Sect.~\ref{sec:3.3}. On the bundle levels, this fact is not so obvious since the qudits in a higher-level system are in general not characterized by vectors in $\mathbb{R}^3$. But, moving into the level of wave functions and operators, we can obtain an analogous relation (see Sect.~\ref{sec:3.3.3}).

For the discussion on the level of Hilbert spaces and operators, we define the \textit{Pauli-Lubansky operators} which are elements of the universal enveloping algebra of $\mathfrak{g}_\mathbb{C}$ (here, $\mathfrak{g}_\mathbb{C}$ is the complexification of the Lie algebra $\mathfrak{g}$ of $G= \mathbb{R}^4 \ltimes SL(2, \mathbb{C})$) by
\begin{equation}\label{eq:6.15}
W^\mu = \frac{1}{2} \varepsilon^{\nu \alpha \beta \mu} P_\nu J_{\alpha \beta}
\end{equation}
where the relativistic angular momentum operators $J_{\alpha \beta} = -J_{\beta \alpha}$ are defined as $J_{23} = J^1$, $J_{31} = J^2$ , $J_{12} = J^3$ and $J_{j0}= K^j$, respectively (cf. Eq.~(\ref{eq:4.19})), and hence
\begin{subequations}\label{eq:6.16}
\begin{align}
W^0 &= \mathbf{P} \cdot \mathbf{J} \label{eq:6.16a} \\
\mathbf{W} &= P^0 \mathbf{J} - \mathbf{P} \times \mathbf{K}. \label{eq:6.16b}
\end{align}
\end{subequations}

Then, the \textit{Newton-Wigner spin operator} is defined as
\begin{equation}\label{eq:6.17}
\mathbf{S}_{NW} := \frac{1}{m} \left( \mathbf{W} - \frac{W^0 \mathbf{P} }{ m + P^0 } \right)
\end{equation}
which is also an element of the universal enveloping algebra of $\mathfrak{g}_\mathbb{C}$, which will then become operators on any quantum system with Lorentz symmetry via the given representation of the group $G$ (cf. Definition~\ref{definition:2.10}).

In \cite{lee2022}, it is proved\footnote{The paper only deals with the $s=1/2$ case. However, if we replace $\tau^j$ by $(\Phi_s)_* (\tau^j)$ in the proof given there, then we obtain Eq.~(\ref{eq:6.19}).} that the operator
\begin{equation}\label{eq:6.18}
\mathbf{S} := 1 \otimes (\Phi_s)_* ( \frac{1}{2} \boldsymbol{\tau}) = 1 \otimes i (\sigma_s )_* ( \mathbf{J}) = 1 \otimes \hat{\mathbf{J}}
\end{equation}
on the Hilbert space $\mathcal{H}_{L,s}$ becomes the Newton-Wigner spin operator on the Hilbert space $\mathcal{H}_s$ (cf. Eq.~(\ref{eq:6.17})), i.e.,
\begin{equation}\label{eq:6.19}
(U_s)_* (\mathbf{S}_{NW}) = \alpha_s ^{-1} \circ \mathbf{S} \circ \alpha_s
\end{equation}
where in general, given a unitary representation $\pi$ of a Lie group, $\pi_*$ denotes the induced $*$-representation of the universal enveloping algebra of the comlexified Lie algebra of the Lie group (cf. Ch.~0 of \cite{taylor}).

So, the $\mathbb{C}^{2s+1}$-component of the space $\mathcal{H}_{L,s}$ has the meaning of the Newton-Wigner spin in the perception space description $\mathcal{H}_{s}$. We conclude that for $s=1/2$, the relation between the two bundles (cf. Sect.~\ref{sec:3.3.2}) also holds on the level of Hilbert spaces and operators (see Sect.~\ref{sec:3.3.3}).

\subsubsection{The spin and Pauli-Lubansky reduced density matrices}\label{sec:6.2.3}

\hfill

Let $\psi \in \mathcal{H}_{L,s}$ be a state and $\rho := | \psi \rangle \langle \psi |$ be the density matrix corresponding to $\psi$. Just as in Sect.~\ref{sec:3.1}, we form the spin reduced density matrix for $\psi$ by
\begin{equation}\label{eq:6.20}
\tau = \text{Tr}_{L^2} \hspace{0.1cm} (\rho) ,
\end{equation}
which is a $(2s+1) \times (2s+1)$ density matrix.

We saw in Sect.~\ref{sec:3.2} that this matrix for the $s= 1/2$ case has no meaning at all. This was because, since the fibers of the bundle $E_{L,1/2}$ do not reflect the perception of a fixed inertial observer who is taking the partial trace, Eq.~(\ref{eq:6.20}) becomes a summation over the objects living in a whole lot of different reference frames, which is an absurdity unless the objects are first pulled back to the fixed inertial frame before the summation takes place.

In Sect.~\ref{sec:6.2.1}, we saw that the same problem resides in the general spin case as well. I.e., the fibers of the boosting bundle $E_{L,s}$ for general $s$ also do not reflect the perception of a fixed inertial observer who is using this bundle. Therefore, in particular, given a state $\psi = f \chi \in \mathcal{H}_{L,s}$ defined analogously as in Sect.~\ref{sec:3.2}, each qudit state $\chi(p) \chi(p)^{\dagger}$ gets meaningful only in an $L (p)^{-1}$-transformed inertial observer (cf. Eq.~(\ref{eq:6.14}) and the remark following it). So, we conclude that the spin reduced density matrix Eq.~(\ref{eq:6.20}), which is expressed as
\begin{equation}\label{eq:6.21}
\tau:= \int_{X_m} \psi (p) \psi(p)^\dagger d \mu (p) = \int_{X_m} |f(p)|^2 \chi(p) \chi(p)^\dagger d \mu (p)
\end{equation}
is meaningless either for all values of spin $0 \neq s \in \frac{1}{2} \mathbb{Z}$.

Even though it is of some interest to see whether the phenomenon observed in \cite{peres2002} (cf. Sect.~\ref{sec:3.1}) is still present in the general spin case by carrying out an analytic computation of Eq.~(\ref{eq:6.20}) using the formalism presented in this paper, we will not pursue that direction any further since we have just seen that Eq.~(\ref{eq:6.20}) is meaningless.

In \cite{lee2022}, moreover, it was suggested that the only way to modify Eq.~(\ref{eq:6.20}) in order for it to attain a substance is by first pulling back the integrand states $\chi (p) \chi(p)^\dagger$ to the fixed inertial frame and then carry out the integration. In Sect.~\ref{sec:6.2.1}, we saw that the pulled-back integrands are precisely $\Phi_s (L (p)) \chi (p) \chi(p)^\dagger \Phi_s( L (p))$ (cf. Eq.~(\ref{eq:6.13})). So, the modified reduced matrix is

\begin{align}\label{eq:6.22}
\sigma := \int_{X_m} |f(p)|^2 \Phi_s (L (p) ) \chi(p) \chi(p)^\dagger \Phi_s (L (p)) d\mu(p) \nonumber \\ = \int_{X_m} \Phi_s(L (p)) \psi(p) \psi(p)^\dagger \Phi_s(L (p)) \nonumber \\
=\int_{X_m} [\alpha^{-1} \psi ] (p) [ \alpha^{-1} \psi ] (p) d \mu(p),
\end{align}
which is just the operation
\begin{equation}\label{eq:6.23}
\sigma (\phi) = \int_{X_m} \phi (p) \phi(p)^\dagger d \mu (p)
\end{equation}
applied to $\alpha^{-1} \psi \in \mathcal{H}_s$. One should note that this operation cannot be defined for all elements in $\mathcal{H}_s$ due to the non-trivial Hermitian metric $h_s$ (cf. Table~\ref{tab:2}). However, this operation is well-defined at least on the Schwartz section space $\{ \phi \in \mathcal{H}_s : \phi_i \text{ is of Schwartz class for } i= 1, \cdots , 2s +1  \} \leq \mathcal{H}_s$ where $\phi_i$ is a component function of the section $\phi$ of the trivial bundle $E_s = X_m \times \mathbb{C}^{2s +1}$.

In \cite{lee2022}, Eq.~(\ref{eq:6.23}) was called the \textit{Pauli-Lubansky reduced matrix} since it has information about the average Pauli-Lubansky four-vector plus the average momentum in the $s=1/2$ case. It is very important to notice that Eq.~(\ref{eq:6.23}) is not a partial trace operation since, after all, $\mathcal{H}_s$ is not a tensor product system and second, $\phi(p) \phi(p)^\dagger$ is not the state corresponding to the qudit $\phi(p)$ due to the form of the inner product $h_s$ (cf. Table~(\ref{tab:2})).

Nevertheless, Eq.~(\ref{eq:6.23}) has some desirable features. It is positive and has a nonzero trace:
\begin{equation}\label{eq:6.24}
u^\dagger \sigma u = \int_X d \mu(p) \hspace{0.1cm} u^\dagger \phi(p) \phi(p)^\dagger u \geq 0 \quad \forall u \in \mathbb{C}^2
\end{equation}
and
\begin{equation}\label{eq:6.25}
\text{Tr} \hspace{0.1cm} \sigma = \int_X d \mu (p) \text{Tr} \rho = \int_X d \mu(p) \left( |\phi_1 (p)|^2 + |\phi_2 (p)|^2 \right) > 0.
\end{equation}

So, the matrix $\sigma$ can be normalized to yield a density matrix. Let's find its transformation law under a change of reference frame.

Suppose Alice has prepared a state $\phi \in \mathcal{H}_s$ and formed the matrix $\sigma_A = \sigma (\phi)$. Consider another observer Bob, in whose frame the state is $\pi_{m,s} ( a, \Lambda ) \phi \in \mathcal{H}_s$. Then, according to the transformation law for the perception bundle description (cf. Table~\ref{tab:2}), we have
\begin{equation}\label{eq:6.26}
\sigma_B = \int_X d\mu (p) \hspace{0.1cm} \Phi_s (\Lambda) \phi ( \Lambda^{-1} p) \phi( \Lambda^{-1} p)^\dagger \Phi_s(\Lambda ^{\dagger} ) = \Phi_s(\Lambda) \hspace{0.05cm} \sigma_A \hspace{0.05cm} \Phi_s (\Lambda^{\dagger} ).
\end{equation}

So, we see that Eq.~(\ref{eq:6.23}) is Lorentz covariant and hence has a relativistically invariant meaning, which may be interpreted as the average internal quantum state of the single-particle state $\phi \in \mathcal{H}_s$ as perceived by a fixed inertial observer. Investigating operational aspects of this matrix is beyond the scope of this paper. So, we leave it to researchers who are interested in exploring it.

\section{Theoretical implications}\label{sec:7}

In this section, we explore some of the theoretical implications of the perception bundle description. Specifically, we will see that the Dirac equation and the Proca equations (cf. \cite{folland2008}) are manifestations of a fixed inertial observer's perception of the internal quantum states of massive particles with spin-1/2 and 1, respectively.

\subsection{A modified framework}\label{sec:7.1}

Besides the single-particle state spaces dealt with in Sect.~\ref{sec:6}, there are some special forms of single-particle state spaces to which the formalism of Sect.~\ref{sec:5} cannot be directly applied. These include the Dirac bispinor representation for massive particles with spin-1/2 and the Minkowski space representation of massive particles with spin-1 (see below). To accommodate these into our formalism, we need the following modified version of Theorem~\ref{theorem:5.5}.

\begin{theorem}\label{theorem:7.1}
Let $G = N \ltimes H$ and fix $\nu \in \hat{N}$. Suppose $\sigma$ is a unitary representation of $H_\nu$ on the Hilbert space $(\mathcal{H}_\sigma , \langle \cdot , \cdot \rangle_\sigma)$ that extends to a representation $\Phi : H \rightarrow GL( \mathcal{K}_\Phi)$ where $\mathcal{K}_\Phi$ contains $\mathcal{H}_\sigma$ as a closed subspace, $L : H/H_\nu \rightarrow H$ is a global section, and there is an $H$-invariant measure $\mu$ on $H/H_\nu$.

Consider the two Hermitian $G$-bundles in Table~\ref{tab:1} and their associated induced representations $U$ and $U_L$ where the bundle $R_\sigma$ is given by the range of the bundle embedding
\begin{equation}\label{eq:7.1}
\begin{tikzcd}[baseline=(current  bounding  box.center), column sep=1.5em]
    \alpha^{-1} : E_{L, \sigma} \arrow ["{(xH_\nu,w) \mapsto (xH_\nu , \Phi \big(L(xH_\nu) \big)w)}"]{rrrrrr} \arrow{drrr} 
 & & & & & & H/H_\nu \times \mathcal{K}_\Phi \arrow{dlll}
\\
  & & &H/H_\nu& & &
\end{tikzcd}.
\end{equation}

Then, this map is an Hermitian $G$-bundle isomorphism that intertwines the structures listed in Table~\ref{tab:3} and

\begin{equation}\label{eq:7.2}
\textup{Ind}_{G_\nu} ^G \nu \sigma \cong U \cong U_L .
\end{equation}

\begin{table}[h]
\caption{The structures of the perception bundle and boosting bundle}
\label{tab:3}

\centering
\begin{tabular}{|m{1.5cm}|m{5.2cm}|m{7cm}|}
\hline\noalign{\smallskip}
  & $E_\sigma$  (The perception bundle) & $E_{L, \sigma}$  (The boosting bundle) \\
\noalign{\smallskip}\hline\noalign{\smallskip}
Bundle  &  $R_\sigma \leq H/H_\nu \times \mathcal{K}_\sigma$  &  $H/H_\nu \times \mathcal{H}_\sigma$ \\
\noalign{\smallskip}\hline\noalign{\smallskip}
Metric &   $h \Big( (xH_\nu , v) , (xH_\nu, w) \Big)$ \newline  $= \langle v , \Phi(x)^{\dagger -1} \Phi(x)^{-1} w \rangle_\sigma $  &   $h_L \Big( (xH_\nu , v) , (xH_\nu ,w) \Big)= \langle v , w \rangle_\sigma$ \\
\noalign{\smallskip}\hline\noalign{\smallskip}
Action &  $\lambda(nh) (yH_\nu , v)= $\newline$  \Big( h yH_\nu  , \nu \big( (hy)^{-1} n h y \big) \Phi (h) v \Big)$  &    $\lambda_{L} (nh) (yH_\nu , v)=  $ \newline $\Big(h yH_\nu , \nu \big( (hy)^{-1} nhy \big) \sigma \big( W(h, yH_\nu)\big) v \Big)$ \\
\noalign{\smallskip}\hline\noalign{\smallskip}
Space & $\mathcal{H}_{U} := L^2 \Big( H/H_\nu , E_\sigma ; \mu , h \Big)$ & $\mathcal{H}_{U_L} = L^2 ( H/H_\nu  ; \mu ) \otimes \mathcal{H}_\sigma$ \\
\noalign{\smallskip}\hline\noalign{\smallskip}
$\textup{Ind}_{G_\nu} ^G \nu \sigma$ & $ U(nh) \phi = \lambda(nh) \circ \phi \circ (l_h)^{-1}$ & $U_L (nh)\psi = \lambda_{L} (nh) \circ \psi \circ (l_h)^{-1}$ \\
\noalign{\smallskip}\hline
\end{tabular}
\end{table}
\end{theorem}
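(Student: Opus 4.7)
The strategy is to derive Theorem~\ref{theorem:7.1} from Theorem~\ref{theorem:5.5} by transport of structure along the map $\alpha^{-1}$. Since $\sigma = \Phi|_{H_\nu}$ is a unitary representation on $\mathcal{H}_\sigma$, Theorem~\ref{theorem:5.5} immediately yields $\textup{Ind}_{G_\nu}^G\nu\sigma \cong U_L$ together with the boosting bundle $E_{L,\sigma}$ and its Hermitian $G$-bundle data as in Table~\ref{tab:3}. What remains is to construct the perception bundle $E_\sigma = R_\sigma$ as the image of $\alpha^{-1}$ inside $H/H_\nu \times \mathcal{K}_\Phi$, equip it with the pushed-forward metric and $G$-action, verify that these agree with the formulae in Table~\ref{tab:3}, and invoke Proposition~\ref{proposition:5.4} to conclude $U \cong U_L$.

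First I would observe that $\Phi(L(xH_\nu)) \in GL(\mathcal{K}_\Phi)$ restricts to a linear injection $\mathcal{H}_\sigma \hookrightarrow \mathcal{K}_\Phi$ for every coset, so $R_\sigma$ is a rank-$\dim\mathcal{H}_\sigma$ smooth subbundle of the trivial bundle $H/H_\nu \times \mathcal{K}_\Phi$, trivialized globally by $\alpha^{-1}$ itself. Demanding that $\alpha^{-1}$ be an isometry then forces the metric: for $v, v' \in (R_\sigma)_{xH_\nu}$,
\begin{equation*}
h\bigl((xH_\nu,v),(xH_\nu,v')\bigr) = \langle \Phi(L(xH_\nu))^{-1}v,\,\Phi(L(xH_\nu))^{-1}v'\rangle_\sigma = \langle v,\,\Phi(L(xH_\nu))^{\dagger-1}\Phi(L(xH_\nu))^{-1}v'\rangle_\sigma,
\end{equation*}
which matches Table~\ref{tab:3} upon writing $x = L(xH_\nu)$.

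The heart of the argument is the transported $G$-action $\lambda(nh) := \alpha^{-1}\circ \lambda_L(nh)\circ \alpha$. Using the defining relation $W_L(h,yH_\nu) = L(hyH_\nu)^{-1}\,h\,L(yH_\nu)$ together with the crucial identity
\begin{equation*}
\sigma(W_L(h,yH_\nu)) = \Phi(W_L(h,yH_\nu)) = \Phi(L(hyH_\nu))^{-1}\,\Phi(h)\,\Phi(L(yH_\nu)),
\end{equation*}
which is valid precisely because $\Phi$ extends $\sigma$, the two outer factors $\Phi(L(hyH_\nu))$ and $\Phi(L(yH_\nu))^{-1}$ introduced by $\alpha^{\pm 1}$ are absorbed by the inverses hidden in the Wigner factor, leaving
\begin{equation*}
\lambda(nh)(yH_\nu,v) = \bigl(hyH_\nu,\ \nu((hy)^{-1}nhy)\,\Phi(h)\,v\bigr),
\end{equation*}
exactly the formula in Table~\ref{tab:3}. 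Isometry, continuity, and $G$-equivariance of $\lambda$ are then inherited from $\lambda_L$ and the isometry of $\alpha^{-1}$, and Proposition~\ref{proposition:5.4} supplies the unitary equivalence $U \cong U_L$, completing the chain $\textup{Ind}_{G_\nu}^G\nu\sigma \cong U \cong U_L$.

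The main obstacle is conceptual rather than technical: one must resist defining $E_\sigma$ as the trivial bundle $H/H_\nu \times \mathcal{H}_\sigma$ (as was possible in Theorem~\ref{theorem:5.5}, where $\Phi$ acted on the same space as $\sigma$) and instead realize $E_\sigma$ as a nontrivially embedded subbundle of $H/H_\nu \times \mathcal{K}_\Phi$ with fiberwise-varying fibers $\Phi(L(xH_\nu))(\mathcal{H}_\sigma)$. The cancellation identity displayed above is exactly what certifies that $\Phi(h)$ maps $(R_\sigma)_{yH_\nu}$ into $(R_\sigma)_{hyH_\nu}$, so that the clean action formula of Table~\ref{tab:3} is globally well-defined on $R_\sigma$; without the hypothesis that $\Phi$ extends $\sigma$ this step would fail outright.
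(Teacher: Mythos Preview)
Your proposal is correct and aligns with the paper's own proof, which consists of a single sentence: ``The proof is a straightforward adaptation of the proof of Theorem~\ref{theorem:5.5}. One only needs to be careful of the fact that the perception bundle might be a proper subbundle of the trivial bundle $H/H_\nu \times \mathcal{K}_\Phi$.'' You have carried out precisely this adaptation---invoking Theorem~\ref{theorem:5.5} for the boosting side, transporting structure along $\alpha^{-1}$, and using Proposition~\ref{proposition:5.4}---and the conceptual obstacle you single out (that $E_\sigma$ must now be realized as a genuine subbundle $R_\sigma \leq H/H_\nu \times \mathcal{K}_\Phi$ rather than the full trivial bundle) is exactly the point the paper flags as requiring care.
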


Analogously as in Sect.~\ref{sec:5}, we call Hermitian $G$-bundles $E_{\sigma}$ obtained in this way \textit{perception bundles}. Although we didn't need to modify the boosting bundle description, we have written it here for the sake of comparison.

\begin{proof}
The proof is a straightforward adaptation of the proof of Theorem~\ref{theorem:5.5}. One only needs to be careful of the fact that the perception bundle might be a proper subbundle of the trivial bundle $H/H_\nu \times \mathcal{K}_\Phi$. See \ref{sec:A}.
\end{proof}

\subsection{The Dirac bispinor representation of massive particles with spin-1/2}\label{sec:7.2}

In addition to the description given in Sect.~\ref{sec:3}, there is an equivalent way to describe massive particles with spin-1/2 called the Dirac bispinor representation. In the QFT literature this representation is of paramount importance (cf. \cite{weinberg}). In the context of RQI, this representation has been investigated, for example in \cite{bittencourt2018, bittencourt2019}. Particles/antiparticles described by this representation will be referred to as \textit{Dirac particles}.

Since antiparticle states will also be relevant to the discussion, we need to consider the representations associated with the two mass shells $X_m ^\pm$ simultaneously (cf. Proposition~\ref{proposition:4.3}). One must be careful in keeping track of the superscripts $\pm$. Note that $H_{p_m ^\pm} = SU(2)$ (cf. Proposition~\ref{proposition:4.6}).

Since we are dealing with two mass shells, we must have two choices of boostings (cf. Sect.~\ref{sec:2.4}). We choose
\begin{equation}\label{eq:7.3}
L^\pm (p) = \sqrt{ \frac{ \pm \tilde{p}} {m}} = \frac{1}{\sqrt{2m (m \pm p^0)}} ( \pm \tilde{p} + m I_2 ) \in SL(2, \mathbb{C}), \quad p \in X_m ^\pm.
\end{equation}

It is easy to check that $\kappa \big( L^\pm (p) \big) p_m ^\pm = p$ for $ p \in X_m ^\pm$. A remark similar to Remark~\ref{remark:2.12} also holds for $L^- (p)$ with obvious modifications. To avoid clutter, we will suppress the superscripts $\pm$ from $L^\pm$ when $L^\pm$ appears as a subscript for an object related to the boosting bundle description.

\paragraph{The perception bundle for Dirac particles}

\hfill

Instead of choosing $\sigma_{1/2}$ as in Sect.~\ref{sec:3}, we choose $\sigma_{1/2} \oplus \sigma_{1/2} : SU(2) \rightarrow GL(4, \mathbb{C})$ and its (non-unitary) extension $\Phi : SL(2, \mathbb{C}) \rightarrow GL(4, \mathbb{C})$ given by
\begin{equation}\label{eq:7.4}
\Phi(\Lambda) = \begin{pmatrix} \Lambda & 0 \\ 0 & \Lambda^{\dagger -1} \end{pmatrix}.
\end{equation}

Observe that the subspaces $V^\pm := \left\{ z \in \mathbb{C}^4 : (z_1 , z_2) = \pm (z_3 , z_4) \right\} \leq \mathbb{C}^4$ are 2-dimensional orthogonal invariant spaces with respect to $\sigma_{1/2} \oplus \sigma_{1/2}$. We write its corresponding subrepresentations as $\sigma^\pm$, i.e., we set $\sigma^\pm (\Lambda) := [\sigma_{1/2} \oplus \sigma_{1/2}] (\Lambda)|_{V^\pm}$ for $\Lambda \in SU(2)$.

The maps $u^\pm : \mathbb{C}^2 \rightarrow V^\pm$ given by $v \mapsto \frac{1}{\sqrt{2}}(v, \pm v)$ are unitary maps intertwining $\sigma_{1/2}$ and $\sigma^\pm$, respectively. So, we see $\sigma^\pm \cong \sigma_{1/2}$ and thus $\sigma^\pm$ are irreducible. With the understanding that $\mathcal{K}_\Phi = \mathbb{C}^4$, we apply Theorem~\ref{theorem:7.1}.

\begin{proposition}\label{proposition:7.4}
The range bundles $R^\pm \leq X_m^\pm \times \mathbb{C}^4$ of Table~\ref{tab:3} for Dirac particles are given by
\begin{equation}\label{eq:7.5}
R ^\pm = \left \{ (p, z ) \in X_m ^\pm \times \mathbb{C}^4 : p_\mu \gamma^\mu z = mz \right\} 
\end{equation}
where $\gamma^\mu := \begin{pmatrix} 0 & \tau_\mu \\ \tau^\mu & 0 \end{pmatrix}$ is the Weyl representation of the Dirac matrices (cf. \cite{folland2008}).

The Hermitian metrics $h^\pm$ on $R^\pm$ provided by Table~\ref{tab:3} become
\begin{equation}\label{eq:7.6}
h_p ^\pm (v, w ) = v^\dagger \Phi ( \pm \frac{ \utilde{p}} {m} ) w = \pm v^\dagger \gamma^0 w = \frac{m}{| p^0 |} v^\dagger w
\end{equation}
for $v, w \in \left( R ^\pm \right)_p$.
\end{proposition}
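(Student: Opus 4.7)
The plan is to prove Eqs.~(\ref{eq:7.5}) and (\ref{eq:7.6}) separately. For Eq.~(\ref{eq:7.5}), I would first compute $p_\mu \gamma^\mu$ in block form. Using the conventions of Sect.~\ref{sec:2.1} (so that $\tau_0 = \tau^0$ and $\tau_j = -\tau^j$), a direct calculation yields $p_\mu \gamma^\mu = \begin{pmatrix} 0 & \tilde{p} \\ \utilde{p} & 0 \end{pmatrix}$, and squaring gives $(p_\mu \gamma^\mu)^2 = m^2 I_4$ by Eq.~(\ref{eq:2.4b}). Hence the $+m$-eigenspace of $p_\mu \gamma^\mu$ is a rank-$2$ subbundle of $X_m^\pm \times \mathbb{C}^4$.

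Next I would verify that the range $R^\pm$ of $\alpha^{-1}$ lies inside this eigenspace. Parametrizing $V^\pm$ by $w = \tfrac{1}{\sqrt{2}}(v_0, \pm v_0)^\intercal$ with $v_0 \in \mathbb{C}^2$, and noting that both boostings $L^\pm(p) = \sqrt{\pm \tilde{p}/m}$ are Hermitian positive (for $p \in X_m^-$, this uses that both eigenvalues $p^0 \pm |\mathbf{p}|$ of $\tilde{p}$ are negative, so $-\tilde{p}$ is positive definite), one obtains the uniform identities $L^\pm(p)^2 = \pm \tilde{p}/m$ and $L^\pm(p)^{-2} = \pm \utilde{p}/m$, the second following from Eq.~(\ref{eq:2.18}). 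A short calculation then gives $p_\mu \gamma^\mu \, \Phi(L^\pm(p)) w = m\, \Phi(L^\pm(p)) w$, so $R^\pm$ is contained in the eigenspace. Since $\alpha^{-1}$ is fiberwise injective from a rank-$2$ bundle and the eigenspace also has rank $2$, equality follows, establishing Eq.~(\ref{eq:7.5}).

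For Eq.~(\ref{eq:7.6}), I would apply the metric formula in Table~\ref{tab:3} with $x = L^\pm(p)$. Hermitian positivity of $L^\pm(p)$ together with $L^\pm(p)^2 = \pm \tilde{p}/m$ reduces $\Phi(L^\pm(p))^{\dagger -1} \Phi(L^\pm(p))^{-1}$ to the block-diagonal matrix with entries $\pm \utilde{p}/m$ and $\pm \tilde{p}/m$, which is $\Phi(\pm \utilde{p}/m)$ under the natural extension of $\Phi$ to invertible Hermitian matrices; this yields the first equality. For the second, I would observe that $\gamma^0 p_\mu \gamma^\mu = m\, \Phi(\utilde{p}/m)$ as operators on $\mathbb{C}^4$, so for $w \in (R^\pm)_p$ the Dirac equation $p_\mu \gamma^\mu w = m w$ forces $\Phi(\utilde{p}/m) w = \gamma^0 w$, whence $v^\dagger \Phi(\pm \utilde{p}/m) w = \pm v^\dagger \gamma^0 w$. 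Finally, the Clifford relations $\{\gamma^0, \gamma^\mu\} = 2\eta^{0\mu} I_4$ and $(\gamma^\mu)^\dagger = \gamma^0 \gamma^\mu \gamma^0$, applied to the adjoint of the Dirac equation for $v$ paired against $\gamma^0 w$, produce $v^\dagger \gamma^0 w = (m/p^0)\, v^\dagger w$; since $\pm/p^0 = 1/|p^0|$ on $X_m^\pm$, the last equality of Eq.~(\ref{eq:7.6}) follows.

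The main obstacle I anticipate is careful sign bookkeeping across the $\pm$ cases, particularly recognizing that $L^-(p) = \sqrt{-\tilde{p}/m}$ is well-defined only because $-\tilde{p}$ is positive definite on $X_m^-$. Once both cases are written uniformly through $L^\pm(p)^2 = \pm \tilde{p}/m$ and $L^\pm(p)^{-2} = \pm \utilde{p}/m$, the algebraic manipulations run in parallel for the two mass shells, and both conclusions emerge simultaneously.
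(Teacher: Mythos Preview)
Your proposal is correct and follows essentially the same route as the paper: both arguments identify $p_\mu\gamma^\mu$ with the block matrix $\begin{pmatrix}0&\tilde p\\ \utilde p&0\end{pmatrix}$, use $(p_\mu\gamma^\mu)^2=m^2 I_4$ together with fiberwise injectivity of $\alpha^{-1}$ to pin down $R^\pm$, and obtain the first two equalities of Eq.~(\ref{eq:7.6}) via $\Phi(\utilde p/m)=\tfrac{1}{m}\gamma^0 p_\mu\gamma^\mu$ combined with the Dirac equation on the fiber. One small point: from $(p_\mu\gamma^\mu)^2=m^2 I_4$ alone you cannot conclude the $+m$-eigenspace has rank exactly $2$; the paper supplements this with $\det(p_\mu\gamma^\mu)=m^4$ and $p_\mu\gamma^\mu\neq \pm m I_4$ (equivalently, you could use $\operatorname{tr}(p_\mu\gamma^\mu)=0$). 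For the third equality, your Clifford/adjoint-Dirac argument is a clean variant of the paper's direct computation that $v^\dagger\gamma^j v=0$ followed by polarization.
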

\begin{proof}
Throughout the proof, let's denote $\Lambda := L^\pm (p)$ for ${}^\forall p \in X_m ^\pm$ for simplicity of notation. Observe that for $p \in X_m ^\pm$, and $z \in V^\pm$,
\begin{align*}
p_\mu \gamma^\mu \Phi(\Lambda) z = \begin{pmatrix} 0 & \tilde{p} \\  \utilde{p} &0 \end{pmatrix} \Phi(\Lambda) z  = \begin{pmatrix} 0 & \Lambda (p_m ^\pm)^\sim \Lambda^{\dagger} \\ \Lambda^{\dagger -1} ( p_m ^\pm )_\sim \Lambda^{-1} & 0 \end{pmatrix}  \Phi(\Lambda) z \\
= \pm \Phi(\Lambda)  m \gamma^0 \Phi(\Lambda)^{-1} \Phi(\Lambda) z = m \Phi(\Lambda) (\pm  \gamma^0 z) = m \Phi(\Lambda) z
\end{align*}
and hence indeed the range of the bundle maps given by Eq.~(\ref{eq:7.1}) are contained in the subbundles $R^\pm \leq X_m ^\pm \times \mathbb{C}^4 $.

Since $\left( p_\mu \gamma^\mu \right)^2 = m^2 I_4 $ (cf. Eq.~(\ref{eq:2.4b})), the map $p_\mu \gamma^\mu : \mathbb{C}^4 \rightarrow \mathbb{C}^4$ decomposes $\mathbb{C}^4$ into two eigenspaces corresponding to the eigenvalues $\pm m$. Since $\text{det} \left( p_\mu \gamma^\mu \right) = (p^2)^2 = m^4$ and $p_\mu \gamma^\mu \neq \pm m I_4$, we see the multiplicities of $\pm m$ are both 2. So, $R^\pm$ are subbundles of the trivial bundle $X_m ^\pm \times \mathbb{C}^4$ with rank 2. Since the maps Eq.~(\ref{eq:7.1}) are injective at each fiber of the boosting bundle $X_m ^\pm \times V^\pm$, this implies that the range bundles are all of $R^\pm$.

The first equality of Eq.~(\ref{eq:7.6}) is proved by the same calculation presented right below Table~\ref{tab:2}. Observe
\begin{equation*}
\Phi( \frac{\utilde{p}}{m}) = \begin{pmatrix} \frac{\utilde{p}}{m} & 0 \\ 0 & \frac{\tilde{p}}{m} \end{pmatrix} = \frac{1}{m} \gamma^0 p_\mu \gamma^\mu.
\end{equation*}

Since $p_\mu \gamma^\mu = m I_4$ on $(R^\pm )_p$ for each $p \in X_m ^\pm$, we see the second equality in Eq.~(\ref{eq:7.6}) holds. A direct computation would show that $v^\dagger \gamma^j v = 0$ for $ v \in V^\pm$ and $\gamma^j$ for $j=1,2,3$. Now, observe
\begin{equation*}
v^\dagger \gamma^0 v = \frac{1}{p^0} v^\dagger (p_0 \gamma^0) v = \frac{1}{p^0} v^\dagger (p_\mu \gamma^\mu) v = \frac{m}{p^0} v^\dagger v
\end{equation*}
and use the polarization identity to see that the third identity also holds.
\end{proof}

\paragraph{The description table for Dirac particles}

\hfill

Table~\ref{tab:4} below, which is a consequence of Theorem~\ref{theorem:7.1} and Proposition~\ref{proposition:7.4}, is the description table for Dirac particles. Notice that since $\sigma^\pm \cong \sigma_{1/2}$, we have $\textup{Ind}_{G_{p_m ^\pm}} ^G (e^{-i \langle p_m ^\pm , \cdot \rangle} \sigma^\pm) \cong \pi_{m, 1/2} ^\pm$, which represent particles/antiparticles of mass $m>0$ and spin-1/2, respectively.

\begin{table}[h]
\caption{The perception and boosting bundles for Dirac particles}
\label{tab:4}
\centering
\begin{tabular}{|m{1.2cm}|m{7cm}|m{5.5cm}|}
\hline\noalign{\smallskip}
  &  $E^\pm \text{ (The perception bundle)}$ &  $E_L ^\pm \text{ (The boosting bundle)}$ \\
\noalign{\smallskip}\hline\noalign{\smallskip}
Bundle  &  $R ^\pm = \left \{ (p, z ) \in X_m ^\pm \times \mathbb{C}^4 : p_\mu \gamma^\mu z = mz \right\} $  &  $X_m ^\pm \times V^\pm$ \\
\noalign{\smallskip}\hline\noalign{\smallskip}
Metric &   $h^\pm \Big( (p , v) , (p, w) \Big)= \pm v^\dagger \gamma^0 w = \frac{m}{| p^0 |} v^\dagger w$  &   $h_L ^\pm \Big( (p , v) , (p ,w) \Big)= v^\dagger w $ \\
\noalign{\smallskip}\hline\noalign{\smallskip}
Action &  $\lambda^\pm (a,\Lambda) (p , v)=  \Big( \Lambda p  , e^{-i \langle \Lambda p , a \rangle} \Phi (\Lambda) v \Big)$  &    $\lambda_{L } ^\pm (a, \Lambda) (p , v)= $ \newline $ \Big(\Lambda p , e^{-i \langle \Lambda p , a \rangle} \sigma^\pm ( W_{L^\pm} (\Lambda,p)) v \Big)$ \\
\noalign{\smallskip}\hline\noalign{\smallskip}
Space & $\mathcal{H} ^\pm := L^2 \Big( X_m ^\pm , E^\pm ; \mu^\pm , h^\pm \Big)$ & $\mathcal{H}_{L } ^\pm := L^2 ( X_m ^\pm  ; \mu^\pm ) \otimes V^\pm$ \\
\noalign{\smallskip}\hline\noalign{\smallskip}
$\pi_{m,1/2} ^\pm$ & $ U^\pm (a, \Lambda) \phi = \lambda ^\pm (a, \Lambda) \circ \phi \circ \Lambda^{-1}$ & $U_{L} ^\pm (a, \Lambda)\psi = \lambda_{L} ^\pm (a, \Lambda) \circ \psi \circ \Lambda^{-1}$ \\
\noalign{\smallskip}\hline
\end{tabular}
\end{table}

The isomorphisms between the two descriptions (Eq.~(\ref{eq:7.1})) in this case are given by the Hermitian $G$-bundle isomorphisms
\begin{equation}\label{eq:7.7}
\begin{tikzcd}[baseline=(current  bounding  box.center), column sep=1.5em]
    \alpha^\pm : E^\pm \arrow ["{(p,v) \mapsto (p , \Phi \big(L^\pm (p) ^{-1} \big)v)}"]{rrrr} \arrow{drr} 
   & & & & E_{L} ^\pm \arrow{dll}
\\
   & &X_m ^\pm& &
\end{tikzcd}.
\end{equation}

\subsubsection{The vector bundle point of view for Dirac particles}\label{sec:7.2.1}

\hfill

As in Sect.~\ref{sec:6.1}, the description table Table~\ref{tab:4} tells us that if two inertial observers Alice and Bob, who are related by a Lorentz transformation $(a, \Lambda) \in G$ as in Eq.~(\ref{eq:2.11}), are using the two bundle descriptions for Dirac particles to describe a massive particle/antiparticle with spin-1/2,\footnote{For the precise meaning of this sentence, see Sect.~\ref{sec:6.1}.} then the descriptions should be related by

\begin{gather}
\lambda^\pm (a, \Lambda ) : E ^{\pm A} \rightarrow E ^{\pm B} \nonumber \\
(p,c)^A \mapsto \left(  \Lambda p, e^{- i (\Lambda p)_\mu a^\mu} \Phi \left(\Lambda \right) c \right)^B \label{eq:7.8}
\end{gather}
and
\begin{gather}
\lambda_{L} ^\pm (a, \Lambda ) : E_{L} ^{\pm A} \rightarrow E_{L} ^{\pm B} \nonumber \\
(p,c)^A \mapsto \left(  \Lambda p, e^{- i (\Lambda p)_\mu a^\mu} \sigma^\pm \left(W_{L^\pm} ( \Lambda, p) \right) c \right)^B \label{eq:7.9},
\end{gather}
respectively.

\subsubsection{Physical interpretations of the two bundle descriptions}\label{sec:7.2.2}

\hfill

Let's see whether the discussions in Sect.~\ref{sec:3.3} carry over to the two descriptions $E^\pm $ and $E_L ^\pm $ as well.

For that, we need analogues of Eqs.~(\ref{eq:3.13}) and (\ref{eq:3.20}). Given a bispinor $c = \frac{1}{\sqrt{2}} \begin{pmatrix} \chi \\ \pm \chi \end{pmatrix} \in V^\pm$, we form
\begin{equation}\label{eq:7.10}
c c^\dagger =  \frac{1}{4} \begin{pmatrix} \boldsymbol{\tau} \cdot \mathbf{n} + I_2 & \pm (\boldsymbol{\tau} \cdot \mathbf{n} + I_2) \\ \pm (\boldsymbol{\tau} \cdot \mathbf{n} + I_2) & \boldsymbol{\tau} \cdot \mathbf{n} + I_2 \end{pmatrix} = \frac{1}{4} (I_4 \pm \gamma^0 ) \begin{pmatrix} \boldsymbol{\tau} \cdot \mathbf{n} + I_2 & 0 \\ 0 & \boldsymbol{\tau} \cdot \mathbf{n} + I_2 \end{pmatrix} 
\end{equation}
where $\mathbf{n}$ is the spin direction of the qubit $\chi \in \mathbb{C}^2$ (cf. Eq.~(\ref{eq:3.13})). Thus,
\begin{align}\label{eq:7.11}
\Phi( L^\pm (p)) c c^\dagger \Phi( L^\pm (p))^\dagger = \frac{1}{2m} \begin{pmatrix} \tilde{w} \pm \frac{ \tilde{p}}{2} & \frac{  \tilde{p} } {m}  \left( -\utilde{w} \pm \frac{ \utilde{p}}{2} \right)  \\ \frac{  \utilde{p} } {m} \left( \tilde{w} \pm \frac{ \tilde{p}}{2} \right)  &  -\utilde{w} \pm \frac{ \utilde{p}}{2} \end{pmatrix} \nonumber \\
= \frac{1}{2m} (I_4 + \frac{1}{m}  p_\mu \gamma^\mu) \begin{pmatrix} \tilde{w} \pm \frac{\tilde{p}}{2} & 0 \\ 0 & -\utilde{w} \pm \frac{\utilde{p}}{2} \end{pmatrix}
\end{align}
where $w = L^\pm (p) (0, \frac{1}{2} \mathbf{n} )$ is the Pauli-Lubansky four-vector of the qubit $\chi$ (cf. Eq.~(\ref{eq:3.20})).

So, just as for Eqs.~(\ref{eq:3.13}) and (\ref{eq:3.20}), we define the \textit{spin direction of the bispinor $c \in V^\pm$} as the three-vector $\mathbf{n} \in \mathbb{R}^3 \leq \mathbb{R}^4$ which satisfies
\begin{equation}\label{eq:7.12}
\boldsymbol{\tau} \cdot \mathbf{n} = 4 \left( c c^\dagger \right)_{11} - I_2
\end{equation}
(here $M_{11}$ denotes the upper left $2 \times 2$ component of the $4 \times 4$ complex matrix $M$) and the \textit{Pauli-Lubansky four-vector of the bispinor $(p, d ) \in E_p ^\pm$} as the four-vector $w \in \mathbb{R}^4$ which satisfies
\begin{equation}\label{eq:7.13}
\tilde{w} = 2m \left (d d^\dagger \right)_{11} \mp \frac{\tilde{p}}{2}.
\end{equation}

Using these, one can repeat the arguments in Sect.~\ref{sec:3.3} to see that the same interpretations also hold for Dirac particles. I.e., the description $E^\pm$ is related to the Pauli-Lubansky four-vector and $E_L ^\pm$ is related to the Newton-Wigner spin in relation to the former.

In particular, by inserting $(a,\Lambda) = ( 0 , L^\pm (p))$ into Eq.~(\ref{eq:7.9}), we see the following analogue of Eq.~(\ref{eq:3.12}) still holds
\begin{equation}\label{eq:7.14}
\lambda_L ^{\pm} (0, L^\pm (p) ) (p_m ^\pm , c)^A = (p, c)^B
\end{equation}
for $(p_m ^\pm , c) \in (E_L ^\pm )_{p_m ^\pm}$ (cf. Eq.~(\ref{eq:6.4})). Thus, the same remark discussed right below Eq.~(\ref{eq:3.12}) still holds for $E_L ^\pm$. I.e., the elements in $E_L ^\pm$ don't reflect the perception of the fixed inertial observer who is using the bundle $E_L ^\pm$ and hence the description $E_L ^\pm$ inherently depends on frame change considerations.

Also, by inserting $(a, \Lambda) = (0, L^\pm (p))$ into Eq.~(\ref{eq:7.8}) for $(p_m ^\pm , c) \in (E^\pm)_{p_m}$ and checking Eq.~(\ref{eq:7.11}) once more, we see that one can recover the "relativistic perception" $\tilde{w} \pm \frac{ \tilde{p}}{2}$ from the bispinor $(p, \Phi (L (p)) c ) \in E_p ^\pm$ without recourse to frame change considerations and vice versa. In other words, the perception bundle description $E^\pm$ correctly reflects the perception of a fixed inertial observer who is using this bundle for the description of a Dirac particle.

\subsubsection{Theoretical implications on the representations}\label{sec:7.2.3}

\paragraph{The Dirac equation as a manifestation of relativistic perception}

\hfill

We first investigate a striking consequence of the definition of the perception representation $(\mathcal{H}^\pm, U^\pm)$. Fix $\phi \in \mathcal{H}^\pm$. From Table~\ref{tab:4}, we have
\begin{equation}\label{eq:7.15}
[ U^\pm (a,I) \phi ] (p) = \lambda^\pm (a, I) \circ \phi (p) = e^{-i \langle p , a \rangle} \phi(p), \quad \forall p \in X_m ^\pm.
\end{equation}

If we define the four-momentum operators $P^\mu$ on $\mathcal{H}^\pm$ as in Eq.~(\ref{eq:4.10}), then Eq.~(\ref{eq:7.15}) and a computation analogous to Eq.~(\ref{eq:4.10}) would show that by virtue of Eq.~(\ref{eq:7.5}),
\begin{equation}\label{eq:7.16}
[P_\mu \gamma^\mu \phi](p) = p_\mu \gamma^\mu \phi(p) = m \phi (p)
\end{equation}
always holds for ${}^\forall \phi \in \mathcal{H}^\pm$.

Eq.~(\ref{eq:7.16}) is the \textit{Dirac equation}\footnote{In Sect.~\ref{sec:7.4}, we will see how Eq.~(\ref{eq:7.16}) can be converted into the more familiar form of differential equation. Cf. Eq.~(\ref{eq:7.32}).} which is obeyed by massive spin-1/2 particles/antiparticles and is of fundamental importance in QFT (cf. \cite{weinberg}). We have just found that it is automatically satisfied for all wave functions in $\mathcal{H}^\pm$. Given the interpretations of the perception bundles $E^\pm$ presented in Sect.~\ref{sec:7.2.2}, we find that \textit{the Dirac equation is nothing but a manifestation of a fixed inertial observer's perception of the internal quantum states of massive particles/antiparticles with spin-1/2}. This fact is even more clear if we look once again at the definition of the bundles $E^\pm = R^\pm$ given in Eq.~(\ref{eq:7.5}). The Dirac equation is not only satisfied by the wave functions in $\mathcal{H}^\pm$ but also manifests itself in the level of bispinors in the fibers of the perception bundles $E^\pm$ as perceived by a fixed inertial observer (See Remark~\ref{remark:6.2} for more on this point).

\paragraph{The Foldy-Wouthuysen transformation}

\hfill

The boosting representation
\begin{subequations}\label{eq:7.17}
\begin{equation}\label{eq:7.17a}
\mathcal{H}_L ^\pm = L^2 (X_m ^\pm ; \mu^\pm ) \otimes V^\pm
\end{equation}
\begin{equation}\label{eq:7.17b}
[U_L ^\pm (a, \Lambda) \psi] (p) = e^{-i \langle p , a \rangle} \sigma^\pm ( W_{L^\pm} (\Lambda, \Lambda^{-1} p ) \psi (\Lambda^{-1} p )
\end{equation}
\end{subequations}
from Table~\ref{tab:4} has been the standard approach to the description of Dirac particles in the physics literature. However, the perception representation for Dirac particles
\begin{subequations}\label{eq:7.18}
\begin{equation}\label{eq:7.18a}
\mathcal{H} ^\pm = L^2 (X_m ^\pm , E^\pm ; \mu^\pm , h^\pm )
\end{equation}
\begin{equation}\label{eq:7.18b}
[U ^\pm (a, \Lambda) \phi] (p) = e^{-i \langle p , a \rangle} \Phi  (\Lambda ) \phi (\Lambda^{-1} p )
\end{equation}
\end{subequations}
from Table~\ref{tab:4} has also been given some attention. In fact, composing the Hermitian $G$-bundle isomorphism Eq.~(\ref{eq:7.7}) to wave functions in $\mathcal{H}_L ^\pm$, we get a unitary map $FW^\pm := \alpha^{-1} \circ (\cdot) : \mathcal{H}_L ^\pm \rightarrow \mathcal{H} ^\pm$ intertwining the two representations $U_L ^\pm$ and $U ^\pm$. Unwinding the definitions, we see for ${ }^\forall \psi \in \mathcal{H}_L ^\pm$,
\begin{align}\label{eq:7.19}
FW^\pm (\psi) (p) := \alpha^{-1} \circ \psi (p) = \Phi( L^\pm (p) ) \psi (p) = \frac{1}{\sqrt{2m (m \pm p^0 ) }} \left( m I_4 + p_\mu \gamma^\mu \right) \psi(p) \nonumber \\
= \frac{1}{\sqrt{2m (m \pm p^0 ) }} \left( (m \pm p^0 ) I_4 - \mathbf{p} \cdot \boldsymbol{\gamma} \right) \psi(p)
\end{align}
holds. Notice that the last expression is the \textit{Foldy-Wouthuysen transformation} suggested in \cite{foldy1950}. (cf. \cite{deriglazov2020})

This transformation has been widely used to arrange the Dirac Hamiltonian in a mathematically palatable way. Interested readers are referred to \cite{costella1995} for a brief historical account of it and its usefulness in dealing with Dirac particles. We have found that \textit{the Foldy-Wouthuysen transformation is a change of representations from the boosting description into the perception description}.

The same reasoning as in Sect.~\ref{sec:6.2.2} would show that
\begin{equation}\label{eq:7.20}
(U^\pm)_* (\mathbf{S}_{NW} ) = (\alpha^{\pm})^{-1} \circ \left(1 \otimes \Phi_* ( \frac{1}{2} \boldsymbol{\tau} ) \right) \circ \alpha^\pm
\end{equation}
holds. I.e., the $V^\pm$-component of the space $\mathcal{H}_L ^\pm = L^2( X_m ^\pm , \mu^\pm) \otimes V^\pm$ contains information of the Newton-Wigner spin on the representation space $\mathcal{H}^\pm$.

\subsection{The Minkowski space representation of massive particles with spin-1}\label{sec:7.3}

In this subsection, we analyze massive particles with spin-1. The $W$ and $Z$ bosons which are responsible for the weak interaction are of this type. In the context of RQI, this case has been investigated, for example, in \cite{caban2008a, caban2008b}. Again in this subsection, we restrict our attention to the mass shell $X_m = X_m ^ +$ and remove all the $+$-superscripts as we had done in Sects.~\ref{sec:6}--\ref{sec:6.2}.

\paragraph{The perception bundle for massive particles with spin-1}

\hfill

Just as in Sect.~\ref{sec:7.2}, we do not choose $\sigma_1$ from Theorem~\ref{theorem:4.7}. Instead, note that the representation $\sigma:= \kappa|_{SU(2)} : SU(2) \rightarrow SO(3) \hookrightarrow U(3)$ (a restriction of the covering map Eq.~(\ref{eq:2.14})) is an irreducible unitary representation of dimension $3 = 2 \cdot 1 + 1$. So, by Theorem~\ref{theorem:4.7}, it is equivalent to $\sigma_1$. Notice that the representation $\sigma : SU(2) \rightarrow SO(3)$ has a (non-unitary) extension $\kappa : SL(2, \mathbb{C}) \rightarrow SO^\uparrow(1,3) \hookrightarrow GL(4, \mathbb{C})$. So, we can apply Theorem~\ref{theorem:7.1} with the understanding that $\mathcal{H}_\sigma := \mathbb{C}^3 = \{ (z^0 , z^1 , z^2 , z^3 ) \in \mathbb{C}^4: z^0 =0 \}  \leq \mathbb{C}^4 =: \mathcal{K}_\kappa$.

\begin{proposition}\label{proposition:7.3}
The range bundle $R \leq X_m \times \mathbb{C}^4$ of Table~\ref{tab:3} for the Minkowski space representation is given by
\begin{equation}\label{eq:7.21}
R= \left\{ (p, v) \in X_m \times \mathbb{C}^4 : p_\mu v^\mu = 0 \right\},
\end{equation}
which is a rank-3 subbundle.

The Hermitian metric $h$ on $R$ provided by Table~\ref{tab:3} for this case becomes
\begin{equation}\label{eq:7.22}
h_p (v, w ) = v^\dagger \kappa ( \frac{ \utilde{p}} {m} ) w = - v^\dagger \eta w
\end{equation}
for $v,w \in R_p$.
\end{proposition}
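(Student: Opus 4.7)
The statement splits naturally into the identification of the range bundle $R$ and the computation of the induced Hermitian metric. My plan is to carry these out in order, with the single non-routine input being the symmetry of the real $4\times 4$ matrix $\kappa(L(p))$.

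For the identification, I would use the $G$-invariance of the symmetric bilinear form $p_\mu v^\mu$, extended $\mathbb{C}$-bilinearly to $\mathbb{C}^4$. The subspace $\mathcal{H}_\sigma \subset \mathbb{C}^4$ consists of vectors $w$ with $w^0 = 0$, and since $(p_m)_\mu w^\mu = m w^0$, this is exactly the hyperplane cut out by $(p_m)_\mu (\cdot)^\mu = 0$. Because $\kappa(L(p))$ sends $p_m$ to $p$ and preserves the bilinear form, for every $w \in \mathcal{H}_\sigma$ and $v := \kappa(L(p))w$, one has $p_\mu v^\mu = (p_m)_\mu w^\mu = 0$; conversely, if $p_\mu v^\mu = 0$ then $w := \kappa(L(p))^{-1}v$ satisfies $(p_m)_\mu w^\mu = 0$, so $w \in \mathcal{H}_\sigma$. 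Thus $R_p$ coincides with the kernel of a non-zero linear functional on $\mathbb{C}^4$, establishing both the description of $R$ and its rank of $3$.

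For the first equality in Eq.~(\ref{eq:7.22}), Table~\ref{tab:3} specializes to $h_p(v_1, v_2) = v_1^\dagger \kappa(L(p))^{-\dagger} \kappa(L(p))^{-1} v_2$. The crux, and the main obstacle, is to show that $\kappa(L(p))$ is a symmetric real matrix. I would establish this via the decomposition $L(p) = R(\hat{\mathbf{p}}) B_m(|\mathbf{p}|) R(\hat{\mathbf{p}})^{-1}$ from Remark~\ref{remark:2.12}: direct computation shows $\kappa(B_m(|\mathbf{p}|))$ is the standard symmetric boost-along-$z$ matrix, while $\kappa(R(\hat{\mathbf{p}}))$ is a spatial rotation, hence orthogonal; conjugating a symmetric matrix by an orthogonal one yields a symmetric matrix, so $\kappa(L(p))^\dagger = \kappa(L(p))$. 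Consequently $\kappa(L(p))^{-\dagger}\kappa(L(p))^{-1} = \kappa(L(p))^{-2} = \kappa(L(p)^{-2}) = \kappa(\utilde{p}/m)$, the last step using Eq.~(\ref{eq:2.18}).

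For the second equality, I would change variables to $w_i := \kappa(L(p))^{-1} v_i \in \mathcal{H}_\sigma$ (using part one), so that $h_p(v_1, v_2) = w_1^\dagger w_2$. Since each $w_i$ has vanishing $0$-component, $w_1^\dagger w_2 = \sum_{j=1}^3 \overline{w_1^j}\, w_2^j = -w_1^\dagger \eta w_2$. A final invocation of Lorentz invariance, $\kappa(L(p))^{-T} \eta \kappa(L(p))^{-1} = \eta$ (valid because $\kappa(L(p))^{-1}$ is real, so $(\cdot)^T = (\cdot)^\dagger$ on it), yields $w_1^\dagger \eta w_2 = v_1^\dagger \eta v_2$, completing the proof.
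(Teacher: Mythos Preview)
Your proof is correct and follows essentially the same route as the paper's: both identify $R$ via the Lorentz invariance of $p_\mu v^\mu$ and both obtain the second equality in Eq.~(\ref{eq:7.22}) by pulling back to $\mathcal{H}_\sigma$ where the zeroth component vanishes and then using that $\kappa(L(p))^{-1}$ preserves the Minkowski form. The one place you diverge slightly is the first equality in Eq.~(\ref{eq:7.22}): the paper dismisses it as ``easy'' (in parallel with the computation below Table~\ref{tab:2}, which uses that the extension preserves adjoints, i.e.\ $\kappa(\Lambda^\dagger)=\kappa(\Lambda)^T$), whereas you instead prove the symmetry of $\kappa(L(p))$ directly from the $R(\hat{\mathbf p})B_m(|\mathbf p|)R(\hat{\mathbf p})^{-1}$ decomposition of Remark~\ref{remark:2.12}. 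Your argument is more hands-on but entirely valid; the paper's implicit route is slightly more conceptual in that it works for any $\Lambda$ with $\Lambda p_m = p$, not just the Hermitian choice $L(p)$.
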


\begin{proof}
Let $(p,v) \in X_m \times \mathcal{H}_\sigma $. Then,
\begin{equation*}
p_\mu  \Big(\kappa \big(L(p)\big)  v \Big)^\mu = \langle p , \kappa \big(L(p)\big) v \rangle = \langle \kappa \big(L(p)\big)^{-1} p , v \rangle =\langle p_m , v \rangle = 0
\end{equation*}
since $\kappa \big(L(p)\big) \in SO^\uparrow(1,3)$ and $v \in \mathcal{H}_\sigma = \{ (z^0 , z^1 , z^2 , z^3 ) \in \mathbb{C}^4: z^0 =0 \} \leq \mathbb{C}^4$ while $p_m = (m,0,0,0)$. So, indeed the map Eq.~(\ref{eq:7.1}) maps the boosting bundle into $R$. It is an isomorphism being an injection between two bundles of rank 3.

The first part of Eq.~(\ref{eq:7.22}) is easy and for the second part, notice that the preceding calculation shows that $\kappa( L(p)^{-1} ) v \in \mathcal{H}_\sigma \leq \mathbb{C}^4$ for all $v \in R_p$ and hence
\begin{align*}
v^\dagger \kappa (\frac{ \utilde{p}}{m}) w = \Big( \kappa ( L(p)^{-1} ) v \Big)^\dagger \Big( \kappa ( L(p)^{-1} ) w \Big) = - \left< \kappa (L(p)^{-1}) \overline{v} , \kappa( L(p)^{-1}) w \right>
= - \langle \overline{v} , w \rangle
\end{align*}
where $\langle \cdot, \cdot \rangle$ here denotes the complexified Minkowski metric.
\end{proof}

\paragraph{The description table for massive particles with spin-1}

\hfill

With the help of Proposition~\ref{proposition:7.4}, we apply Theorem~\ref{theorem:7.1} to obtain Table~\ref{tab:5}, the description table for massive particles with spin-1. Note that since $\sigma \cong \sigma_1$, we see $\textup{Ind}_{G_{p_m}} ^G (e^{-i \langle p_m , \cdot \rangle} \sigma) \cong \pi_{m,1}$, which represents particles of mass $m>0$ and spin-1. We call this representation the \textit{Minkowski space representation of massive particle with spin-1}.

\begin{table}[h]
\caption{The perception and boosting bundles for the Minkowski space representation}
\label{tab:5}

\centering
\begin{tabular}{|m{1.2cm}|m{7cm}|m{5.5cm}|}
\hline\noalign{\smallskip}
  &  $E \text{ (The perception bundle)}$ &  $E_{L} \text{ (The boosting bundle)}$ \\
\noalign{\smallskip}\hline\noalign{\smallskip}
Bundle  &  $R= \left\{ (p, v) \in X_m \times \mathbb{C}^4 : p_\mu v^\mu = 0 \right\}$  &  $X_m  \times \mathbb{C}^3 $ \\
\noalign{\smallskip}\hline\noalign{\smallskip}
Metric &   $h \Big( (p , v) , (p, w) \Big)= v^\dagger \kappa(\frac{\utilde{p}}{m}) w = - v^\dagger \eta w$  &   $h_{L}  \Big( (p , v) , (p ,w) \Big)= v^\dagger w $ \\
\noalign{\smallskip}\hline\noalign{\smallskip}
Action &  $\lambda (a,\Lambda) (p , v)=  \Big( \Lambda p  , e^{-i \langle \Lambda p , a \rangle} \kappa (\Lambda) v \Big)$  &    $\lambda_{L }  (a, \Lambda) (p , v)= $ \newline $ \Big(\Lambda p , e^{-i \langle \Lambda p , a \rangle} \kappa ( W_{L} (\Lambda,p)) v \Big)$ \\
\noalign{\smallskip}\hline\noalign{\smallskip}
Space & $\mathcal{H} := L^2 \Big( X_m  , E ; \mu , h \Big)$ & $\mathcal{H}_{L }  := L^2 ( X_m   ; \mu ) \otimes \mathbb{C}^3$ \\
\noalign{\smallskip}\hline\noalign{\smallskip}
$\pi_{m,1}$ & $ U (a, \Lambda) \phi = \lambda (a, \Lambda) \circ \phi \circ \Lambda^{-1}$ & $U_{L} (a, \Lambda)\psi = \lambda_{L}  (a, \Lambda) \circ \psi \circ \Lambda^{-1}$ \\
\noalign{\smallskip}\hline
\end{tabular}
\end{table}

\subsubsection{The vector bundle point of view for massive particles with spin-1}\label{sec:7.3.1}

\hfill

As in Sect.~\ref{sec:7.2.1}, the description table Table~\ref{tab:5} tells us that if two inertial observers Alice and Bob, who are related by a Lorentz transformation $(a, \Lambda) \in G$ as in Eq.~(\ref{eq:2.11}), are using the two bundle descriptions to describe a massive particle with spin-1, then the descriptions should be related by

\begin{gather}
\lambda (a, \Lambda ) : E ^A \rightarrow E ^B \nonumber \\
(p,v)^A \mapsto \left(  \Lambda p, e^{- i (\Lambda p)_\mu a^\mu} \kappa \left(\Lambda \right) v \right)^B \label{eq:7.23}
\end{gather}
and
\begin{gather}
\lambda_{L} (a, \Lambda ) : E_{L} ^A \rightarrow E_{L} ^B \nonumber \\
(p,v)^A \mapsto \left(  \Lambda p, e^{- i (\Lambda p)_\mu a^\mu} \kappa \left(W_{L^\pm} ( \Lambda, p) \right) v \right)^B \label{eq:7.24},
\end{gather}
respectively.

\subsubsection{Physical interpretations of the two bundle descriptions}\label{sec:7.3.2}

\hfill

To find physical interpretations of the two bundle descriptions, we need to examine the three-level quantum system $\mathcal{H}_{\sigma} = \mathbb{C}^3$. Since
\begin{equation}\label{eq:7.25}
\hat{J}^3 := i (\sigma)_* (J^3) = \begin{pmatrix}  0 & -i & 0 \\  i & 0 & 0 \\  0 & 0 & 0 \end{pmatrix}
\end{equation}
in this case, the three vectors
\begin{equation}\label{eq:7.26}
e_1 = \frac{1}{\sqrt{2}} \begin{pmatrix} 1 \\ i \\ 0 \end{pmatrix}, e_0 = \begin{pmatrix} 0 \\ 0 \\ 1\end{pmatrix}, e_{-1} = \frac{1}{\sqrt{2}} \begin{pmatrix} i \\ 1 \\ 0 \end{pmatrix}
\end{equation}
are eigenvectors of the operator $\hat{J}^3$ with eigenvalues $1 , 0 , -1$, respectively.

The vectors $v \in \mathcal{H}_{\sigma}$ have the meaning of \textit{polarization} in QFT (cf. \cite{weinberg}). In this interpretation, $e_1$ gives the right-handed circular polarization, $e_{-1}$ the left-handed circular polarization, and $e_0$ the longitudinal polarization along the $\hat{z}$-axis, respectively. This interpretation becomes clear only when one expands the solutions of the Proca equations (Eq.~(\ref{eq:7.29})) as generalized linear combinations (i.e., integrals) of the plane wave solutions $e^{\pm i p_\mu x^\mu}$. In such an expression, each vector $v$ of $(p, v) \in X_m \times \mathbb{C}^3$ has the meaning of the polarization along the direction of each momentum $p \in X_m$. For more details, see Sect.~5.3 of \cite{weinberg} together with any physics textbook that deals with electromagnetic waves.

The point is that each element $(p_m , v)^A \in (E_{L})_{p_m} ^A = E_{p_m} ^A$ is a genuine (complex) three-vector in Alice's rest frame. So, in the $L (p)$-transformed Bob's frame, the vector $v$ is perceived as the four-vector $\kappa(L (p)) v \in \mathbb{C}^4$, which is precisely captured by the transformation law Eq.~(\ref{eq:7.23}). Therefore, we conclude that the elements in $E$ are "relativistic perception" of a fixed observer who is using $E$ for the description of a massive particle with spin-1.

As usual, Eq.~(\ref{eq:7.24}), evaluated at $(a,\Lambda) = (0, L(p))$ would give
\begin{equation}\label{eq:7.27}
\lambda_{L} (0, L(p)) (p_m , v )^A = (p, v)^B,
\end{equation}
which implies that the fibers of the bundle $E_{L}$ does not reflect the perception of a fixed inertial observer who is using $E_L$ for the description of a massive particle with spin-1, in contrast to the $E$-bundle description. So, we see that the interpretation of Sect.~\ref{sec:3.3} also holds for the Minkowski space representation of massive spin-1 particles.

\subsubsection{Theoretical implications on the representations}\label{sec:7.3.3}

\paragraph{The Proca equations as a manifestation of relativistic perception}

\hfill

Fix $\phi \in \mathcal{H}$. From Table~\ref{tab:5}, we have
\begin{equation}\label{eq:7.28}
[U (a,I) \phi ] (p) = \lambda (a,I) \circ \phi(p) = e^{-i \langle p , a \rangle} \phi(p), \quad \forall p \in X_m.
\end{equation}

If we define the four-momentum operators $P^\mu$ on $\mathcal{H}$ as in Eq.~(\ref{eq:4.10}), then Eq.~(\ref{eq:7.28}) and a computation analogous to Eq.~(\ref{eq:4.10}) would show that, by virtue of Eq.~(\ref{eq:7.21}) and Definition~\ref{definition:4.5},
\begin{equation}\label{eq:7.29}
[P_\mu \phi ^\mu](p) = p_\mu \phi(p) ^\mu = 0, \quad [P_\mu P^\mu \phi](p) = p_\mu p^\mu \phi(p) = m^2 \phi(p)
\end{equation}
always hold for ${ }^\forall \phi \in \mathcal{H}$.

The set of equations Eq.~(\ref{eq:7.29}) are the \textit{Proca equations}\footnote{In Sect.~\ref{sec:7.4}, we will see how Eq.~(\ref{eq:7.29}) can be converted into the more familiar form of differential equation. Cf. Eq.~(\ref{eq:7.36}).} which are obeyed by massive spin-1 particles and become the Maxwell equations with the Lorentz gauge condition when $m \rightarrow 0$ (cf. \cite{weinberg}). As in Sect.~\ref{sec:7.2.2}, we remark that given the interpretations of the bundle $E$ presented in Sect.~\ref{sec:7.3.2}, we find that \textit{the Proca equations are nothing but a manifestation of a fixed inertial observer's perception of the internal quantum states of massive particles with spin-1}. In fact, as one can see from Eq.~(\ref{eq:7.21}), the Proca equations manifest themselves even in the level of elements in the fibers of the perception bundle $E$ as perceived by a fixed inertial observer.

\subsection{A link between the theory of relativistic quantum measurement}\label{sec:7.4}

Even though we have not covered any aspect of measurement in this paper, one must be conversant with the theory of relativistic quantum measurement in order to apply the mathematical framework developed in this paper to actual problems of RQI. Particularly, a relativistic measurement theory based on foliations of space-time and the Schwinger-Tomonaga equation (cf. Ch.~11 of \cite{breuer}) can be applied to the single-particle state spaces analyzed in Sect.~\ref{sec:7}. We want to indicate how in this subsection.

\paragraph{Dirac particles}

\hfill

Let $\phi \in \mathcal{H}^\pm $ (cf. Table~\ref{tab:4}). For each $t \in \mathbb{R}$, we define
\begin{equation}\label{eq:7.30}
\psi(t, \mathbf{x}) := \int_{X_m ^\pm} \exp \left(- i p^0 t + i \mathbf{p} \cdot \mathbf{x} \right) \sqrt{m} \phi( p) \frac{d \mu^\pm (p)}{ (2 \pi)^{\frac{3}{2}} },
 \quad \mathbf{x} \in \mathbb{R}^3.
\end{equation}

Observe that
\begin{equation*}
\int_{\mathbb{R}^3} \left( \frac{\sqrt{m}\phi(p)} {|p^0|} \right)^\dagger \left( \frac{\sqrt{m}\phi(p)} {|p^0|} \right)  d^3 \mathbf{p} = \int_{X_m ^\pm} h_p ^\pm \big(\phi(p), \phi(p) \big) d \mu^\pm (p) = \| \phi \|_{\mathcal{H}^\pm} ^2
\end{equation*}
by the definition of the norm $\| \cdot \|_{\mathcal{H}^\pm}$ of $\mathcal{H}^\pm$ and Eqs.~(\ref{eq:6.2}), (\ref{eq:7.6}). Therefore, we see that for each $t \in \mathbb{R}$, we have $\psi (t, \cdot ) \in L^2 (\mathbb{R}^3 , \mathbb{C}^4 )$ and each map
\begin{equation}\label{eq:7.31}
\mathcal{H}^\pm \ni \phi \mapsto \psi(t, \mathbf{x}) \in L^2 (\mathbb{R}^3 , \mathbb{C}^4)
\end{equation}
is an isometry (by Plancherel's theorem). Note that Eq.~(\ref{eq:7.30}) is the four-dimensional Fourier inversion formula restricted to the mass shells $X_m ^\pm$.

So, we see that the function $\psi$ defined on the spacetime $\mathbb{R}^4$ at least gives rise to an $L^2(\mathbb{R}^3 , \mathbb{C}^4)$-valued functional defined on the set of foliations by spacelike hyperplanes of the spacetime (such as $\big\{ \{t\} \times \mathbb{R}^3 \big\}_{t \in \mathbb{R}}$ or the foliations obtained from it by applying Lorentz transformations). On this functional, we can apply the formalism of \cite{breuer} to test the relativistic measurement schemes developed there on the state $\phi$ of the perception space $\mathcal{H}^\pm$.

As a final note, observe that if we think of $\psi (x) = \psi(t, \mathbf{x})$ as a function defined on the Minkowski space $\mathbb{R}^4$, then formally we have

\begin{equation}\label{eq:7.32}
i \gamma^\mu \frac{\partial}{\partial x^\mu} \psi  = m \psi 
\end{equation}
by Eqs.~(\ref{eq:7.30}) and (\ref{eq:7.16}). I.e., $\psi$ satisfies the Dirac equation.

\paragraph{Massive particles with spin-1}

\hfill

Again, we restrict our attention to the mass shell $X_m ^+$ and suppress all the $+$-signs in the superscripts. Let $\phi \in \mathcal{H}$ (cf. Table~\ref{tab:5}). For each $t \in \mathbb{R}$, we define
\begin{equation}\label{eq:7.33}
\psi(t, \mathbf{x}) := \int_{X_m } \exp \left(- i p^0 t + i \mathbf{p} \cdot \mathbf{x} \right) \sqrt{p^0} \phi( p) \frac{d \mu (p)}{ (2 \pi)^{\frac{3}{2} }},
 \quad \mathbf{x} \in \mathbb{R}^3.
\end{equation}

Then,
\begin{equation*}
\int_{\mathbb{R}^3} \left(\frac{\phi(p)}{ \sqrt{p^0}} \right)^\dagger (-\eta) \left(\frac{\phi(p)}{ \sqrt{p^0}} \right) d^3 \mathbf{p} = \int_{X_m ^\pm} h_p \big(\phi(p), \phi(p) \big) d \mu (p) = \| \phi \|_{\mathcal{H}} ^2
\end{equation*}
by the definition of the norm $\| \cdot \|_\mathcal{H}$ of $\mathcal{H}$ and Eqs.~(\ref{eq:6.2}), (\ref{eq:7.22}). Since, for arbitrary $\mathbb{C}^4$-valued Schwartz class functions $f ,g $ on $\mathbb{R}^3$,
\begin{equation*}
\int_{\mathbb{R}^3} \hat{f} (\mathbf{x})^\dagger (-\eta) \hat{g}(\mathbf{x}) d^3 \mathbf{x} = \int_{\mathbb{R}^3} f (\mathbf{y})^\dagger (-\eta) g ( \mathbf{y}) d^3 \mathbf{y}
\end{equation*}
(here $ \hat{ ( \cdot )}$ denotes the Fourier transform) holds (cf. \cite{rudin3}), we see that for each $t \in \mathbb{R}$, we have
\begin{equation}\label{eq:7.34}
\psi (t, \cdot ) \in \mathcal{K} := \left\{ \varphi \in L^2 (\mathbb{R}^3 , \mathbb{C}^4 ) : 0 \leq \int_{\mathbb{R}^3} \varphi (\mathbf{x}) ^\dagger (-\eta) \varphi(\mathbf{x}) d^3 \mathbf{x} \right\}
\end{equation}
and each map
\begin{equation}\label{eq:7.35}
\mathcal{H} \ni \phi \mapsto \psi(t, \mathbf{x}) \in \mathcal{K}
\end{equation}
is an isometry if we endow $\mathcal{K}$ with the obvious inner product that uses $-\eta$.

As in the case of Dirac particles, we can use the function $\psi$ defined on $\mathbb{R}^4$ to obtain a $\mathcal{K}$-valued functional on the set of foliations by spacelike hyperplanes of the spacetime, thus providing a link between the theory of relativistic measurement.

Finally, regarding $\psi(x) = \psi(t, \mathbf{x})$ as a function defined on the Minkowski space $\mathbb{R}^4$, we obtain
\begin{equation} \label{eq:7.36}
\frac{\partial}{\partial x^\mu} \psi^\mu = 0, \quad \frac{\partial}{\partial x^\mu} \frac{\partial}{\partial x_\mu} \psi = m^2 \psi
\end{equation}
by Eqs.~(\ref{eq:7.33}) and (\ref{eq:7.29}). I.e., $\psi$ satisfies the Proca equations.

\section{Concluding remarks and future research}\label{sec:8}

From the basic principles of SR and QM, we have obtained the definition of single-particle state spaces, which are the smallest possible quantum systems in which one can test relativistic considerations. We briefly surveyed the pioneering works of RQI and observed that the notions of spin state independent of momentum, spin entropy, and spin entanglement, which are important quantum informational resources, are not relativistically meaningful. Rephrasing the definition of single-particle state spaces in terms of the bundle theoretic language which is developed in this paper, we were able to figure out the root of the problem. Namely, the boosting bundle description, which has been used in the RQI literature almost exclusively, does not correctly reflect the perception of a fixed inertial observer and therefore the definitions of the above notions become illegitimate algebraic operations.

We have seen that the perception bundle description is free from this issue and hence can be used as a kind of coordinate system for a moving finite-dimensional quantum system which naturally extends the classical coordinate for a moving classical particle. We have extended the bundle descriptions to the case of massive particle with arbitrary spin, observed that the results for the spin-1/2 case holds in full generality, and defined the Pauli-Lubansky reduced matrix for massive particles with arbitrary spin, which is a Lorentz covariant $(2s +1 ) \times (2s +1)$-matrix containing information of average internal quantum state as perceived by a fixed inertial observer. As an application of the perception bundle description developed in this paper, we have seen that the Dirac equation and the Proca equations, which are fundamental equations of QFT obeyed by massive particles with spin-1/2 and 1, respectively, emerge as manifestations of a fixed inertial observer's perception of the internal quantum states of massive particles with spin-1/2 and 1, respectively. We also briefly indicated a link between the formalism developed in this paper and the theory of relativistic quantum measurement.

While this paper has laid the mathematical foundation for a new framework of single-particle state spaces better suited for RQI investigation and seen some striking theoretical implications of the framework, it has not given any application of the perception bundle description to actual problems of RQI. Given the conceptual advantages of this description over the more standard boosting bundle description as shown in this paper, it is very likely that recasting subtle problems of RQI in terms of the perception bundle description will give profound insight into the problems. An approach closely related to this has appeared only very recently in \cite{caban2018, caban2019}. Interested researchers are invited to pursue this direction of study.

In the sequel to this work, however, the author is planning to investigate \textit{massless particles with helicity} by applying the mathematical theory developed in this paper. More precisely, we are going to give the massless analogues of the boosting and perception bundle descriptions, survey some of the RQI papers that deal with massless particles, see if the same interpretations are possible, and draw some interesting theoretical implications from them.

\appendix

\section{Proof of Theorem~\ref{theorem:5.5}}\label{sec:A}

\def\thesection{\Alph{section}}

In this appendix, we provide a proof for Theorem~\ref{theorem:5.5}. 

\subsection{Preliminaries}\label{sec:A.1}

To prove Theorem~\ref{theorem:5.5}, we need to relate the two induced representation constructions of Definitions~\ref{definition:4.1} and \ref{definition:5.3}. The relation is provided by the language of principal fiber bundle and associated bundles.

\paragraph{Principal fiber bundle and associated bundle}

\hfill

The associated bundle construction of a principal fiber bundle is the primary source of the Hermitian $G$-bundles that will be addressed in this paper. The main reference for this construction is \cite{tu}, Ch.~6.

\begin{definition}\label{definition:A.1}
Let $H$ be a Lie group and $P$ be a right $H$-manifold whose action is smooth and free. A smooth map $P \xrightarrow{\pi} M$ is called a \textit{principal $H$-bundle} if, for every $x \in M$, there is an open set $x \in U \subseteq M$ and an $H$-equivariant fiber preserving diffeomorphism
\begin{equation*}
\begin{tikzcd}[baseline=(current  bounding  box.center)]
U \times H  \arrow[rr, "\phi"] \arrow[dr ]
& & \pi^{-1} (U)  \arrow{dl} \\
& U &
\end{tikzcd}.
\end{equation*} 
\end{definition}

If $P \xrightarrow{\pi} M$ is a principal $H$-bundle, then the right action of $H$ on $P$ is free and proper, and $P/H \cong M$. Conversely, it is easy to show that if an action of $H$ on $P$ is free and proper, then $P \rightarrow P/H$ is a principal $H$-bundle (cf. \cite{lee}, Ch.~21).

In particular, for every Lie group $G$ and a closed subgroup $H$, the right multiplicative action of $H$ on $G$ is free and proper and hence $G \rightarrow G/H$ is a principal $H$-bundle. This particular class of principal bundles will be of paramount importance in what follows.

\begin{proposition}\label{proposition:A.2}
Fix a principal $H$-bundle $P \xrightarrow{\pi} M$. Let $\sigma :H \rightarrow GL(V)$ be a Lie group representation. Then, $P \times V$ becomes a right $H$-space with the action $(p,v) \cdot h = (ph, \sigma(h)^{-1} v )$. The orbit space of this action, denoted by $P \times_{\sigma} V$ becomes a vector bundle over $M$ with fiber $V$, called the \textbf{bundle associated with $(P, \sigma)$}, whose projection map $P \times_\sigma V \xrightarrow{\xi} M$ is induced from the following commutative diagram.
\begin{equation*}
    \begin{tikzcd}[baseline=(current  bounding  box.center)]
P \times V \arrow["\textup{pr}_1"]{r} \arrow{d}[swap]{\textup{proj.}}
& P \arrow["\pi"]{d}  \\
P \times_\sigma V  \arrow[r, dotted, "\xi"]
& M
    \end{tikzcd}
\end{equation*}

We denote the quivalence class of $(p,v) \in P \times V$ by $[p,v] \in P\times_\sigma V$.
\end{proposition}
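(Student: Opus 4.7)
The plan is to verify three things in sequence: that the right $H$-action on $P \times V$ admits a smooth quotient manifold, that $\xi$ is a well-defined smooth map, and that $\xi$ admits local trivializations with linear transition functions on overlaps, making it a vector bundle.

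First, the prescription $(p,v) \cdot h := (ph, \sigma(h)^{-1} v)$ is easily checked to be a smooth right $H$-action; it is free because $(ph, \sigma(h)^{-1} v) = (p, v)$ forces $ph = p$, hence $h = e$ by freeness of the action on $P$. To see that the quotient carries a smooth manifold structure I would not try to invoke any properness on $V$ (which could fail for non-compact $H$), but instead work locally: pick a principal bundle trivialization $\phi: \pi^{-1}(U) \xrightarrow{\cong} U \times H$, which induces an $H$-equivariant diffeomorphism $\pi^{-1}(U) \times V \cong U \times H \times V$ whose $H$-action is only on the middle factor. The quotient is thus locally $U \times V$, and these local quotient manifold structures glue compatibly across overlaps, so $P \times_\sigma V$ carries a well-defined smooth manifold structure with $P \times V \to P \times_\sigma V$ a smooth surjective submersion.

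Next, since $\pi \circ \textup{pr}_1$ is constant on $H$-orbits, it descends to a unique smooth map $\xi: P \times_\sigma V \to M$ making the stated square commute, and it is manifestly surjective. For local triviality, I would define, on each trivializing open $U$ as above, a candidate trivialization $\Psi_U: \xi^{-1}(U) \to U \times V$ by $\Psi_U\bigl([\phi^{-1}(x, h), v]\bigr) := (x, \sigma(h) v)$. Well-definedness on orbits follows from
\begin{equation*}
[\phi^{-1}(x, h) \cdot k,\, \sigma(k)^{-1} v] = [\phi^{-1}(x, hk),\, \sigma(k)^{-1} v] \longmapsto (x, \sigma(h)\sigma(k)\sigma(k)^{-1} v) = (x, \sigma(h)v),
\end{equation*}
and $\Psi_U$ is a smooth bijection with smooth inverse $(x, v) \mapsto [\phi^{-1}(x, e), v]$. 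If $\phi_\alpha, \phi_\beta$ are two such trivializations with transition cocycle $g_{\beta\alpha}: U_\alpha \cap U_\beta \to H$ satisfying $\phi_\beta \circ \phi_\alpha^{-1}(x, h) = (x, g_{\beta\alpha}(x) h)$, a direct substitution yields $\Psi_\beta \circ \Psi_\alpha^{-1}(x, v) = (x, \sigma(g_{\beta\alpha}(x)) v)$, which is fiberwise linear. This determines a compatible vector space structure on each fiber of $\xi$, completing the proof that $P \times_\sigma V \xrightarrow{\xi} M$ is a smooth vector bundle of rank $\dim V$.

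The main potential obstacle is not any deep theorem but rather the bookkeeping around the well-definedness of $\Psi_U$ and the intrinsic nature of the fiberwise linear structure. The key identity $[ph, v] = [p, \sigma(h) v]$ is what saves the day: it shows that changing the representative $p \in \pi^{-1}(x)$ by right multiplication by $h$ conjugates the bijection $V \to \xi^{-1}(x)$, $v \mapsto [p, v]$, by the linear automorphism $\sigma(h)$, so the vector space structure transported to $\xi^{-1}(x)$ is independent of the choice of representative. Once this is clear, all remaining verifications reduce to tracing through the commutative diagrams above.
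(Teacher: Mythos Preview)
Your proof is correct and takes essentially the same approach as the paper: the paper's proof is a single sentence stating that a principal bundle trivialization $\phi: U \times H \to \pi^{-1}(U)$ yields the associated bundle trivialization $(x,v) \mapsto [\phi(x,e), v]$, which is precisely the inverse of your $\Psi_U$. You have simply filled in the details the paper omits (well-definedness, transition functions, the intrinsic fiberwise linear structure).
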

\begin{proof}
To each local trivialization $U \times H \xrightarrow{\phi} \pi^{-1} (U)$ for $P$ corresponds a local trivialization $ U \times V \rightarrow \xi^{-1}(U)$ for $P \times_\sigma V$ given by $(x,v) \mapsto [\phi(x,e) , v]$.  
\end{proof}

Let $\mathcal{E}:= P \times_{\sigma} V$. For each $x \in M$ and $p \in \pi^{-1} (x)$, the map $[p, \cdot \hspace{0.1cm} ] : V \rightarrow \mathcal{E}_x$ given by $v \mapsto [p, v]$ is a vector space isomorphism. For each $p \in P$ and $ s \in H$, there is a commutative diagram of vector space isomorphisms
\begin{equation}\label{eq:A.1}
    \begin{tikzcd}[baseline=(current  bounding  box.center)]
V \arrow["\sigma(s)"]{rr} \arrow{dr}[swap]{[p s, \cdot \hspace{0.1cm}]}
& & V \arrow["{[p, \cdot \hspace{0.1cm}]}"]{dl}  \\
&\mathcal{E}_x &
    \end{tikzcd}
\end{equation}
by the definition of $\mathcal{E}$.

\begin{proposition}\label{proposition:A.3}
Let $\mathcal{E}:= P \times_\sigma \mathcal{H}_\sigma$ be an associated bundle where $\sigma : H \rightarrow U((\mathcal{H}_\sigma , \langle \cdot , \cdot \rangle_\sigma)$ is a unitary representation. Then, the map $g: \mathcal{E} \otimes \mathcal{E} \rightarrow \mathbb{C}$ defined in each fiber as
\begin{equation}\label{eq:A.2}
[p,v] \otimes [p,w] \mapsto \langle v,w \rangle_\sigma
\end{equation}
is a well-defined (Hermitian) metric on the bundle $\mathcal{E}$, making $\mathcal{E}$ an Hermitian bundle over $M$. For an associated bundle of this form, we always regard it as an Hermitian bundle endowed with this metric.
\end{proposition}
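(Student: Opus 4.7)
The plan is to verify three things in order: (i) well-definedness of the pointwise formula, (ii) the pointwise algebraic properties of an Hermitian inner product on each fiber, and (iii) smoothness of the resulting section of $\mathcal{E}^* \otimes \overline{\mathcal{E}}^*$. The only substantive issue is (i); the unitarity assumption on $\sigma$ was put in precisely for this step, and the rest follows mechanically from the diagram Eq.~(\ref{eq:A.1}) and local trivializations.

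For (i), fix $x \in M$ and let $e, f \in \mathcal{E}_x$. Given any $p \in \pi^{-1}(x)$ the isomorphism $[p, \cdot\hspace{0.1cm}]: \mathcal{H}_\sigma \to \mathcal{E}_x$ lets us write $e = [p, v]$ and $f = [p, w]$ for unique $v, w \in \mathcal{H}_\sigma$. If $p' \in \pi^{-1}(x)$ is another point, there is a unique $h \in H$ with $p' = ph$, and then by the defining equivalence relation on $P \times \mathcal{H}_\sigma$ we have $e = [p', \sigma(h)^{-1} v]$ and $f = [p', \sigma(h)^{-1} w]$. So I must check
\begin{equation*}
\langle \sigma(h)^{-1} v, \sigma(h)^{-1} w \rangle_\sigma = \langle v, w \rangle_\sigma,
\end{equation*}
which is immediate from the unitarity of $\sigma(h)$. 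Thus the formula Eq.~(\ref{eq:A.2}) descends unambiguously to a well-defined map $g_x: \mathcal{E}_x \otimes \overline{\mathcal{E}_x} \to \mathbb{C}$.

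For (ii), fix any $p \in \pi^{-1}(x)$. Since $[p, \cdot\hspace{0.1cm}]$ is a $\mathbb{C}$-linear isomorphism, the pullback of $g_x$ along this isomorphism is exactly $\langle \cdot, \cdot\rangle_\sigma$. Sesquilinearity, Hermitian symmetry, and positive-definiteness of $g_x$ therefore transfer directly from the corresponding properties of $\langle \cdot, \cdot \rangle_\sigma$.

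For (iii), I would use a local trivialization $\phi: U \times H \to \pi^{-1}(U)$ and the induced local trivialization $\Psi: U \times \mathcal{H}_\sigma \to \xi^{-1}(U)$, $(x, v) \mapsto [\phi(x, e), v]$, described in the proof of Proposition~\ref{proposition:A.2}. In this trivialization the formula for $g$ reads $g_x \big( \Psi(x, v), \Psi(x, w) \big) = \langle v, w\rangle_\sigma$, i.e.\ it is the constant inner product $\langle \cdot, \cdot\rangle_\sigma$ on the trivial bundle $U \times \mathcal{H}_\sigma$, which is manifestly smooth. Since smoothness of a metric is a local property, this completes the proof. The only conceptually delicate step is (i), and its resolution is nothing more than the observation that unitarity of $\sigma$ is exactly the compatibility condition needed so that $\langle \cdot, \cdot\rangle_\sigma$ passes to the quotient $P \times_\sigma \mathcal{H}_\sigma$.
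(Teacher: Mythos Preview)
Your proof is correct and follows essentially the same approach as the paper's: the paper invokes the commutative diagram Eq.~(\ref{eq:A.1}) to say the inner product transplanted via $[p,\cdot\,]$ is independent of the choice of $p$, which is exactly your step~(i) spelled out explicitly via unitarity of $\sigma(h)$, and then disposes of smoothness by appealing to local trivializations just as you do in step~(iii). Your treatment is simply a more detailed version of the same argument.
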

\begin{proof}
The diagram \eqref{eq:A.1} tells us that we can define an inner product on $\mathcal{E}_x$ by transplanting the inner product of $V$ into $\mathcal{E}_x$ via the map $[p, \cdot \hspace{0.1cm}]$ and it does not depend on the choice of $p \in \pi^{-1} (x)$. That is, the map $\mathcal{E}_x \otimes \mathcal{E}_x  \rightarrow \mathbb{C}$ defined by $[(p,v) ] \otimes [(p,w)] \mapsto \langle v , w \rangle_\sigma$ is a well-defined inner product on $\mathcal{E}_x$. The smoothness is easily checked using local trivializations.  
\end{proof}

\begin{proposition}\label{proposition:A.4}
Let $\mathcal{E}:= P \times_\sigma V$ be an associated bundle. Let
\begin{subequations}\label{eq:A.3}
\begin{align}
C_\sigma (P,V) &:=\left \{ \phi \in C (P,V) : \phi(p s) = \sigma(s)^{-1} \phi (p), \hspace{0.1cm} p \in P, s \in H \right\}, \label{eq:A.3a} \\
C_\sigma ^\infty (P,V) &:=\left \{ \phi \in C^\infty (P,V) : \phi(p s) = \sigma(s)^{-1} \phi (p), \hspace{0.1cm} p \in P, s \in H \right\}, \label{eq:A.3b} \\
\mathcal{B}_\sigma (P,V) &:=\left \{ \phi \in \mathcal{B} (P,V) : \phi(p s) = \sigma(s)^{-1} \phi (p), \hspace{0.1cm} p \in P, s \in H \right\}. \label{eq:A.3c}
\end{align}
\end{subequations}

Then, there are linear isomorphisms
\begin{subequations}\label{eq:A.4}
\begin{align}
&\sharp : C(M, \mathcal{E}) \rightarrow C_\sigma (P,V) \label{eq:A.4a} \\
&\sharp : C^\infty (M, \mathcal{E}) \rightarrow C_\sigma ^\infty (P,V) \label{eq:A.4b} \\
&\sharp : \mathcal{B} (M, \mathcal{E}) \rightarrow \mathcal{B}_\sigma (P,V) \label{eq:A.4c}
\end{align}
\end{subequations}
all given by the same formula
\begin{equation}\label{eq:A.5}
\psi(x) = \left[ p, \psi^\sharp (p) \right]
\end{equation}
where $p \in P$ is any element in the fiber $\pi^{-1} (x)$. The inverses will be denoted as $\flat$, i.e., $\phi^\flat (x) = [p, \phi(p)]$.

Let
\begin{subequations}\label{eq:A.6}
\begin{align}\label{eq:A.6a}
\mathcal{F}_0 := \Big\{ \phi \in \mathcal{B} (P, V) : \phi (ps) =& \sigma(s)^{-1} \phi(p), \hspace{0.1cm} p \in P, s \in H \nonumber \\  &\&  \int_M g( \phi^\flat ( x) , \phi^\flat (x) ) d\mu (x) < \infty \Big\}
\end{align}
and $N := \{ \phi \in \mathcal{F}_0 : \phi^\flat = 0 \textup{ $\mu$-almost everywhere} \} \leq \mathcal{F}_0$, and put
\begin{equation}\label{eq:A.6b}
\mathcal{F} := \mathcal{F}_0 /N
\end{equation}
\end{subequations}

Then, the isomorphism Eq.~(\ref{eq:A.4c}) induces a linear isomorphism
\begin{equation}\label{eq:A.7}
\sharp : L^2(M, \mathcal{E};\mu,g) \rightarrow \mathcal{F}. 
\end{equation}
\end{proposition}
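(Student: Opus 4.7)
The plan is to build the maps $\sharp$ and $\flat$ pointwise using the fiber isomorphism $[p,\,\cdot\,]\colon V \to \mathcal{E}_{\pi(p)}$ of Eq.~(\ref{eq:A.1}), then upgrade the pointwise bijection to isomorphisms of regularity classes via local trivializations, and finally descend to the $L^2$ level by matching the pointwise norms.

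First I would define, for $\psi \in \mathcal{B}(M,\mathcal{E})$ and $p \in P$, the unique element $\psi^\sharp(p) \in V$ satisfying $[p,\psi^\sharp(p)] = \psi(\pi(p))$; this is legitimate because $[p,\,\cdot\,]$ is a vector space isomorphism onto $\mathcal{E}_{\pi(p)}$. The equivariance $\psi^\sharp(ps) = \sigma(s)^{-1}\psi^\sharp(p)$ is then forced by the defining quotient relation of the associated bundle: $[ps,\sigma(s)^{-1}\psi^\sharp(p)] = [p,\psi^\sharp(p)] = \psi(\pi(ps)) = [ps,\psi^\sharp(ps)]$, and injectivity of $[ps,\,\cdot\,]$ gives the identity. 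Conversely, for equivariant $\phi$ the formula $\phi^\flat(x) := [p,\phi(p)]$ is independent of $p \in \pi^{-1}(x)$ by the same relation, so $\flat$ is well defined and manifestly inverse to $\sharp$. Linearity is immediate from the linearity of $[p,\,\cdot\,]$.

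To obtain the three regularity statements (\ref{eq:A.4a})--(\ref{eq:A.4c}) simultaneously, I would work in a principal trivialization $\phi_\alpha\colon U_\alpha \times H \xrightarrow{\cong} \pi^{-1}(U_\alpha)$, which induces a trivialization of $\mathcal{E}$ by $(x,v) \mapsto [\phi_\alpha(x,e),v]$. In these coordinates a section $\psi$ corresponds to a function $f\colon U_\alpha \to V$, while $\psi^\sharp$ reads $\psi^\sharp(\phi_\alpha(x,h)) = \sigma(h)^{-1} f(x)$. Since $\sigma\colon H \to GL(V)$ is smooth, continuous, and Borel, $f$ has any given regularity precisely when $\psi^\sharp$ does on $\pi^{-1}(U_\alpha)$; because all three regularity notions are local, the claimed isomorphisms follow. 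For the $L^2$ statement, the key observation is the pointwise identity
\begin{equation*}
g_x\bigl(\psi(x),\psi(x)\bigr) = \bigl\langle \psi^\sharp(p),\psi^\sharp(p)\bigr\rangle_\sigma
\end{equation*}
valid for any $p \in \pi^{-1}(x)$, which is just the definition of the metric $g$ in Eq.~(\ref{eq:A.2}). Unitarity of $\sigma$ together with equivariance of $\psi^\sharp$ makes the right-hand side independent of the choice of $p$, so it descends to a nonnegative Borel function on $M$. This identifies the integrability condition defining $\mathcal{F}_0$ in Eq.~(\ref{eq:A.6a}) with the one defining $L^2(M,\mathcal{E};\mu,g)$, and identifies the null subspace $N$ with the $\mu$-a.e.-zero sections on the other side, producing the isomorphism (\ref{eq:A.7}) upon passing to quotients.

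The proof is essentially formal, so the only real obstacle is bookkeeping: one must confirm that the trivialization-based argument genuinely patches across overlaps (where the cocycle is absorbed into $\sigma$ in a consistent way), and that the null subspaces on both sides correspond under $\sharp$. Both are immediate once the pointwise norm identity above is in hand, so no substantive difficulty is expected; the remaining work is verifying that $\sharp$ is an \emph{isometric} linear bijection between the pre-Hilbert quotients, which then extends uniquely to the completions if one wishes to state it in Hilbert space form.
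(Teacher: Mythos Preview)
Your proof is correct and follows essentially the same approach as the paper: define $\psi^\sharp(p)$ via the fiber isomorphism $[p,\,\cdot\,]^{-1}$, check equivariance from the quotient relation, use local trivializations to transfer regularity, and match the $L^2$ conditions via the pointwise norm identity coming from the definition of $g$. The paper's own proof is terser (it dismisses the regularity and $L^2$ parts as ``an easy matter'' and ``a straightforward calculation''), so your version simply fills in the details it omits.
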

\begin{proof}
Given a section $\psi :M \rightarrow \mathcal{E}$, define $\psi^\sharp (p) = \big([p, \cdot \hspace{0.1cm}]\big)^{-1} ( \psi (\pi (p) ) )$. Then, for $p \in P$ and $ s \in G$,
\begin{equation*}
\psi^\sharp (p s ) = \big([ps , \cdot \hspace{0.1cm}]\big)^{-1} (\psi (\pi (ps) )) = \sigma(s)^{-1} \psi(p).
\end{equation*}

Also, given a map $ \phi :P \rightarrow V$ satisfying $\phi(ps) = \sigma(s)^{-1} \phi(p)$ for ${}^\forall p \in P, {}^\forall s \in H$, the expression $\phi^\flat (x) = [p, \phi(p)]$ does not depend on the choice of $p \in \pi^{-1}(x)$ and gives a well-defined section $\phi^\flat : M \rightarrow \mathcal{E}$.

Using smooth local trivializations for $P$ and corresponding trivializations for $\mathcal{E}$ (cf. the proof of Proposition~\ref{proposition:A.2}), it is an easy matter to check that $\sharp$ and $\flat$ preserve continuity, smoothness, and Borel measurability. Also, it is easy to see that the two maps are inverses to each other. The rest is a straightforward calculation.  
\end{proof}

\paragraph{Associated Hermitian \texorpdfstring{$G$}{TEXT}-bundles}

\hfill

Let $G$ be a Lie group and $H \leq G$ be a closed subgroup. Consider the principal $H$-bundle $G \rightarrow G/H$. If $\sigma : H \rightarrow U(\mathcal{H}_\sigma)$ is a unitary representation, then the associated bundle $G \times_\sigma \mathcal{H}_\sigma \rightarrow G/H$ is an Hermitian bundle over $G/H$ with the metric $g$ given by Eq.~\ref{eq:A.2} according to Proposition~\ref{proposition:A.3}.

\begin{proposition}\label{proposition:A.5}
The bundle $G \times_\sigma \mathcal{H}_\sigma \rightarrow G/H$ becomes an Hermitian $G$-bundle over $G/H $ with the action
\begin{equation}\label{eq:A.8}
\Lambda_x ( [y, v] ) = [xy, v]
\end{equation}
covering the left multiplication map $l_x : G/H \rightarrow G/H$ on the base space.
\end{proposition}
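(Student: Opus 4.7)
The plan is to verify the definition of an Hermitian $G$-bundle (Definition~\ref{definition:5.2}) piece by piece for the candidate action $\Lambda_x([y,v]) = [xy, v]$, since every required property essentially reduces to an elementary computation in the quotient $G \times_\sigma \mathcal{H}_\sigma = (G \times \mathcal{H}_\sigma)/H$. I will proceed by first dealing with the well-definedness and functoriality, then the bundle-compatibility, then the isometry property.

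First I would show that $\Lambda_x$ descends from the map $(y,v) \mapsto (xy,v)$ on $G \times \mathcal{H}_\sigma$. If $[y,v] = [y',v']$, then $y' = yh$ and $v' = \sigma(h)^{-1}v$ for some $h \in H$, so $(xy', v') = (xyh, \sigma(h)^{-1}v) \sim (xy,v)$, giving $\Lambda_x([y',v']) = \Lambda_x([y,v])$. The group-action laws $\Lambda_e = \mathrm{id}$ and $\Lambda_x \Lambda_{x'} = \Lambda_{xx'}$ follow at once from $[ey,v] = [y,v]$ and $[x(x'y), v] = [(xx')y, v]$. Smoothness of $\Lambda : G \times (G \times_\sigma \mathcal{H}_\sigma) \to G \times_\sigma \mathcal{H}_\sigma$ follows from the smoothness of left multiplication $G \times G \to G$ (which lifts $\Lambda$ to $G \times \mathcal{H}_\sigma$) combined with the fact that the quotient projection $G \times \mathcal{H}_\sigma \to G \times_\sigma \mathcal{H}_\sigma$ is a smooth submersion; alternatively, one can verify smoothness directly in the local trivializations constructed in the proof of Proposition~\ref{proposition:A.2}.

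Next I would verify the bundle-compatibility, namely that diagram~(\ref{eq:5.4}) commutes with $\xi$ and $l_x$. Since $\xi([xy,v]) = xyH = x \cdot (yH) = l_x(\xi([y,v]))$, this is immediate, so $\Lambda_x$ restricts to a linear map between the fibers $(G\times_\sigma \mathcal{H}_\sigma)_{yH}$ and $(G\times_\sigma \mathcal{H}_\sigma)_{xyH}$. Linearity on fibers follows from $\Lambda_x([y, \alpha v + \beta w]) = [xy, \alpha v + \beta w] = \alpha[xy,v] + \beta[xy,w]$. Because the inverse is $\Lambda_{x^{-1}}$ (smooth by the same argument), each $\Lambda_x$ is a bundle isomorphism covering $l_x$.

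Finally, for the isometry property, I compute in the metric $g$ of Eq.~(\ref{eq:A.2}):
\begin{equation*}
g_{xyH}\bigl(\Lambda_x[y,v], \Lambda_x[y,w]\bigr) = g_{xyH}\bigl([xy,v], [xy,w]\bigr) = \langle v, w\rangle_\sigma = g_{yH}\bigl([y,v],[y,w]\bigr),
\end{equation*}
where the middle equality uses that $xy \in \pi^{-1}(xyH)$ so $[xy,\cdot]$ is the chosen trivializing isomorphism at the fiber $xyH$, whose pull-back of the fiber inner product is $\langle\cdot,\cdot\rangle_\sigma$ by construction (Proposition~\ref{proposition:A.3}). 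This completes the verification. No step looks genuinely hard: the only mildly subtle point is the smoothness of the action, which is handled cleanly by noting that $\Lambda$ is the descent of a smooth map on $G \times \mathcal{H}_\sigma$ under a smooth submersion.
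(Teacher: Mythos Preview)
Your proposal is correct and follows precisely the approach the paper has in mind: the paper's own proof is the single sentence ``The well-definedness and Hermiticity of the action are easily checked,'' and you have simply supplied those routine checks in full detail. There is nothing to add.
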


\begin{proof}
The well-definedness and Hermiticity of the action are easily checked.  
\end{proof}

The Hermitian $G$-bundle $(G \times_\sigma \mathcal{H}_\sigma , g, \Lambda)$ over $G/H$ will be denoted as $\mathcal{E}_\sigma$ and called the \textit{primitive bundle associated with $\sigma$}.

\paragraph{Induced representations and associated Hermitian \texorpdfstring{$G$}{TEXT}-bundles}

\hfill

We can rephrase the definition of induced representation, Definition~\ref{definition:4.1}, in terms of induced representation associated with Hermitian $G$-bundles (cf. Definition~\ref{definition:5.3}). Notice the similarity between Eqs.~(\ref{eq:4.1}) and Eqs.~(\ref{eq:A.6}).

\begin{theorem}\label{theorem:A.6}
Suppose $G/H$ has a $G$-invariant measure $\mu$ and consider the primitive bundle $\mathcal{E}_\sigma$ defined in the preceding paragraph. Denote the induced representation associated with $\mathcal{E}_\sigma$ as $U$. Then, the isomorphism Eq.~(\ref{eq:A.7}) gives a unitary equivalence between $\textup{Ind}_H ^G \sigma$ and $U$. I.e., for $x  \in G$ and $ \psi \in L^2 ( G/H , \mathcal{E}_\sigma ; \mu , g )$,
\begin{equation}\label{eq:A.9}
\textup{Ind}_H ^G \sigma (x) \circ \sharp (\psi) = \sharp \circ U (x) (\psi).
\end{equation}
\end{theorem}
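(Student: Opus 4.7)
The plan is to leverage everything that has already been set up in Proposition~\ref{proposition:A.4} and reduce the theorem to two independent verifications: that $\sharp$ is an isometry of Hilbert spaces, and that it intertwines the two group actions. The linear bijectivity of $\sharp$ between $L^2(G/H, \mathcal{E}_\sigma;\mu, g)$ and $\mathcal{F}$ is already packaged in Eq.~(\ref{eq:A.7}), so those two remaining points are the only content left to check.

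For the isometry claim, I would unwind the definition of the Hermitian metric $g$ from Proposition~\ref{proposition:A.3}. Fix $\psi, \phi \in L^2(G/H, \mathcal{E}_\sigma;\mu, g)$ and a point $xH \in G/H$. By definition of $\sharp$, one has $\psi(xH) = [x, \psi^\sharp(x)]$ and $\phi(xH) = [x, \phi^\sharp(x)]$, and then Eq.~(\ref{eq:A.2}) gives $g\big(\psi(xH), \phi(xH)\big) = \langle \psi^\sharp(x), \phi^\sharp(x)\rangle_\sigma$. The function on the right-hand side only depends on the coset $xH$ because of the equivariance condition in Eq.~(\ref{eq:A.3c}) combined with the unitarity of $\sigma$. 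Integrating against $\mu$ and comparing with the inner product on $\mathcal{F}$ built from Eq.~(\ref{eq:4.1}), one gets $\langle \psi, \phi \rangle_{L^2} = \langle \psi^\sharp, \phi^\sharp\rangle_\mathcal{F}$. So $\sharp$ is a Hilbert space isomorphism.

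For the intertwining identity, I would just compute both sides directly. On the one hand, by Definition~\ref{definition:5.3} and the action formula Eq.~(\ref{eq:A.8}),
\begin{equation*}
\big[U(x)\psi\big](yH) \;=\; \Lambda_x\big(\psi(x^{-1}yH)\big) \;=\; \Lambda_x\big([x^{-1}y, \psi^\sharp(x^{-1}y)]\big) \;=\; [y, \psi^\sharp(x^{-1}y)],
\end{equation*}
so the $\sharp$-transform of $U(x)\psi$ satisfies $[U(x)\psi]^\sharp(y) = \psi^\sharp(x^{-1}y)$. On the other hand, the definition of the induced representation, Eq.~(\ref{eq:4.2}), gives $\big[\operatorname{Ind}_H^G\sigma(x)\psi^\sharp\big](y) = \psi^\sharp(x^{-1}y)$. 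Comparing the two proves Eq.~(\ref{eq:A.9}).

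The genuinely delicate step is confirming that the formula $[U(x)\psi]^\sharp(y) = \psi^\sharp(x^{-1}y)$ really does land in the target space $\mathcal{F}_0$, that is, that it is well-defined on cosets and satisfies the equivariance condition. This comes down to the chain $[U(x)\psi]^\sharp(yh) = \psi^\sharp(x^{-1}yh) = \sigma(h)^{-1}\psi^\sharp(x^{-1}y)$, which uses precisely the equivariance of $\psi^\sharp$ inherited from $\psi \in L^2(G/H,\mathcal{E}_\sigma;\mu,g)$ via the equivalence relation defining $\mathcal{E}_\sigma = G \times_\sigma \mathcal{H}_\sigma$. Once this consistency is spelled out, everything else is a routine unravelling of definitions.
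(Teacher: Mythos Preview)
Your proposal is correct and follows essentially the same approach as the paper: both arguments reduce to the one-line computation showing that $\sharp$ conjugates $U(x)$ into left translation by $x^{-1}$ on functions, which is exactly $\textup{Ind}_H^G\sigma(x)$. The paper's proof is terser---it omits the explicit isometry verification and the equivariance check you flag as ``delicate,'' treating those as already absorbed into the construction of $\sharp$ in Proposition~\ref{proposition:A.4}---but your more detailed unpacking is entirely in line with it.
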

\begin{proof}
\begin{align*}
 \left( \flat \circ \textup{Ind}_H ^G \sigma (x) \circ \sharp (\psi) \right) (yH) &= \left[y, \left( \textup{Ind}_H ^G \sigma (x) \psi^\sharp \right) ( y) \right] \\
= \left[ y, \psi^\sharp (x^{-1} y) \right] &= L_x (\psi ( x^{-1} yH) ) = \left( U (x) \psi \right) (yH)
\end{align*} 
\end{proof}

Thus, in what follows, when speaking of an induced representation, we always mean $U$ and denote it as $\textup{Ind}_H ^G \sigma$.

\subsection{Perception bundle}\label{sec:A.2}

Consider the principal $H$-bundle $G \rightarrow G/H$.
\begin{theorem}\label{theorem:A.7}
Suppose that a unitary representation $\sigma : H \rightarrow U(\mathcal{H}_\sigma , \langle \cdot , \cdot \rangle_\sigma) $ extends to a (non-unitary) representation $\Phi : G \rightarrow GL (\mathcal{H}_{\sigma})$. Then, the bundle $\mathcal{E}_\sigma = G \times_{\sigma} \mathcal{H}_\sigma$ is trivial. In fact, there is a vector bundle isomorphism
\begin{equation}\label{eq:A.10}
\begin{tikzcd}[baseline=(current  bounding  box.center)]
\mathcal{E}_\sigma \arrow[rr, "{[x, v] \mapsto (xH , \Phi(x) v)}"] \arrow{dr}
&  & G/H \times \mathcal{H}_\sigma \arrow{dl} \\
& G/H &
\end{tikzcd}.
\end{equation}

Via this isomorphism, the Hermitian metric and $G$-action on $\mathcal{E}_\sigma$ are translated into the metric
\begin{equation}\label{eq:A.11}
h_{xH} (v, w ) = \langle v, \Phi (x)^{\dagger -1} \Phi (x)^{-1} w \rangle _\sigma
\end{equation}
(here $\dagger$ is the adjoint operation on the algebra of continuous operators on $\mathcal{H}_\sigma$) and the $G$-action
\begin{equation}\label{eq:A.12}
\lambda_x ( yH , v ) = (xy H, \Phi(x) v )
\end{equation}
on the RHS bundle, with respect to which the isomorphism Eq.~(\ref{eq:A.10}) becomes an Hermitian $G$-bundle isomorphism. The Hermitian $G$-bundle $(G/H \times \mathcal{H}_\sigma , h , \lambda)$ over $G/H$ will be denoted as $E_\sigma$.
\end{theorem}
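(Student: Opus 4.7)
The plan is to verify, in sequence, (i) well-definedness of the candidate map, (ii) that it is a bundle isomorphism, and (iii) that the stated metric and $G$-action on $G/H \times \mathcal{H}_\sigma$ are precisely the pushforwards of the structures on $\mathcal{E}_\sigma$ under this map. All three amount to routine calculations once one unpacks the equivalence relation $(x,v) \sim (xh, \sigma(h)^{-1}v)$ used in the associated bundle construction (Proposition~\ref{proposition:A.2}) together with the hypothesis $\Phi|_H = \sigma$.

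First I would show the map in Eq.~(\ref{eq:A.10}) is well-defined: for $(xh, \sigma(h)^{-1}v) \sim (x,v)$, using $\Phi(xh) = \Phi(x)\Phi(h) = \Phi(x)\sigma(h)$, one computes $(xhH, \Phi(xh)\sigma(h)^{-1}v) = (xH, \Phi(x)v)$. Injectivity: if $(xH, \Phi(x)v) = (yH, \Phi(y)w)$ with $y = xh$, then $\Phi(x)v = \Phi(x)\sigma(h)w$, so $v = \sigma(h)w$, whence $[y,w] = [xh,w] = [x,\sigma(h)w] = [x,v]$. Surjectivity on each fiber: $(xH, u)$ is the image of $[x, \Phi(x)^{-1}u]$. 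Smoothness and fiberwise linearity are manifest, so this is a bundle isomorphism over $G/H$.

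Next I would push forward the metric. By Proposition~\ref{proposition:A.3} the metric on $\mathcal{E}_\sigma$ is $g([x,v],[x,w]) = \langle v, w \rangle_\sigma$. Writing $v' := \Phi(x)v$ and $w' := \Phi(x)w$ for elements of the fiber of $G/H \times \mathcal{H}_\sigma$ over $xH$, this becomes $h_{xH}(v',w') = \langle \Phi(x)^{-1}v', \Phi(x)^{-1}w' \rangle_\sigma = \langle v', \Phi(x)^{\dagger -1}\Phi(x)^{-1} w' \rangle_\sigma$, which is Eq.~(\ref{eq:A.11}). One must then verify that this expression is independent of the representative of $xH$: replacing $x$ by $xh$ inserts a factor $\sigma(h)^{\dagger -1}\sigma(h)^{-1}$ between the two copies of $\Phi(x)^{\dagger -1}, \Phi(x)^{-1}$, and since $\sigma$ is unitary this factor is the identity. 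Similarly for the $G$-action, the formula $\Lambda_x[y,v] = [xy,v]$ translates, via $[y,v] \leftrightarrow (yH, \Phi(y)v)$ and $[xy,v] \leftrightarrow (xyH, \Phi(x)\Phi(y)v)$, into $\lambda_x(yH, v') = (xyH, \Phi(x)v')$ with $v' = \Phi(y)v$, which is precisely Eq.~(\ref{eq:A.12}); the fact that $\lambda_x$ covers left multiplication on the base and that $\Phi(x)$ is an isometry between the fibers over $yH$ and $xyH$ with respect to $h$ follows from the same computation used to verify well-definedness of $h$.

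The only point requiring genuine care — and the main potential pitfall — is the well-definedness of the pushforward metric $h$ on $G/H \times \mathcal{H}_\sigma$, since $\Phi(x)$ is not in general unitary (only $\Phi|_H = \sigma$ is assumed unitary). The key observation is that the combination $\Phi(x)^{\dagger -1}\Phi(x)^{-1}$ appearing in Eq.~(\ref{eq:A.11}) is exactly the object whose transformation under $x \mapsto xh$ is governed by $\sigma(h)^{\dagger -1}\sigma(h)^{-1}$, which collapses to the identity precisely because $\sigma$ is unitary; this is where the hypothesis that $\Phi$ extends a \emph{unitary} representation is used. Once this is checked, the remaining assertions — that $\lambda_x$ is an isometric bundle isomorphism and that Eq.~(\ref{eq:A.10}) intertwines $(g, \Lambda)$ with $(h, \lambda)$ — follow immediately from the definitions, and the theorem is proved.
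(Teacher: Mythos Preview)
Your proof is correct and follows essentially the same approach as the paper's own proof, which also verifies well-definedness of the map via $\Phi|_H = \sigma$, notes it is a fiberwise isomorphism, and checks well-definedness of $h$ via the computation $\Phi(xk)^{\dagger -1}\Phi(xk)^{-1} = \Phi(x)^{\dagger -1}\sigma(k)^{\dagger -1}\sigma(k)^{-1}\Phi(x)^{-1} = \Phi(x)^{\dagger -1}\Phi(x)^{-1}$. You simply spell out more detail (explicit injectivity/surjectivity, explicit pushforward of the action) where the paper says ``isomorphism at each fiber'' and ``the statement about the actions is easy.''
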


\begin{proof}
The map is well-defined since $\Phi|_H = \sigma$. It is an isomorphism at each fiber and hence a vector bundle isomorphism.

Also, note that for $ x \in G$ and $k \in H$,
\begin{equation*}
\Phi (xk)^{\dagger -1} \Phi (xk)^{-1} = \Phi (x) ^{\dagger -1} \sigma(k) ^{\dagger -1} \sigma (k)^{-1} \Phi(x)^{-1}= \Phi (x)^{\dagger -1} \Phi (x)^{-1}
\end{equation*}
since $\sigma$ is unitary. So, $h$ is a well-defined sesquilinear form at each fiber and it is easy to check that the map Eq.~(\ref{eq:A.10}) becomes a unitary map at each fiber with respect to these metrics. The statement about the actions is easy.  
\end{proof}

In the case of $\sigma_s$, which extends to $\Phi_s$ (cf. Eqs.~(\ref{eq:4.17})--(\ref{eq:4.18})), the bundle $E_s:=E_{\sigma_s}$ will be called the \textit{perception bundle for massive spin-$s$ particles} since the fibers of it will be shown to be "the moving spin systems as perceived by a fixed inertial observer", as we shall see in Sect.~\ref{sec:6}.

\subsection{Boosting bundle}\label{sec:A.3}

The method of Sect.~\ref{sec:A.2} is not the only way to show that the associated bundle $G\times_\sigma \mathcal{H}_\sigma$ is trivial. Let $P \xrightarrow{\pi} M$ be a principal $G$-bundle. Given a local section $s: U \rightarrow P$, we know that the map $U \times G \rightarrow P|_U$ given by
\begin{equation}\label{eq:A.13}
(x,g) \mapsto s(x) g
\end{equation}
is a local trivialization of $P$ (cf. \cite{tu}).

\begin{lemma}\label{lemma:A.8}
Let $\sigma : G \rightarrow U(V)$ be a representation. Given a (smooth) local section $s : U \rightarrow P$, the map
\begin{equation}\label{eq:A.14}
\begin{tikzcd}[baseline=(current  bounding  box.center), column sep=1.5em]
    U \times V \arrow ["{(x,v) \mapsto [s(x), v]}"]{rrrr} \arrow{drr} 
   & & & & (P \times_\sigma V)|_U \arrow{dll}
\\
   & &U & &
\end{tikzcd}
\end{equation}
is a (smooth) local trivialization of $P \times_\sigma V$.
\end{lemma}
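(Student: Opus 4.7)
The plan is to derive this lemma as a direct specialization of the associated-bundle trivialization construction already sketched in Proposition~\ref{proposition:A.2}, by noting that a local section of a principal bundle is tantamount to a local trivialization of it.

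First I would unpack how $s$ induces a trivialization of $P$. Define $\varphi_s : U \times G \to P|_U$ by $(x, g) \mapsto s(x) \cdot g$. Freeness of the right $G$-action together with the section property of $s$ makes $\varphi_s$ a $G$-equivariant, fiber-preserving diffeomorphism (homeomorphism in the continuous case), i.e., a local trivialization of the principal bundle $P \xrightarrow{\pi} M$ over $U$. Its inverse has the form $\varphi_s^{-1}(p) = (\pi(p), \tau_s(p))$, where $\tau_s : \pi^{-1}(U) \to G$ is the (smooth) map uniquely determined by $p = s(\pi(p)) \cdot \tau_s(p)$.

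Second, I would apply Proposition~\ref{proposition:A.2} to this particular trivialization $\varphi_s$ to obtain the induced local trivialization of $P \times_\sigma V$, namely $(x, v) \mapsto [\varphi_s(x, e), v] = [s(x), v]$, which is exactly the map $\Psi$ in the statement of the lemma. This already finishes the proof, but for transparency I would also exhibit the inverse intrinsically: set $\Psi^{-1}[p, v] = (\pi(p), \sigma(\tau_s(p)) v)$. This is well-defined on the quotient since for every $h \in G$ one has $\tau_s(p \cdot h) = \tau_s(p) h$, and consequently
\begin{equation*}
\Psi^{-1}[p \cdot h, \sigma(h)^{-1} v] = (\pi(p), \sigma(\tau_s(p) h) \sigma(h)^{-1} v ) = (\pi(p), \sigma(\tau_s(p)) v ) = \Psi^{-1}[p, v].
\end{equation*}
A one-line verification shows that $\Psi$ and $\Psi^{-1}$ are mutual inverses, and smoothness (resp.\ continuity) follows since the quotient map $P \times V \to P \times_\sigma V$ is a smooth submersion and both lifts are manifestly smooth (resp.\ continuous).

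There is no genuine obstacle here: the lemma is essentially a corollary of the associated-bundle construction, given the standard correspondence between local sections and local trivializations of principal bundles. The only point requiring momentary care is the well-definedness of $\Psi^{-1}$ on the equivalence classes, which reduces to the equivariance identity $\tau_s(p \cdot h) = \tau_s(p) h$ above.
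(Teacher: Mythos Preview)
Your proof is correct and follows exactly the approach the paper intends: the paper's own proof consists of the single word ``Easy,'' which is justified because the sentence preceding the lemma (Eq.~(\ref{eq:A.13})) already records that a local section yields a local trivialization $\varphi_s$ of $P$, and the proof of Proposition~\ref{proposition:A.2} already records that such a trivialization induces the associated-bundle trivialization $(x,v)\mapsto[\varphi_s(x,e),v]=[s(x),v]$. Your write-up simply spells out these two ingredients and adds an explicit formula for the inverse, which is more than the paper bothers to do but entirely in the same spirit.
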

\begin{proof}
Easy. 
\end{proof}

So, in particular, if there is a global section $s : M \rightarrow P$, then the bundles $P$ and $E$ are trivial. Let's specialize this to the case of the principal $H$-bundle $G \rightarrow G/H$.

\begin{theorem}\label{theorem:A.9}
Let $\sigma :H \rightarrow U\big((\mathcal{H}_\sigma , \langle \cdot , \cdot \rangle_\sigma)\big)$ be a unitary representation and $L : G/H \rightarrow G$ be a global section. If the trivial bundle $G/H \times \mathcal{H}_\sigma$ is endowed with the metric
\begin{equation}\label{eq:A.15}
h_{L} ( (xH , v) , (xH, w) ) = \langle v , w \rangle_\sigma
\end{equation}
and the $G$-action
\begin{equation}\label{eq:A.16}
\lambda_{L} (x) ( yH, v) = (xyH, \sigma\left( L(xyH)^{-1} x L(yH) \right) v),
\end{equation}
then it becomes an Hermitian $G$-bundle over $G/H$ and the global trivialization
\begin{equation}\label{eq:A.17}
\begin{tikzcd}[baseline=(current  bounding  box.center), column sep=1.5em]
    G/H \times \mathcal{H}_\sigma \arrow ["{(xH,v) \mapsto [L(xH), v]}"]{rrrr} \arrow{drr} 
    & & & & \hspace{0.2cm} \mathcal{E}_\sigma \arrow{dll}
\\
   & &M & &
\end{tikzcd}
\end{equation}
becomes an Hermitian $G$-bundle isomorphism onto the primitive bundle $\mathcal{E}_\sigma$. The Hermitian $G$-bundle $(G/H \times \mathcal{H}_\sigma, h_L , \lambda_{L,\sigma})$ will be denoted as $E_{L,\sigma}$, signifying its dependence on the choice of section $L$.
\end{theorem}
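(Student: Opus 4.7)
My plan is to verify each required property of the proposed Hermitian $G$-bundle $E_{L,\sigma}$, then check that the trivialization Eq.~(\ref{eq:A.17}) preserves all the structure, by a direct book-keeping argument built on Lemma~\ref{lemma:A.8}.

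First, I will verify that $\lambda_L$ in Eq.~(\ref{eq:A.16}) is well-defined as a $G$-action. The crucial observation is that for every $x \in G$ and $yH \in G/H$, the element
\begin{equation*}
W_L(x, yH) := L(xyH)^{-1} \, x \, L(yH)
\end{equation*}
lies in $H$, because its image in $G/H$ is $L(xyH)^{-1} x y H = L(xyH)^{-1} \cdot xyH = H$, making $\sigma\bigl(W_L(x,yH)\bigr)$ well-defined. The identity axiom $\lambda_L(e) = \mathrm{id}$ follows from $W_L(e, yH) = e$. For the cocycle condition, I would insert a resolution of unity $L(x_2 yH) L(x_2 yH)^{-1}$ between $x_1$ and $x_2$ in the product to obtain
\begin{equation*}
W_L(x_1 x_2, yH) = W_L(x_1, x_2 yH) \cdot W_L(x_2, yH),
\end{equation*}
which, together with $\sigma$ being a homomorphism, gives $\lambda_L(x_1 x_2) = \lambda_L(x_1)\lambda_L(x_2)$. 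Smoothness of $\lambda_L$ follows from smoothness of $L$, multiplication, inversion, and $\sigma$.

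Next I will check that each $\lambda_L(x)$ is an isometric bundle automorphism covering $l_x$: it covers $l_x$ by construction, and since $\sigma$ is unitary on $\mathcal{H}_\sigma$ and the metric $h_L$ is the \emph{constant} inner product $\langle\cdot,\cdot\rangle_\sigma$ in every fiber, the fiberwise map $v \mapsto \sigma(W_L(x,yH)) v$ is automatically an isometry. This establishes that $E_{L,\sigma}$ is indeed an Hermitian $G$-bundle over $G/H$.

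Finally, I will verify that the map in Eq.~(\ref{eq:A.17}), call it $\Psi: E_{L,\sigma} \to \mathcal{E}_\sigma$, $(xH, v) \mapsto [L(xH), v]$, is an Hermitian $G$-bundle isomorphism. Smoothness and the fact that $\Psi$ is a vector bundle isomorphism follow immediately from Lemma~\ref{lemma:A.8} applied to the global section $L$. Isometry is automatic: the metric $g$ on $\mathcal{E}_\sigma$ from Proposition~\ref{proposition:A.3} gives $g([L(xH),v],[L(xH),w]) = \langle v,w\rangle_\sigma = h_L((xH,v),(xH,w))$. For $G$-equivariance, I need $\Lambda_x \circ \Psi = \Psi \circ \lambda_L(x)$, i.e.
\begin{equation*}
[xL(yH), v] \;=\; [L(xyH), \sigma(W_L(x,yH))\,v] \qquad \text{in } \mathcal{E}_\sigma.
\end{equation*}
This is the key computation: setting $h := W_L(x,yH)^{-1} = L(yH)^{-1} x^{-1} L(xyH) \in H$, one has $xL(yH) \cdot h = L(xyH)$, so the defining equivalence relation $[p,v] = [ph, \sigma(h)^{-1}v]$ on the associated bundle yields exactly the required identity.

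I do not anticipate any serious obstacle; the argument is essentially a bookkeeping verification, and the only subtle point is ensuring one uses the correct orientation of the equivalence relation $(p,v) \sim (ph, \sigma(h)^{-1} v)$ defining the associated bundle in the final equivariance step, since it is easy to invert $h$ by mistake.
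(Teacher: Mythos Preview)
Your proposal is correct and is precisely the direct verification the paper has in mind; the paper's own proof is the single word ``Straightforward,'' and you have supplied exactly those straightforward details (well-definedness of $W_L$ in $H$, the cocycle identity, isometry from unitarity of $\sigma$, and the equivariance check via the associated-bundle equivalence relation).
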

\begin{proof}
Straightforward. 
\end{proof}

We call $L$ a \textit{choice of boostings} for reasons that will become clear in Sect.~\ref{sec:6}. Note that it has a close relationship with the choice of gauge in Gauge Theory (cf. \cite{bleecker}). $E_{L, \sigma}$ will be called the \textit{boosting bundle associated with $L$}.

The element
\begin{equation}\label{eq:A.18}
W_L (x, yH) = L(xyH)^{-1} x L(yH) \in H
\end{equation}
will be called the \textit{Wigner transformation} since it is preciesely the \textit{Wigner rotation} in the case of $G =SL(2, \mathbb{C})$ and $H= SU(2)$ when $L$ is chosen appropriately (cf. Sect.~\ref{sec:6}). Using this notation, we see that the action Eq.~(\ref{eq:A.16}) becomes
\begin{equation}\label{eq:A.16'}\tag{A.16${}^\prime$}
\lambda_{L} (x) (yH, v) = (xyH, \sigma \big(W_L (x, yH) \big) v).
\end{equation}

\subsection{Semidirect products}\label{sec:A.4}

We apply the preceding constructions to the case of semidirect products.

Let $G$ be a Lie group and $N,H \leq G$ closed subgroups such that $N$ is normal and abelian and $G = N \ltimes H$, i.e., the map $N \times H \rightarrow G$ given by $(n,h) \mapsto nh$ is a diffeomorphism. Since continuous homomorphisms between Lie groups are automatically smooth (cf. \cite{lee}, Ch.~20), $\hat{N}$ consists of Lie group homomorphisms from $N$ to $\mathbb{T}$.

The following lemma shows that the vector bundles associated with the principal bundle $H \rightarrow H/H_\nu$ for $\nu \in \hat{N}$ not only have a (left) $H$-action provided by Eq.~(\ref{eq:A.8}), but also a (left) $G$-action which extends it.

\begin{lemma}\label{lemma:A.10}
Fix $\nu \in \hat{N}$ and a unitary representation $\sigma :H_\nu \rightarrow U(\mathcal{H}_\sigma)$, which induces a unitary representation $\nu \sigma : G_\nu \rightarrow U(\mathcal{H}_\sigma)$ as in Eq.~(\ref{eq:4.3}). Consider the principal $G_\nu$-bundle $G \rightarrow G/G_{\nu}$ and the principal $H_\nu$-bundle $H \rightarrow H/H_\nu$. Then, there is an $H$-equivariant isometric bundle isomorphism
\begin{equation}\label{eq:A.19}
    \begin{tikzcd}[baseline=(current  bounding  box.center)]
\mathcal{E}_\sigma \arrow[rr, "{\jmath :[h,v] \mapsto [h,v]}"] \arrow{d}
& &  \mathcal{E}_{\nu \sigma} \arrow{d} \\
H/H_\nu \arrow{rr}{\imath: hH_\nu \mapsto hG_\nu}[swap]{\cong} & &  G/ G_\nu
    \end{tikzcd}
\end{equation}
whose inverse is given by $[nh,v] \mapsto [h , \nu ( h^{-1} n h ) v]$. By pulling-back the $G$-action on $\mathcal{E}_{\nu \sigma}$ via this map, the $H$-action on the bundle $\mathcal{E}_\sigma$ is extended to a $G$-action which is given by
\begin{equation}\label{eq:A.20}
\Lambda (nh)[k,v]= \left[ hk, \nu \left( (hk)^{-1} n hk \right)v \right],
\end{equation}
with respect to which $\mathcal{E}_\sigma$ becomes an Hermitian $G$-bundle over $H/H_\nu$.
\end{lemma}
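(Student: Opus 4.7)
The plan is to prove the lemma in five steps: establish the base map $\imath$ is a diffeomorphism, verify that $\jmath$ and its claimed inverse are well-defined, check that they really are mutually inverse, observe that isometry and $H$-equivariance follow essentially tautologically, and finally compute the pulled-back $G$-action.

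First, using $G=N\ltimes H$ and the fact that $G_\nu = N\ltimes H_\nu$ (noted just before Theorem~\ref{theorem:4.2}), any $g=nh\in G$ satisfies $nhG_\nu = h(h^{-1}nh)G_\nu = hG_\nu$ since $h^{-1}nh\in N\subseteq G_\nu$; this gives surjectivity of $\imath$, and injectivity follows from $G_\nu\cap H = H_\nu$. Smoothness and its inverse are immediate from the product structure on $G$. Next, for well-definedness of $\jmath$, if $[h,v]=[hk,\sigma(k)^{-1}v]$ in $\mathcal{E}_\sigma$ with $k\in H_\nu$, then in $\mathcal{E}_{\nu\sigma}$ one has $[hk,\sigma(k)^{-1}v]=[hk,\nu\sigma(k)^{-1}v]=[h,v]$ because $\nu\sigma(k)=\nu(e)\sigma(k)=\sigma(k)$ for $k\in H_\nu$.

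The main technical point is well-definedness of the proposed inverse $\jmath^{-1}:[nh,v]\mapsto[h,\nu(h^{-1}nh)v]$. Given $k=n'h'\in G_\nu$ with $n'\in N,\ h'\in H_\nu$, rewrite $nhk=n''\cdot hh'$ where $n'' = n(hn'h^{-1})\in N$ (using normality of $N$). One then computes
\begin{equation*}
(hh')^{-1}n''(hh') = (h')^{-1}\bigl[(h^{-1}nh)n'\bigr]h',
\end{equation*}
and since $h'\in H_\nu$ fixes $\nu$, i.e.\ $\nu((h')^{-1}mh')=\nu(m)$ for all $m\in N$, the character property of $\nu$ gives
\begin{equation*}
\nu\bigl((hh')^{-1}n''(hh')\bigr)=\nu(h^{-1}nh)\,\nu(n').
\end{equation*}
Applied to $\nu\sigma(k)^{-1}v = \nu(n')^{-1}\sigma(h')^{-1}v$ and then folded back through $[hh',\sigma(h')^{-1}(\cdot)]=[h,\cdot]$ in $\mathcal{E}_\sigma$, this collapses to $[h,\nu(h^{-1}nh)v]$, proving independence of the representative. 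Checking $\jmath\circ\jmath^{-1}=\mathrm{id}$ on $[nh,v]$ uses the same collapse $[nh,v]=[h\cdot(h^{-1}nh),v]=[h,\nu(h^{-1}nh)v]$ in $\mathcal{E}_{\nu\sigma}$; the other composition is trivial.

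Isometry is immediate because the metrics on both $\mathcal{E}_\sigma$ and $\mathcal{E}_{\nu\sigma}$ are transported from the fixed inner product $\langle\cdot,\cdot\rangle_\sigma$ via the trivializations $[p,\cdot\,]$ (Proposition~\ref{proposition:A.3}), and $\jmath$ preserves the representative vector $v$. The $H$-equivariance follows because both actions are defined by left multiplication on the first slot: $\jmath(\Lambda_h[k,v])=\jmath[hk,v]=[hk,v]=\Lambda_h\jmath[k,v]$. Finally, the pulled-back $G$-action is computed directly: for $g=nh\in G$,
\begin{equation*}
\Lambda(nh)[k,v] \;=\; \jmath^{-1}\bigl([nhk,v]\bigr) \;=\; \bigl[hk,\ \nu\bigl((hk)^{-1}n(hk)\bigr)v\bigr],
\end{equation*}
which is precisely Eq.~(\ref{eq:A.20}); that this gives a genuine $G$-action, and that the Hermitian structure is preserved, is inherited from the corresponding facts about $\mathcal{E}_{\nu\sigma}$ via the isomorphism $\jmath$. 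The only subtle step is the third paragraph above; everything else amounts to careful bookkeeping inside the product decomposition $G=N\ltimes H$.
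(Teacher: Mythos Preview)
Your proof is correct and follows essentially the same approach as the paper's own proof, which is considerably more terse: the paper dispatches the well-definedness of $\jmath$, the inverse formula, isometry, $H$-equivariance, and the pulled-back $G$-action all with phrases like ``easily seen'' and ``easily checked,'' whereas you have carefully written out the verifications (particularly the representative-independence of $\jmath^{-1}$ and the explicit computation yielding Eq.~(\ref{eq:A.20})). One small remark: for the smoothness of $\imath^{-1}$ the paper invokes the fact that $\imath$ is an $H$-equivariant bijection between transitive $H$-spaces (hence of constant rank, hence a diffeomorphism), which is slightly cleaner than appealing vaguely to the product structure.
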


\begin{proof}
$\imath$ is well-defined and injective since $h G_\nu = h' G_\nu \Leftrightarrow h^{-1 }h' \in G_\nu \cap H = H_\nu$. Also, given $nh \in G$, we have $h^{-1} nh \in N \leq G_\nu$ and hence $\imath(h H_\nu)= h G_\nu = nh G_\nu$, which implies that $\imath$ is surjective as well. Since it is an $H$-equivariant map from a transitive $H$-space, it is a diffeomorphism (cf. \cite{lee}). So, $\imath:H/H_\nu \cong G/G_\nu$.

$\jmath$ is easily seen to be a well-defined vector bundle homomorphism covering $\imath$, which is an isomorphism at each fiber and hence a vector bundle isomorphism. This also preserves the metrics at each fiber essentially by definition and is trivially $H$-equivariant. The remaining statements are now easily checked. 
\end{proof}

Upon identifying $H/H_\nu \cong G/G_\nu$, we can apply Proposition~\ref{proposition:5.4} and Theorem~\ref{theorem:A.6} to represent $\textup{Ind}_{G_\nu} ^G \nu \sigma$ on the Hermitian $G$-bundle $\mathcal{E}_\sigma$.

\begin{lemma}\label{lemma:A.11}
In addition to the same setting of Lemma~\ref{lemma:A.10}, suppose $H/H_\nu \cong G/ G_\nu$ has an $H$-invariant (and hence $G$-invariant) measure $\mu$. Then, the induced representation $\textup{Ind}_{G_\nu} ^G \nu \sigma$ is equivalent to the induced representation associated with the primitive bundle $\mathcal{E}_\sigma$ (cf. Definition~\ref{definition:5.3}), which is denoted as $U : G \rightarrow U(L^2 (H/ H_\nu , \mathcal{E}_\sigma ; \mu , g))$ and defined as, for $nh \in G$ and $\psi \in L^2(H/H_\nu , \mathcal{E}_\sigma ; \mu , g)$,
\begin{equation}\label{eq:A.21}
U (nh) \psi = \Lambda(nh) \circ \psi \circ (l_{nh})^{-1} = \Lambda(nh) \circ \psi \circ (l_h)^{-1}.
\end{equation}
\end{lemma}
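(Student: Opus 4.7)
The plan is to assemble the lemma from results already available earlier in the excerpt, treating it essentially as a compilation of Theorem~\ref{theorem:A.6}, Lemma~\ref{theorem:A.7}... wait, I mean Lemma~\ref{lemma:A.10}, and Proposition~\ref{proposition:5.4}, rather than building anything from scratch. First I would apply Theorem~\ref{theorem:A.6} to the closed subgroup $G_\nu \leq G$ and the unitary representation $\nu\sigma: G_\nu \to U(\mathcal{H}_\sigma)$. Since by hypothesis $G/G_\nu$ carries a $G$-invariant measure (obtained by transporting $\mu$ via the diffeomorphism $\imath: H/H_\nu \to G/G_\nu$ of Lemma~\ref{lemma:A.10}), this immediately yields a unitary equivalence between $\textup{Ind}_{G_\nu}^G \nu\sigma$ and the induced representation $U^{\nu\sigma}$ associated with the primitive Hermitian $G$-bundle $\mathcal{E}_{\nu\sigma}$ over $G/G_\nu$ in the sense of Definition~\ref{definition:5.3}.

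Next, I would invoke Lemma~\ref{lemma:A.10}, which supplies the Hermitian $G$-bundle isomorphism $\jmath: \mathcal{E}_\sigma \to \mathcal{E}_{\nu\sigma}$ covering $\imath$, where the $G$-action on $\mathcal{E}_\sigma$ is the pulled-back one given by Eq.~(\ref{eq:A.20}). Since $\imath$ is a $G$-equivariant diffeomorphism and $\mu$ is transported to a $G$-invariant measure on $G/G_\nu$ by construction, the hypotheses of Proposition~\ref{proposition:5.4} are met. Applying it to $\jmath$ produces a unitary equivalence between the induced representation $U$ associated with $\mathcal{E}_\sigma$ and the induced representation $U^{\nu\sigma}$ associated with $\mathcal{E}_{\nu\sigma}$. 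Chaining the two equivalences gives $\textup{Ind}_{G_\nu}^G \nu\sigma \cong U$.

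Finally, I would unwind Definition~\ref{definition:5.3} to verify the explicit formula in Eq.~(\ref{eq:A.21}). By that definition, for $s \in G$ and $\psi \in L^2(H/H_\nu, \mathcal{E}_\sigma; \mu, g)$, one has $U(s)\psi = \Lambda(s) \circ \psi \circ (l_s)^{-1}$, which is the first equality in Eq.~(\ref{eq:A.21}) with $s=nh$. For the second equality, I would observe that under the identification $\imath: H/H_\nu \cong G/G_\nu$, the left multiplication action of $N$ on $G/G_\nu$ is trivial, since for $n \in N$ and $h \in H$ one has $n \cdot hG_\nu = nhG_\nu = h(h^{-1}nh)G_\nu = hG_\nu$ because $h^{-1}nh \in N \leq G_\nu$. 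Transporting this back gives $l_n = \mathrm{id}$ on $H/H_\nu$, hence $l_{nh} = l_n \circ l_h = l_h$ on the base.

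There is no genuine obstacle here; the content of the lemma is entirely contained in Theorem~\ref{theorem:A.6}, Lemma~\ref{lemma:A.10}, and Proposition~\ref{proposition:5.4}, together with the bookkeeping observation that $N$ acts trivially on the base. The only point that requires mild care is verifying that the $G$-invariant measure obtained by pushing $\mu$ through $\imath$ agrees with what Theorem~\ref{theorem:A.6} needs, but this is automatic since $\imath$ is $G$-equivariant.
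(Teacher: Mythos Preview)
Your proposal is correct and follows essentially the same route as the paper's proof, which simply cites Proposition~\ref{proposition:5.4}, Theorem~\ref{theorem:A.6}, and the fact that $l_{nh}=l_h$ on $H/H_\nu$. You have merely made explicit the role of Lemma~\ref{lemma:A.10} (which the paper takes as part of the ambient setting) and spelled out why $N$ acts trivially on the base, which is exactly the content behind the paper's terse one-line justification.
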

\begin{proof}
This follows from Proposition~\ref{proposition:5.4}, Theorem~\ref{theorem:A.6} and the fact that $l_h = l_{nh}$ on $H/H_\nu$. 
\end{proof}

Now, we are prepared to prove Theorem~\ref{theorem:5.5}.
\begin{proof}[Proof of Theorem~\ref{theorem:5.5}]
All the conclusions are straightforward from Lemmas~\ref{lemma:A.10}--\ref{lemma:A.11} and Theorems~\ref{theorem:A.7}--\ref{theorem:A.9}. Note that since $h_L$ is the trivial metric, we have $\mathcal{H}_{U_L}:= L^2 (H/H_\nu , E_{L, \sigma} ; \mu , h_L) \cong L^2 (H/H_\nu;\mu) \otimes \mathcal{H}_\sigma$. 
\end{proof}

\section*{Acknowledgments}

H. Lee was supported by the Basic Science Research Program through the National Research Foundation of Korea (NRF) Grant NRF-2022R1A2C1092320.

\section*{Data availability statement}

Data sharing is not applicable to this article as no new data were created or analyzed in this study.

\section*{References}
\bibliographystyle{plain}
\bibliography{bibliography}
\end{document}